\newcommand{\dis}{\stackrel{d}{\sim}}
\newcommand{\eqla}{\stackrel{(a)}{=}}
\newcommand{\eqlb}{\stackrel{(b)}{=}}
\newcommand{\eqlc}{\stackrel{(c)}{=}}
\newcommand{\eqld}{\stackrel{(d)}{=}}
\newcommand{\eqle}{\stackrel{(e)}{=}}
\newtheorem{Thm}{Theorem}
\newtheorem{Lem}{Lemma}
\newtheorem{Prob}{Problem}
\begin{document}

\title{Analysis and Optimization of Caching and Multicasting  in Large-Scale  Cache-Enabled Heterogeneous  Wireless Networks}
%\author{\authorblockN{Ying Cui}\authorblockA{Shanghai Jiao Tong University}
%\and \authorblockN{Dongdong Jiang} \authorblockA{Shanghai Jiao Tong University}
%}
\author{Ying Cui\thanks{Y. Cui and D. Jiang are with the Department of  Electronic Engineering, Shanghai Jiao Tong University, China.},  {\em MIEEE}, \and Dongdong Jiang, {\em StMIEEE}}

\maketitle

%The proposed scheme for the cached-enabled HetNet (with backhual links) improves the performance of the scheme for the cached-enabled single-tier network (without backhaul links) in [arxiv xx].

\begin{abstract}
Heterogeneous wireless networks (HetNets) provide a powerful approach to meet the dramatic mobile traffic  growth, but also impose a significant challenge on backhaul. Caching and multicasting at macro and pico base stations (BSs) are two promising methods to support massive content delivery  and reduce backhaul load in HetNets.   In this paper, we jointly consider caching and multicasting  in a large-scale  cache-enabled HetNet with backhaul constraints. We propose a  hybrid caching design consisting of identical caching in the macro-tier and random caching in the pico-tier, and a corresponding multicasting design. By carefully handling different types of interferers and adopting appropriate approximations, we  derive  tractable expressions for  the successful transmission probability in the general region as well as  the high signal-to-noise ratio (SNR) and user density  region, utilizing tools from stochastic geometry. Then, we consider the successful transmission probability maximization by optimizing the design parameters, which is a very challenging mixed discrete-continuous optimization problem. By using optimization techniques and exploring  the structural properties,  we  obtain a near optimal solution with superior performance and manageable  complexity.  This  solution achieves better performance in the general region than any asymptotically optimal solution, under a mild condition. The analysis and optimization results provide valuable design insights for practical cache-enabled HetNets.
\end{abstract}

\begin{keywords}
Cache, multicast, backhaul, stochastic geometry, optimization, heterogenous wireless network
\end{keywords}

\newpage
\section{Introduction}

The rapid proliferation of smart mobile devices has triggered
an unprecedented growth of the global mobile data traffic.
HetNets have been proposed as an effective way
to meet the dramatic traffic growth by deploying short range small-BSs together with traditional
macro-BSs, to provide better time or frequency reuse\cite{HoadleyWC12}. However, this approach imposes a significant challenge of providing expensive high-speed backhaul links for connecting all the small-BSs to the core network\cite{AndrewsComMag13}.

Caching at small-BSs is a promising  approach to alleviate the backhaul capacity
requirement in HetNets\cite{CMag14Chen,Sarkissian4GLTE11,cachingmimoLiu15}. % and reduce the distance between popular contents and their requesters.
Many existing
works have focused on optimal cache placement at small-BSs, which is of critical importance in cache-enabled HetNets.
For example, in \cite{Shanmugam13}  and \cite{LiTWC15}, the authors  consider the optimal content placement at small-BSs to minimize the expected downloading time for files  in a single macro-cell  with multiple small-cells.  %where the coverage areas of small-cells are overlapping.
File requests which cannot be satisfied locally at a small-BS are served by the macro-BS.
%In \cite{PengGC15}, the authors consider the optimal content placement at BSs to minimize the expected downloading time for files in a (single-tier) multi-cell network.
The optimization problems  in \cite{Shanmugam13} and \cite{LiTWC15} are NP-hard,  and low-complexity solutions are  proposed.  In  \cite{BioglioGC15}, the authors propose a caching design based on file splitting and MDS encoding in a single macro-cell  with multiple small-cells. File requests which cannot be satisfied locally at a small-BS are served by the macro-BS, and backhaul rate analysis and optimization are considered. Note that the focuses of \cite{Shanmugam13,LiTWC15,BioglioGC15} are on performance optimization of caching design.

In \cite{EURASIP15Debbah,LiuYangICC16,Yang16}, the authors consider caching the most popular files at each small-BS in large-scale cache-enabled small-cell networks or HetNets,  with backhaul constraints. The service rates of uncached files  are   limited  by the backhaul capacity. In \cite{QuekTWC16}, the authors propose  a partion-based combined caching design in a large-scale cluster-centric small-cell network, without considering backhaul constraints. In \cite{TamoorComLett16}, the authors consider two caching designs, i.e., caching the most popular files and random caching of a uniform distribution, at small-BSs in a  large-scale cache-enabled HetNet, without backhaul constraints. File requests which cannot be satisfied  at a small-BS are served by macro-BSs.
In \cite{DBLP:journals/corr/Tamoor-ul-Hassan15}, the authors consider random caching of a uniform distribution in a large-scale cache-enabled small-cell network, without backhaul constraints, assuming that content  requests follow  a uniform distribution. Note that the focuses of \cite{EURASIP15Debbah,LiuYangICC16,Yang16,QuekTWC16,TamoorComLett16,DBLP:journals/corr/Tamoor-ul-Hassan15} are on performance analysis of  caching designs.

On the other hand, enabling multicast service at BSs in HetNets is  an efficient way to deliver popular contents to multiple
requesters simultaneously, by effectively utilizing the broadcast nature of the wireless medium\cite{eMBMS}.
In \cite{WCNC14Tassiulas} and \cite{ZhouTWC15}, the authors consider a single macro-cell  with multiple small-cells with backhaul costs. Specifically, in \cite{WCNC14Tassiulas}, the optimization of caching and multicasting, which is NP-hard,  is considered, and a simplified solution with approximation guarantee  is proposed.  In \cite{ZhouTWC15},   the optimization of  dynamic multicast scheduling for a given content placement, which is a dynamic programming problem, is considered,  and a low-complexity optimal numerical solution is obtained.%, by exploiting structural properties.
%Note that the main focuses of  \cite{WCNC14Tassiulas} and \cite{ZhouTWC15} are on performance optimization.

The network models considered in \cite{Shanmugam13,LiTWC15,BioglioGC15,WCNC14Tassiulas,ZhouTWC15}   do not capture  the stochastic natures of  channel fading  and geographic locations of  BSs and users.
 The network models considered in \cite{EURASIP15Debbah,LiuYangICC16,Yang16,QuekTWC16,TamoorComLett16,DBLP:journals/corr/Tamoor-ul-Hassan15} are more realistic and can reflect  the stochastic natures of signal and interference.  However, the simple identical caching design considered in \cite{EURASIP15Debbah,LiuYangICC16,Yang16,TamoorComLett16} does not provide spatial file diversity; the combined caching design in \cite{QuekTWC16}  does not reflect the popularity differences of files in each of the three categories; and the random caching design of a uniform distribution in \cite{TamoorComLett16,DBLP:journals/corr/Tamoor-ul-Hassan15} cannot make use of popularity information.  Hence, the caching designs in \cite{EURASIP15Debbah,LiuYangICC16,Yang16,QuekTWC16,TamoorComLett16,DBLP:journals/corr/Tamoor-ul-Hassan15} may not lead to good network performance.
 On the other hand, \cite{ICC15Giovanidis,DBLP:journals/corr/BharathN15,Altman13,arXivSGCaching15} consider analysis and optimization of caching in large-scale cache-enabled single-tier networks. Specifically,  \cite{ICC15Giovanidis}  considers  random caching at BSs, and analyze and optimize  the hit probability. Reference \cite{DBLP:journals/corr/BharathN15} considers random caching with contents being stored at each BS in an i.i.d. manner,  and analyzes the minimum offloading loss.
 In \cite{Altman13}, the authors  study   the expected costs of obtaining a complete content under random uncoded caching and coded caching strategies, which are designed only for different pieces of a single content.
In \cite{arXivSGCaching15}, the authors consider analysis and optimization of joint caching and multicasting. However, the proposed caching and multicasting designs in \cite{ICC15Giovanidis,DBLP:journals/corr/BharathN15,Altman13,arXivSGCaching15} may not be applicable to HetNets with backhaul constraints.
In summary,
to facilitate  designs of practical cache-enabled HetNets for massive content dissemination, further studies are required  to understand the following key questions.

$\bullet$
How do  physical layer and content-related parameters fundamentally affect  performance of cache-enabled HetNets?

 $\bullet$  How can  caching and multicasting jointly and optimally assist  massive content dissemination in cache-enabled HetNets?

 In this paper, we consider  the analysis and optimization of joint caching and multicasting to improve the efficiency of  massive content dissemination  in a large-scale  cache-enabled HetNet with backhaul constraints.
Our main contributions are summarized below.

$\bullet$ First, we propose a  hybrid caching design with certain design parameters, consisting of identical caching in the macro-tier and random caching in the pico-tier, which can provide spatial file diversity. We propose a corresponding  multicasting  design for efficient content dissemination by  exploiting  broadcast nature of the wireless medium.

$\bullet$ Then, by carefully handling different types of interferers and adopting appropriate approximations, we  derive  tractable expressions for  the successful transmission probability in the general region and the asymptotic region, utilizing tools from stochastic geometry. These expressions reveal the impacts of physical layer and content-related parameters on the successful transmission probability.

$\bullet$ Next, we consider the successful transmission probability maximization by optimizing the design parameters, which is a very challenging mixed discrete-continuous optimization problem. We propose a two-step optimization framework to obtain a near optimal solution with superior performance and manageable   complexity. Specifically, we first characterize the structural properties of the asymptotically optimal solutions. Then, based on these properties, we obtain the near optimal solution, which achieves better performance in the general region than any asymptotically optimal solution, under a mild condition.

$\bullet$ Finally, by numerical simulations, we  show that the near optimal solution achieves a significant gain in successful transmission probability over some baseline schemes.

\section{Network Model}\label{sec:netmodel}

We consider a two-tier HetNet where a macro-cell tier is overlaid with a pico-cell tier,  as shown in Fig.~\ref{fig:system}. The locations of the macro-BSs and the pico-BSs are spatially distributed as two independent homogeneous Poisson point processes (PPPs) $\Phi_{1}$ and $\Phi_{2}$ with densities $\lambda_{1}$ and $\lambda_{2}$, respectively, where $\lambda_1<\lambda_2$. The locations of the users are also distributed as an independent homogeneous PPP $\Phi_{u}$ with density $\lambda_{u}$. We refer to the macro-cell tier and the pico-cell tier as the $1$st tier and  the $2$nd tier, respectively. Consider the downlink scenario. Each BS in the $j$th tier  has one transmit antenna with   transmission power $P_j$ ($j=1,2$), where $P_1>P_2$.  Each user has one receive antenna. All BSs are operating on the same frequency band of total bandwidth  $W$ (Hz). Consider a discrete-time system with time being slotted and study one slot of the network. We consider both large-scale fading and small-scale fading. Due to large-scale fading, a transmitted signal from the  $j$th tier with distance $D$ is attenuated by a factor $\frac{1}{D^{\alpha_{j}}}$, where $\alpha_{j}>2$ is the path loss exponent of the $j$th tier. For small-scale fading, we assume Rayleigh fading channels\cite{Andrews11,WCOM13Andrews}.
%\footnote{Rayleigh fading is a widely used probabilistic channel model in existing literature\cite{Andrews11,WCOM13Andrews}.
%%It is based on the assumption that there are a large number of statistically independent reflected and scattered paths with random amplitudes and random phases uniformly distributed in $[0,2\pi]$, and {\em Central Limit Theorem}\cite{Tse:2005:FWC}.
%}%, i.e., each small-scale channel follows $\mathcal {CN}(0, 1)$.

Let $\mathcal N\triangleq \{1,2,\cdots, N\}$ denote the set of $N$ files (e.g., data objects or chucks of data objects) in the HetNet.
For ease of illustration, assume that all  files  have the same size.\footnote{Files  of different sizes can be divided into chunks  of the same  length. Thus, the results in this paper can be  extended to the case of different file sizes.}
Each file is of certain popularity, which is assumed to be identical among all users.  Each user randomly  requests one file, which is file $n\in \mathcal N$ with probability $a_n\in (0,1)$, where $\sum_{n\in \mathcal N}a_n=1$.  Thus, the file popularity distribution is given by $\mathbf a\triangleq (a_n)_{n\in \mathcal N }$, which is assumed to be known  apriori.  In addition, without loss of generality (w.l.o.g.),  assume  $a_{1}> a_{2}>\ldots> a_{N}$.%, i.e., the popularity rank of file $n$ is $n$.

 The  HetNet  consists of  cache-enabled macro-BSs and pico-BSs. Each BS in the $j$th tier is equipped with a cache of size $K_j^c< N$ to store different files. Assume $K_1^c+K_2^c\leq N$.
 Each macro-BS is connected to the core network via a wireline backhaul link of transmission capacity $K_1^b< N$ (files/slot), i.e., each macro-BS can retrieve at most $K_1^b$ different files  from the core network in each slot.\footnote{Note that  storing or retrieving more than one copies of the same file at one BS is redundant and  will waste storage or backhaul resources.}
  %There are no backhaul links connecting pico-BSs to the core network.
Note that $K_1^c$, $K_2^c$ and $K_1^b$ reflect the storage and backhaul  resources in the cache-enabled HetNet.

\begin{figure}[t]
\begin{center}
 \includegraphics[width=8cm]{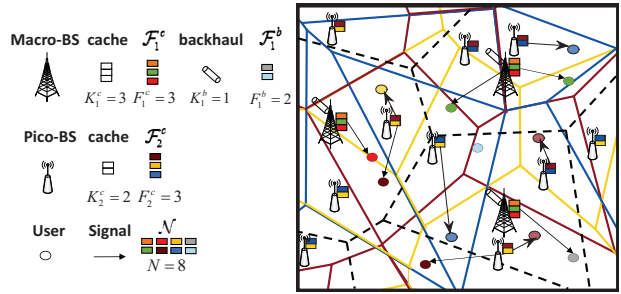}
  \end{center}
     \caption{\small{Network model.
%     There are eight files in the HetNet, represented by eight different colors, respectively. The color of each user indicates the file requested by the user.The colors of the cache at each BS indicate the  files stored at the BS.
The 1st tier corresponds to a Voronoi tessellation (cf. black dashed line segments), determined  by the locations of all the macro-BSs.   Each file $n\in \mathcal F_2^c$ corresponds to a Voronoi tessellation (cf. solid line segments in the same color as the file),  determined  by the locations of all the pico-BSs storing this file.}
 }
\label{fig:system}
\end{figure}

\section{Joint Caching and Multicasting}\label{sec:description}

We are interested in the case where the storage and backhaul resources are limited, and may not be able to satisfy all file requests. In this section, we propose a joint caching and multicasting design with certain design parameters,  which can provide high spatial file diversity and ensure efficient content dissemination.
%In later sections, we shall analyze the network performance under this joint design for any  given design parameters, and optimize the design parameters to maximize the network performance.

\subsection{Hybrid Caching}

To provide high spatial file diversity, we  propose a {\em hybrid caching design} consisting of  identical caching in the 1st tier and random caching in the 2nd tier, as illustrated  in Fig.~\ref{fig:system}.
%\footnote{Later, we can see that our caching design can provide high spatial file diversity \cite{arXivSGCaching15} (in the 2nd tier) and ensure each user to  obtain its desired file from a nearby BS with high probability.}
Let $\mathcal F_j^c\subseteq \mathcal N$ denote
 the set of $F_j^c\triangleq|\mathcal F_j^c|$ files cached in the $j$th tier. Specifically,
our hybrid caching design satisfies the following requirements: (i) {\em non-overlapping  caching across tiers}: each file is stored in at most one tier; (ii) {\em identical caching in the 1st tier}: each macro-BS stores the same set $\mathcal F_1^c$ of $K_1^c$ (different) files; and (iii) {\em random caching in  the 2nd tier}:  each pico-BS randomly stores $K_2^c$ different files out of all files in $\mathcal F_2^c$, forming a subset of $\mathcal F_2^c$.
Thus, we have the following constraint:
\begin{align}
 \mathcal F_1^c, \mathcal F_2^c\subseteq \mathcal N,\ \mathcal F_1^c\cap\mathcal F_2^c=\emptyset, \ F_1^c= K_1^c, \ F_2^c\geq K_2^c.\label{eqn:cache-constr}
 \end{align}

To further illustrate the random caching in the 2nd tier, we first introduce some notations. We say every $K_2^c$ different files in $\mathcal F_2^c $ form a combination. Thus, there are $I\triangleq \binom{F_2^c}{K_2^c}$ different combinations   in total. Let $\mathcal I\triangleq \{1,2,\cdots, I\}$ denote the set of $I$ combinations. Combination $i\in \mathcal I$ can be characterized  by an $F_2^c$-dimensional vector   $\mathbf x_i\triangleq (x_{i,n})_{n\in \mathcal F_2^c}$, where $x_{i,n}=1$ indicates that file $n\in \mathcal F_2^c$ is included in combination $i$ and $x_{i,n}=0$ otherwise. Note that there are  $K_2^c$ 1's  in  each $\mathbf x_i$. %$\sum_{n\in \mathcal N}x_{i,n}=K$ for all $i\in \mathcal I$.
 Denote $\mathcal N_i\triangleq \{n\in \mathcal F_2^c:x_{i,n}=1\}\subseteq \mathcal F_2^c$ as the set of $K_2^c$   files contained in combination $i$.
Each pico-BS stores one combination  at random, which is combination $i\in \mathcal I$ with probability $p_i$ satisfying:\footnote{In this paper, to understand the natures of joint caching and multicasting in cache-enabled HetNets, we shall first pose the analysis and optimization on the basis of all the file combinations in $\mathcal I$ (for the 2nd tier).  Then, based on the insights obtained, we shall focus on reducing  complexity while maintaining superior performance.}
\begin{align}
&0\leq p_i\leq1,\ i\in \mathcal I,\label{eqn:cache-constr-indiv}\\
&\sum_{i\in \mathcal I}p_i=1.\label{eqn:cache-constr-sum}
\end{align}
Denote $\mathbf p\triangleq (p_i)_{ i\in \mathcal I}$.
To facilitate the analysis in later sections, based on $\mathbf p$, we also define the probability that file $n\in\mathcal F_2^c$ is stored at a pico-BS, i.e.,
\begin{align}
T_n\triangleq \sum_{i\in\mathcal{I}_{n}}p_{i}, \  n\in\mathcal F_2^c,\label{eqn:def-T-n}
\end{align}
where $\mathcal I_n\triangleq \{i\in \mathcal I:x_{i,n}=1\}$ denotes the set of $I_n\triangleq \binom{F_2^c-1}{K_2^c-1} $ combinations  containing file $n\in\mathcal F_2^c$.
Denote $\mathbf T\triangleq (T_n)_{n\in\mathcal F_2^c}$. %Note that $T_n=p_n$ when $K=1$.
Note that $\mathbf p$ and $\mathbf T$ depend on $\mathcal F_2^c$. Thus, in this paper, we use $\mathbf p(\mathcal F_2^c)$ and $\mathbf T(\mathcal F_2^c)$  when  emphasizing this relation.
Therefore, the hybrid caching design in the cache-enabled HetNet is specified by the design parameters $\left(\mathcal F_1^c,\mathcal F_2^c,\mathbf p\right)$.

To efficiently utilize backhaul links and  ensure high spatial file diversity, we only retrieve files not stored in the cache-enabled HetNet via  backhaul links. Let $\mathcal F_1^b\subseteq \mathcal N$ denote the set of  $F_1^b\triangleq |\mathcal F_1^b|$ files which can be  retrieved by each macro-BS from the core network. Thus, we have the following constraint:
\begin{align}
\mathcal F_1^b=\mathcal N\setminus (\mathcal F_1^c\cup \mathcal F_2^c).\label{eqn:backhaul}
\end{align}
Therefore, the  file distribution in the cache-enabled HetNet is fully specified by the hybrid caching design $\left(\mathcal F_1^c,\mathcal F_2^c,\mathbf p\right)$.
%Each macro-BS is connected to the core network via a (wireline) backhaul link with transmission capacity $K_1^b$ (files/slot), i.e., each macro-BS can retrieve at most $K_1^b$ different uncached files (in  $\mathcal F_1^b$)   from the core network in each slot.

\subsection{Multicasting}

In this part, we propose a multicasting design associated with the hybrid caching design $\left(\mathcal F_1^c,\mathcal F_2^c,\mathbf p\right)$.  First, we introduce the user association under the proposed hybrid caching design.   In the cache-enabled HetNet, a user  accesses to a tier based on its desired file. Specifically, each user requesting  file $n\in \mathcal F_1^c\cup \mathcal F_1^b$ is associated with the nearest macro-BS and is referred to as a macro-user. While, each user requesting  file $n\in \mathcal F_2^c$ is associated with the nearest pico-BS storing a combination $i\in \mathcal I_n$ (containing file $n$) and is referred to as a pico-user. The associated BS of each user is called its serving BS, and offers the maximum long-term average receive power for its  desired file\cite{WCOM13Andrews}.
Note that under the proposed hybrid caching design $\left(\mathcal F_1^c,\mathcal F_2^c,\mathbf p\right)$, the serving BS of a macro-user is its nearest macro-BS, while the serving BS of a pico-user (affected by $\mathbf p$) may not be
 its  geographically nearest BS. We refer to this association mechanism as the {\em content-centric association} in the cached-enabled HetNet, which is  different from the traditional {\em connection-based association} \cite{WCOM13Andrews} in HetNets.
%this association jointly considers the physical layer  and content-centric properties.

Now, we introduce  file scheduling in the cache-enabled HetNet. Each   BS will serve all the cached files requested by its associated users.
Each macro-BS only serves   at most $K_1^b$  uncached  files requested by its associated users, due to the backhaul constraint for retrieving uncached files. In particular, if the users of a macro-BS request  smaller than or equal to $K_1^b$ different uncached files, the macro-BS serves all of them; if the users of a macro-BS request greater than $K_1^b$ different uncached files, the macro-BS will randomly select  $K_1^b$ different requested uncached files to serve, out of all the requested uncached  files   according to the uniform distribution.

We consider multicasting\footnote{Note that in this paper, the multicast service happens once every slot, and hence no additional delay is introduced.} in the cache-enabled  HetNet for efficient content dissemination. Suppose a BS schedules to serve requests for $k$ different files. Then, it transmits each of the $k$  files  at rate
%\footnote{Note that in this paper, for the tractability of the analysis, we assume that all the files are delivered at the same rate $\tau$ to obtain first-order design insights. Under this assumption, there is no intricate tradeoff between unicast and multicast. We would like to consider content transmissions at multiple bit rates and multicast grouping  in future work.}
$\tau $ (bit/second) and over $\frac{1}{k}$  of total bandwidth W using FDMA.  All the users which request one of the $k$ files from this BS try to decode the file from the single multicast transmission of the file at the BS.
%Consider a macro-BS which has $k_1^c\in\{0,\cdots, K_1^c\}$ different file requests for the $K_1^c$ stored files in $\mathcal F_1^c$ and $k_1^b\in\{0,\cdots, F_1^b\}$ different uncached file requests in $\mathcal F_1^b$, where $k_1^c+k_1^b>0$. We consider two cases. If $k_1^b\leq K_1^b$,  the macro-BS  transmits  each of the $k_1^c+k_1^b$ files  at rate $\tau $ (bit/second) and over $\frac{1}{k_1^c+k_1^b}$  of the total bandwidth $W$ (using FDMA). If $k_1^b> K_1^b$,  the macro-BS first selects $K_1^b$ file requests out of the $k_1^b$ file requests uniformly at random, and then transmits  each of the $k_1^c+K_1^b$ files  at rate $\tau $ and over $\frac{1}{k_1^c+K_1^b}$  of the total bandwidth $W$. Consider a pico-BS which has $k_2^c\in\{1,\cdots, K_2^c\}$ different file requests for the $K_2^c$ stored files in $\mathcal F_2^c$.  The pico-BS transmits each of the $k_2^c$ files at rate $\tau $  and over $\frac{1}{k_2^c}$  of the total bandwidth $W$.
Note that, by avoiding transmitting  the same file multiple times to multiple users, this content-centric transmission (multicast) can  improve the efficiency of the utilization of the wireless medium and reduce the load of the wireless links, compared to the traditional connection-based transmission (unicast).

From the above illustration, we can see that the proposed multicasting design is also affected by the proposed hybrid caching design $\left(\mathcal F_1^c,\mathcal F_2^c,\mathbf p\right)$. Therefore, the design parameters  $\left(\mathcal F_1^c,\mathcal F_2^c,\mathbf p\right)$ affect the performance of the proposed joint caching and multicasting design.

\section{Performance Metric}\label{Sec:perf}
%In this section, we first elaborate on the performance metric. Then, we formulation the optimal caching and multicasting problem to maximize the network performance.

%\subsection{Performance Metric}\label{Subsec:perfm}

In this paper, w.l.o.g., we study the performance of the typical user denoted as $u_0$, which is located at the origin. 
%The index of the typical user and its serving BS is denoted as $0$. 
We assume all BSs are active.
%\begin{align}
%y_{0}=D_{0,0}^{-\frac{\alpha}{2}}h_{0,0}x_{0}+\sum_{\ell\in\Phi_{b}\backslash B_{n,0}}  D_{\ell,0}^{-\frac{\alpha}{2}}h_{\ell,0}x_{\ell}+n_{0},
%\end{align}
Suppose $u_0$ requests file $n$. Let $j_0$ denote the index of the tier  to which $u_0$ belongs, and let $\overline j_0$ denote the other tier.  Let $\ell_0\in\Phi_{j_0}$ denote the index of the serving BS of $u_0$.
%$D_{j_0,\ell_0, 0}$ denotes the distance between $u_{0}$ and  BS $\ell_0\in\Phi_{j_0}$, $h_{j_0,\ell_0,0}\dis \mathcal{CN}\left(0,1\right)$ denotes the small-scale channel between  BS $\ell_0$ and $u_{0}$, %$x_{0}$ is the transmit signal from $\ell_0$  to $u_{0}$,
We denote $D_{j,\ell,0}$ and $h_{j,\ell,0}\dis \mathcal{CN}\left(0,1\right)$ as the distance  and the small-scale channel  between BS $\ell\in\Phi_j$ and $u_{0}$, respectively.
%$x_{j,\ell}$ is the transmit signal from BS $\ell$ of the $j$ tier to its scheduled user,  and
%$n_{0}\dis\mathcal{CN}\left(0,N_{0}\right)$ is
We assume the complex additive white Gaussian noise of power $N_0$ at $u_0$.
When $u_{0}$ requests file $n$ and file $n$ is transmitted by BS $\ell_0$, the  signal-to-interference plus noise ratio (SINR) of $u_{0}$ is given by
\begin{align}
{\rm SINR}_{n,0}=\frac{{D_{j_0,\ell_0,0}^{-\alpha_{j_0}}}\left|h_{j_0,\ell_0,0}\right|^{2}}{\sum_{\ell\in\Phi_{j_0}\backslash \ell_0}D_{j_0,\ell,0}^{-\alpha_{j_0}}\left|h_{j_0,\ell,0}\right|^{2}+\sum_{\ell\in\Phi_{\overline j_0}}D_{\overline j_0,\ell,0}^{-\alpha_{\overline j_0}}\left|h_{\overline j_0,\ell,0}\right|^{2}\frac{P_{\overline j_0}}{P_{j_0}}+\frac{N_{0}}{P_{j_0}}}, \ n\in\mathcal N.
\label{eqn:SINR}
\end{align}

%\begin{figure*}[!t]
%\small{\begin{align}
%&q_1(\mathcal F_1^c,\mathcal F_2^c)=\sum_{n\in \mathcal F_1^c}a_{n}\sum_{k^c=1}^{K_1^c}\sum_{k^b=0}^{F_1^b}\Pr \left[K_{1c,n,0}=k^c\right]\Pr\left[K_{1b,n,0}'=k^b \right]{\rm Pr}\left[\frac{W}{k^c+\min\{K_1^b,k^b\}}\log_{2}\left(1+{\rm SINR}_{1,n,0}\right)\geq\tau \right]\nonumber\\
%&+\sum_{n\in \mathcal F_1^b}a_{n}\sum_{k^c=0}^{K_1^c}\sum_{k^b=1}^{F_1^b}\Pr \left[K_{1c,n,0}'=k^c\right]\Pr\left[K_{1b,n,0}=k^b \right]\frac{\min\{K_1^b,k^b\}}{k^b}{\rm Pr}\left[\frac{W}{k^c+\min\{K_1^b,k^b\}}\log_{2}\left(1+{\rm SINR}_{1,n,0}\right)\geq\tau \right]
%\label{eqn:succ-prob-def-1}\\
%&q_2(\mathcal F_2^c,\mathbf p)=\sum_{n\in \mathcal F_2^c}a_{n}\sum_{k^c=1}^{K_2^c}\Pr \left[K_{2,n,0}^c=k^c\right]{\rm Pr}\left[\frac{W}{k^c}\log_{2}\left(1+{\rm SINR}_{2,n,0}\right)\geq\tau  \right]\label{eqn:succ-prob-def-2}
%\end{align}}
%\normalsize \hrulefill
%\end{figure*}

When $u_0$ requests file $n\in\mathcal F_1^c$ ($n\in \mathcal F_1^b$),  let $K_{1,n,0}^c\in \{1,\cdots, K_1^c\}$  ($\overline{K}_{1,n,0}^c\in \{0,\cdots, K_1^c\}$) and $\overline{K}_{1,n,0}^b\in \{0,\cdots, F_1^b\}$ ($K_{1,n,0}^b\in \{1,\cdots, F_1^b\}$) denote the numbers of different cached and uncached files requested by the users associated with BS $\ell_0\in\Phi_{1}$, respectively.
%If $u_0$ requests file $n\in \mathcal F_1^b$,  let $\overline{K}_{1,n,0}^c\in \{0,\cdots, K_1^c\}$  and $K_{1,n,0}^b\in \{1,\cdots, F_1^b\}$ denote the numbers of different cached and uncached files requested by the users associated with BS $\ell_0\in\Phi_1$, respectively.
When $u_0$ requests file $n\in \mathcal F_2^c$,  let $K_{2,n,0}^c\in \{1,\cdots, K_2^c\}$  denote the number of different cached  files requested by the users associated with BS $\ell_0\in\Phi_2$. Note that  $K_{1,n,0}^c,\overline{K}_{1,n,0}^b,\overline{K}_{1,n,0}^c,K_{1,n,0}^b,K_{2,n,0}^c$ are discrete random variables, the probability mass functions (p.m.f.s) of which depend on $\mathbf a$, $\lambda_u$ and the design parameters $\left(\mathcal F_1^c,\mathcal F_2^c, \mathbf p\right)$. In addition, if $n\in\mathcal F_1^c\cup \mathcal F_2^c$, BS $\ell_0$ will transmit file $n$ for sure; if $n\in\mathcal F_1^b$, for given $K_{1,n,0}^b=k^b\geq1$,  BS $\ell_0$ will transmit file $n$ with probability $\frac{\min\{k^b,K_1^b\}}{k^b}$.
%Given file $n$ is transmitted by BS $\ell_0$,  the corresponding channel capacity of $u_0$ is given by
%\begin{align}
%C_{n,0}\triangleq
%\begin{cases}
% \frac{W}{K_{1,n,0}^c+\min\{\overline{K}_{1,n,0}^b,K_1^b\}}\log_{2}\left(1+{\rm SINR}_{n,0}\right), & n\in\mathcal F_1^c\\
%  \frac{W}{\overline{K}_{1,n,0}^c+\min\{K_{1,n,0}^b,K_1^b\}}\log_{2}\left(1+{\rm SINR}_{n,0}\right), & n\in\mathcal F_1^b\\
%   \frac{W}{K_{2,n,0}^c}\log_{2}\left(1+{\rm SINR}_{n,0}\right), & n\in\mathcal F_2^c
%\end{cases}\label{eqn:capacity}
%\end{align}
Given that file $n$ is transmitted, it can be decoded correctly at $u_0$ if the channel capacity between BS $\ell_0$ and $u_0$ is greater than or equal to $\tau$. %$C_{n,0}\geq \tau$.
Requesters are mostly concerned with whether  their desired files can be successfully received. Therefore, in this paper, we consider the successful transmission probability  of a file requested by $u_0$ %, also referred to as the successful transmission probability,
as the network performance metric.
%\footnote{The traditional SINR coverage probability (which does not capture resource sharing) and rate coverage probability (which only captures resource sharing among different users) cannot reflect resource sharing among different files under multicasting\cite{WCOM13Andrews}.}%, and hence are not suitable to measure the performance in our case.}
%Specifically, we investigate the successful transmission probability of the typical user $u_{0}$.
By total probability theorem,  the
successful transmission probability   under the proposed   scheme is given by:
\begin{align}
q(\mathcal F_1^c,\mathcal F_2^c, \mathbf p)=q_1(\mathcal F_1^c,\mathcal F_2^c)+q_2(\mathcal F_2^c, \mathbf p),\label{eqn:succ-prob-def}
\end{align}
where $\mathcal F_1^b$ is given by \eqref{eqn:backhaul}, and  $q_1(\mathcal F_1^c,\mathcal F_2^c)$ and $q_2(\mathcal F_2^c,\mathbf p)$ are given by \eqref{eqn:succ-prob-def-1} and \eqref{eqn:succ-prob-def-2}, respectively. Note that in \eqref{eqn:succ-prob-def-1} and \eqref{eqn:succ-prob-def-2}, each term multiplied by $a_n$ represents the successful transmission probability of file $n$.

\begin{figure*}[!t]
\small{\begin{align}
q_1(\mathcal F_1^c,\mathcal F_2^c)=&\sum_{n\in \mathcal F_1^c}a_{n}{\rm Pr}\left[\frac{W}{K_{1,n,0}^c+\min\{K_1^b,\overline {K}_{1,n,0}^b\}}\log_{2}\left(1+{\rm SINR}_{n,0}\right)\geq\tau \right]\nonumber\\
&+\sum_{n\in \mathcal F_1^b}a_{n}{\rm Pr}\left[\frac{W}{\overline{K}_{1,n,0}^{c}+\min\{K_1^b,K_{1,n,0}^b\}}\log_{2}\left(1+{\rm SINR}_{n,0}\right)\geq\tau, \text{$n$ is selected} \right]
\label{eqn:succ-prob-def-1}\\
q_2(\mathcal F_2^c,\mathbf p)=&\sum_{n\in \mathcal F_2^c}a_{n}{\rm Pr}\left[\frac{W}{K_{2,n,0}^c}\log_{2}\left(1+{\rm SINR}_{n,0}\right)\geq\tau  \right]\label{eqn:succ-prob-def-2}
\end{align}}
\normalsize \hrulefill
\end{figure*}
%\begin{figure*}[!t]
%\small{\begin{align}
%&q_1(\mathcal F_1^c,\mathcal F_2^c)=\sum_{n\in \mathcal F_1^c}a_{n}\sum_{k^c=1}^{K_1^c}\sum_{k^b=0}^{F_1^b}\Pr \left[K_{1,n,0}^c=k^c\right]\Pr\left[\overline {K}_{1,n,0}^b=k^b \right]{\rm Pr}\left[\frac{W}{k^c+\min\{K_1^b,k^b\}}\log_{2}\left(1+{\rm SINR}_{n,0}\right)\geq\tau \right]\nonumber\\
%&+\sum_{n\in \mathcal F_1^b}a_{n}\sum_{k^c=0}^{K_1^c}\sum_{k^b=1}^{F_1^b}\Pr \left[\overline{K}_{1,n,0}^{c}=k^c\right]\Pr\left[K_{1,n,0}^b=k^b \right]\frac{\min\{K_1^b,k^b\}}{k^b}{\rm Pr}\left[\frac{W}{k^c+\min\{K_1^b,k^b\}}\log_{2}\left(1+{\rm SINR}_{n,0}\right)\geq\tau \right]
%\label{eqn:succ-prob-def-1}\\
%&q_2(\mathcal F_2^c,\mathbf p)=\sum_{n\in \mathcal F_2^c}a_{n}\sum_{k^c=1}^{K_2^c}\Pr \left[K_{2,n,0}^c=k^c\right]{\rm Pr}\left[\frac{W}{k^c}\log_{2}\left(1+{\rm SINR}_{n,0}\right)\geq\tau  \right]\label{eqn:succ-prob-def-2}
%\end{align}}
%\normalsize \hrulefill
%\end{figure*}

Later, we shall see that under the proposed caching and multicasting design  for content-oriented services in the cache-enabled HetNet, the successful transmission probability is sufficiently different from the traditional  rate coverage probability studied for connection-oriented services \cite{WCOM13Andrews}. In particular, the successful transmission probability considered in this paper not only depends on  the physical layer parameters, such as the macro and pico BS densities $\lambda_1$ and  $\lambda_2$,  user density $\lambda_u$, path loss exponents $\alpha_1$ and $\alpha_2$, bandwidth $W$, backhaul capacity $K_1^b$  and transmit signal-to-noise ratios (SNRs) $\frac{P_1}{N_0}$ and $\frac{P_2}{N_0}$, but also relies on the content-related parameters, such as the popularity distribution $\mathbf a$, the cache sizes $K_1^c$ and $K_2^c$, and the design parameters $\left(\mathcal F_1^c, \mathcal F_2^c,\mathbf p(\mathcal F_2^c)\right)$. While, the traditional rate coverage probability only depends on the physical layer parameters. In addition, the successful transmission probability  depends on  the physical layer parameters in a different way from the traditional rate coverage probability.  For example, the content-centric association leads to different distributions of the locations of  serving and interfering BSs; the multicasting transmission results in different file load distributions at each BS \cite{WCOM13Andrews}; and the cache-enabled architecture makes   content availability related to   BS densities.

\section{Performance Analysis}

In this section, we study the successful transmission probability $q\left(\mathcal F_1^c,\mathcal F_2^c,\mathbf{p}\right)$ under the proposed caching and multicasting design  for given design parameters $\left(\mathcal F_1^c,\mathcal F_2^c,\mathbf p(\mathcal F_2^c)\right)$. First, we analyze the successful transmission probability in the general region. Then, we analyze the asymptotic transmission probability in the high SNR and user density region.

\subsection{Performance Analysis in General Region}

In this part, we  would like to analyze the successful transmission probability in the general region, using tools from stochastic geometry. In general, file loads $K_{1,n,0}^c$, $\overline{K}_{1,n,0}^b$, $\overline{K}_{1,n,0}^c$, $K_{1,n,0}^b$, $K_{2,n,0}^c$ and SINR ${\rm SINR}_{n,0}$ are correlated in a complex manner, as BSs with larger association regions have higher file load and lower SINR (due to larger user to BS distance) \cite{AndrewsTWCOffloading14}.  For the tractability of the analysis, as in \cite{WCOM13Andrews} and \cite{AndrewsTWCOffloading14}, the dependence is ignored. Therefore, to obtain the successful transmission probability in \eqref{eqn:succ-prob-def}, we analyze the distributions  of $K_{1,n,0}^c$, $\overline{K}_{1,n,0}^b$, $\overline{K}_{1,n,0}^c$, $K_{1,n,0}^b$, $K_{2,n,0}^c$ and the distribution of ${\rm SINR}_{n,0}$, separately.

First, we calculate the p.m.f.s of $K_{1,n,0}^c$ and $\overline{K}_{1,n,0}^b$ for $n\in \mathcal F_1^c$ as well as  the p.m.f.s of  $\overline{K}_{1,n,0}^c$ and $K_{1,n,0}^b$ for  $n\in \mathcal F_1^b$.  In calculating these p.m.f.s,
%$\Pr \left[K_{1,n,0}^c=k^c\right]$, $\Pr \left[K_{1,n,0}^b=k^b\right]$, $\Pr \left[\overline{K}_{1,n,0}^c=\overline{k}^c\right]$ and $\Pr \left[\overline{K}_{1,n,0}^b=\overline{k}^b\right]$,
we need
the probability density function (p.d.f.) of the size of the Voronoi cell of $\ell_{0}$ w.r.t. file $m\in  \mathcal F_1^c \cup \mathcal F_1^b \setminus \{n\}$. Note that  this p.d.f. is equivalent to  the  p.d.f. of the size of the Voronoi cell to which a randomly chosen user belongs. Based on  a tractable approximated form of this p.d.f. in \cite{SGcellsize13},  which is widely used in existing literature\cite{AndrewsTWCOffloading14,WCOM13Andrews}, we obtain the p.m.f.s of $K_{1,n,0}^c$, $\overline{K}_{1,n,0}^b$, $\overline{K}_{1,n,0}^c$ and $K_{1,n,0}^b$.

\begin{Lem} [p.m.f.s of $K_{1,n,0}^c$ $\overline{K}_{1,n,0}^b$, $\overline{K}_{1,n,0}^c$ and $K_{1,n,0}^b$] The p.m.f.s of $K_{1,n,0}^c$ and $\overline{K}_{1,n,0}^b$ for $n\in \mathcal F_1^c$ and the p.m.f.s of $\overline{K}_{1,n,0}^c$ and $K_{1,n,0}^b$ for $n\in\mathcal F_1^b$ are given by
\begin{align}
&\Pr \left[K_{1,n,0}^c=k^c\right]=g(\mathcal F_{1,-n}^c,k^c-1),\quad k^c=1,\cdots, K_1^c,\label{eqn:K-1-c}\\
%&\Pr \left[K_{1,n,0}^c=k^c\right]=\sum_{\mathcal X \in g(\mathcal F_{1,-n}^c,k^c-1)} \prod_{m\in \mathcal X}\left(1-\eta(a_m\lambda_u,\lambda_1)\right)\prod_{m\in {\mathcal F_{1,-n}^c \setminus \mathcal X}}\eta(a_m\lambda_u,\lambda_1),\quad k^c=1,\cdots, K_1^c,\label{eqn:K-1-c}\\
&\Pr \left[\overline{K}_{1,n,0}^b=k^b\right]=g(\mathcal F_{1}^b,k^b),\quad k^b=0,\cdots, F_1^b,\label{eqn:K-1-b-bar}\\
%&\Pr \left[\overline{K}_{1,n,0}^b=k^b\right]=\sum_{\mathcal X \in g(\mathcal F_{1}^b,k^b)} \prod_{m\in \mathcal X}\left(1-\eta(a_m\lambda_u,\lambda_1)\right)\prod_{m\in {\mathcal F_{1}^b \setminus \mathcal X}}\eta(a_m\lambda_u,\lambda_1),\quad k^b=0,\cdots, F_1^b,\label{eqn:K-1-b-bar}\\
&\Pr \left[\overline{K}_{1,n,0}^c=k^c\right]= g(\mathcal F_{1}^c,k^c),\quad k^c=0,\cdots, K_1^c,\label{eqn:K-1-c-bar}\\
%&\Pr \left[\overline{K}_{1,n,0}^c=k^c\right]=\sum_{\mathcal X \in g(\mathcal F_{1}^c,k^c)} \prod_{m\in \mathcal X}\left(1-\eta(a_m\lambda_u,\lambda_1)\right)\prod_{m\in {\mathcal F_{1}^c \setminus \mathcal X}}\eta(a_m\lambda_u,\lambda_1),\quad k^c=0,\cdots, K_1^c,\label{eqn:K-1-c-bar}\\
&\Pr \left[K_{1,n,0}^b=k^b\right]=g(\mathcal F_{1,-n}^b,k^b-1),\quad k^b=1,\cdots, F_1^b,\label{eqn:K-1-b}
%&\Pr \left[K_{1,n,0}^b=k^b\right]=\sum_{\mathcal X \in g(\mathcal F_{1,-n}^b,k^b-1)} \prod_{m\in \mathcal X}\left(1-\eta(a_m\lambda_u,\lambda_1)\right)\prod_{m\in {\mathcal F_{1,-n}^b \setminus \mathcal X}}\eta(a_m\lambda_u,\lambda_1),\quad k^b=1,\cdots, F_1^b,
%&\Pr \left[K_{1,n,0}^b=k^b\right]=\sum_{\mathcal{F}_{1}^{b1} \in \mathcal{SF}_{1}^{b1}(k^b-1)} \prod_{m\in \mathcal F_{1}^{b1}}\left(1-U_m^{-4.5}\right)\prod_{m\in {\mathcal F_{1,-n}^b \setminus \mathcal F_{1}^{b1}}}U_m^{-4.5},\quad k^b=1,\cdots, K_1^b,\nonumber\\
%&\Pr \left[\overline{K}_{1,n,0}^c=\overline{k}^c\right]=\sum_{\mathcal{F}_{1}^{c1} \in \mathcal{SF}_{1}^{c1}(k^c)} \prod_{m\in \mathcal F_{1}^{c1}}\left(1-U_m^{-4.5}\right)\prod_{m\in {\mathcal F_{1}^c \setminus \mathcal F_{1}^{c1}}}U_m^{-4.5},\quad \overline{k}^c=0,\cdots, K_1^c,\nonumber\\
%&\Pr \left[\overline{K}_{1,n,0}^b=\overline{k}^b\right]=\sum_{\mathcal{F}_{1}^{b1} \in \mathcal{SF}_{1}^{b1}(k^b)} \prod_{m\in \mathcal F_{1}^{b1}}\left(1-U_m^{-4.5}\right)\prod_{m\in {\mathcal F_{1}^b \setminus \mathcal F_{1}^{b1}}}U_m^{-4.5},\quad \overline{k}^b=0,\cdots, K_1^b.
%\label{eqn:K-1-b}
\end{align}
%\begin{align}
%&\Pr \left[K_{1c,n,0}'=k_{1c},K_{1b,n,0}=k_{1b}\right]\nonumber\\
%&\approx\sum_{\mathcal{F}_{1c}^1 \in \mathcal{SF}_{1c}^1(k_1^c)} \prod_{m\in \mathcal F_{1c}^1}\left(1-U_m^{-4.5}\right)\prod_{m\in {\mathcal F_{1c}-\mathcal F_{1c}^1}}U_m^{-4.5}\nonumber\\
%&\times\sum_{\mathcal{F}_{1b}^1 \in \mathcal{SF}_{1b}^1(k_1^b-1)} \prod_{m\in \mathcal F_{1b}^1}\left(1-U_m^{-4.5}\right)\prod_{m\in {\mathcal F_{1b,-n}-\mathcal F_{1b}^1}}U_m^{-4.5}.\nonumber\\
%&\hspace{10mm}k_1^c=0,\cdots, K_1^c, \hspace{10mm}k_1^b=1,\cdots, K_1^b\label{eqn:K-pmf-m}
%\end{align}
where
%$g(\mathcal X,k)\triangleq \left\{\mathcal S \subseteq \mathcal X: |\mathcal S|=k\right\}$,
%$\mathcal{SF}_{1}^{c1}(k-1)\triangleq \left\{\mathcal F_{1}^{c1} \subseteq \mathcal F_{1,-n}^c :|\mathcal F_{1}^{c1}|=k-1\right\}$,
%$\mathcal{SF}_{1}^{b1}(k-1)\triangleq \left\{\mathcal F_{1}^{b1} \subseteq \mathcal F_{1,-n}^b :|\mathcal F_{1}^{b1}|=k-1\right\}$,
%$\mathcal{SF}_{1}^{c1}(k)\triangleq \left\{\mathcal F_{1}^{c1} \subseteq \mathcal F_{1}^c :|\mathcal F_{1}^{c1}|=k\right\}$,
%$\mathcal{SF}_{1}^{b1}(k)\triangleq \left\{\mathcal F_{1}^{b1} \subseteq \mathcal F_{1}^b :|\mathcal F_{1}^{b1}|=k\right\}$,
%$\mathcal{SF}_{1c}^1(k)\triangleq \left\{\mathcal F_{1c}^1 \subseteq \mathcal F_{1c} :|\mathcal F_{1c}^1|=k\right\}$,
%$\mathcal{SF}_{1b}^1(k-1)\triangleq \left\{\mathcal F_{1b}^1 \subseteq \mathcal F_{1b,-n} :|\mathcal F_{1b}^1|=k-1\right\}$,
%$\mathcal{SF}_{1b}^1(k)\triangleq \left\{\mathcal F_{1b}^1 \subseteq \mathcal F_{1b} :|\mathcal F_{1b}^1|=k\right\}$.
%and $U_m=1+3.5^{-1}\frac{a_m\lambda_u}{\lambda_1}$,
%and $\eta(x,y)\triangleq\left(1+\frac{x}{3.5y}\right)^{-4.5}$.
$g(\mathcal F ,k)\triangleq \sum_{\mathcal X \in \left\{\mathcal S \subseteq \mathcal F: |\mathcal S|=k\right\} } \prod_{m\in \mathcal X}\left(1-\left(1+\frac{a_m\lambda_u}{3.5\lambda_1}\right)^{-4.5}\right)\prod_{m\in {\mathcal F \setminus \mathcal X}}\left(1+\frac{a_m\lambda_u}{3.5\lambda_1}\right)^{-4.5}$,  $\quad$
$\mathcal F_{1,-n}^{c}\triangleq\mathcal F_{1}^{c} \setminus \{n\}$ and $\mathcal F_{1,-n}^{b}\triangleq\mathcal F_{1}^{b} \setminus \{n\}$.
\label{Lem:pmf-K-m}
\end{Lem}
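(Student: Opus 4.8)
The plan is to compute each p.m.f. by conditioning on the Voronoi cell of the serving macro-BS $\ell_0$ and then counting how many \emph{distinct} files from a relevant file set are requested by the users that fall inside that cell. First I would set up the key probabilistic primitive: for a fixed file $m$, a user requests file $m$ independently with probability $a_m$, and (by the independent thinning property of the PPP $\Phi_u$) the users requesting file $m$ form a homogeneous PPP of density $a_m\lambda_u$. Hence the number of file-$m$ requesters served by $\ell_0$ is Poisson-mixed over the random area $A$ of $\ell_0$'s Voronoi cell. Using the standard tractable approximation from \cite{SGcellsize13} for the p.d.f. of the area of the Voronoi cell containing the typical user (normalized by $1/\lambda_1$) — namely the Gamma-type density with shape/rate parameters $3.5$ — one obtains
\begin{align}
\Pr[\text{file }m\text{ not requested in }\ell_0\text{'s cell}] = \mathbb{E}_A\!\left[e^{-a_m\lambda_u A}\right] = \left(1+\frac{a_m\lambda_u}{3.5\lambda_1}\right)^{-4.5},\nonumber
\end{align}
which is exactly the per-file factor appearing in the definition of $g(\mathcal F,k)$. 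Correspondingly, file $m$ \emph{is} requested with probability $1-\left(1+\frac{a_m\lambda_u}{3.5\lambda_1}\right)^{-4.5}$.

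Next I would argue that, conditioned on the cell of $\ell_0$ (equivalently, on $A$), the events ``file $m$ is requested'' for different $m$ are independent, because they depend on disjoint thinned PPPs; and then — consistent with the paper's stated policy of ignoring the dependence between load and cell size, following \cite{WCOM13Andrews,AndrewsTWCOffloading14} — I would treat these events as \emph{unconditionally} independent across $m$ with the marginal probabilities above. Then the number of distinct requested files out of a set $\mathcal F$ is a sum of independent (non-identical) Bernoulli variables, so its p.m.f. is the Poisson-binomial expression
\begin{align}
g(\mathcal F,k)=\sum_{\mathcal X\in\{\mathcal S\subseteq\mathcal F:|\mathcal S|=k\}}\prod_{m\in\mathcal X}\left(1-\left(1+\frac{a_m\lambda_u}{3.5\lambda_1}\right)^{-4.5}\right)\prod_{m\in\mathcal F\setminus\mathcal X}\left(1+\frac{a_m\lambda_u}{3.5\lambda_1}\right)^{-4.5}.\nonumber
\end{align}
Finally I would specialize the file set for each of the four quantities. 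For $K_{1,n,0}^c$ with $n\in\mathcal F_1^c$: the typical user $u_0$ already contributes a request for its own file $n$, so $\ell_0$ serves file $n$ for sure, and the \emph{remaining} number of distinct cached files requested by the \emph{other} users is distributed over the set $\mathcal F_{1,-n}^c=\mathcal F_1^c\setminus\{n\}$; hence $\Pr[K_{1,n,0}^c=k^c]=g(\mathcal F_{1,-n}^c,k^c-1)$. For $\overline{K}_{1,n,0}^b$ with $n\in\mathcal F_1^c$: the uncached files are all of $\mathcal F_1^b$ and none is forced, giving $g(\mathcal F_1^b,k^b)$ for $k^b=0,\dots,F_1^b$. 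Symmetrically, $\overline{K}_{1,n,0}^c$ with $n\in\mathcal F_1^b$ ranges over the full cached set $\mathcal F_1^c$ with no forcing, giving $g(\mathcal F_1^c,k^c)$; and $K_{1,n,0}^b$ with $n\in\mathcal F_1^b$ has file $n$ forced (again by $u_0$'s own request), leaving $g(\mathcal F_{1,-n}^b,k^b-1)$ over $\mathcal F_{1,-n}^b=\mathcal F_1^b\setminus\{n\}$.

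The main obstacle — or rather, the main modeling subtlety to state carefully rather than ``prove'' — is the independence/decoupling: strictly, the indicator that file $m$ is requested and the indicator for file $m'$ are only conditionally independent given the cell area $A$, and each is correlated with the SINR of $u_0$ through $A$ and through the user-to-BS distances. I would handle this exactly as the paper announces just before the lemma: invoke the approximation of \cite{SGcellsize13} for the typical-user cell-area p.d.f., and adopt the standard load/SINR decoupling of \cite{WCOM13Andrews,AndrewsTWCOffloading14}, so that the per-file ``not requested'' probability is the single deterministic number $\left(1+\frac{a_m\lambda_u}{3.5\lambda_1}\right)^{-4.5}$ and the Poisson-binomial counting goes through verbatim. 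The only remaining bookkeeping is the ``$-1$'' shift in the two cases where $u_0$'s own request forces its file into the served set, which is immediate from the content-centric association rule.
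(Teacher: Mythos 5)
Your proposal is correct and follows essentially the same route as the paper's Appendix A: indicator variables for whether each file $m$ is requested within $\ell_0$'s cell, the per-file probability $\left(1+\frac{a_m\lambda_u}{3.5\lambda_1}\right)^{-4.5}$ obtained from the Gamma-type cell-area approximation of \cite{SGcellsize13}, a Poisson-binomial count over the relevant file set, and the $-1$ shift for the file forced by $u_0$'s own request. Your explicit remark that independence across files holds only conditionally on the cell area and is decoupled by approximation is a point the paper glosses over, but it does not change the argument.
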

\begin{proof}
Please refer to Appendix A.
\end{proof}

Next, we obtain the p.m.f. of $K_{2,n,0}^c$ for $n\in \mathcal F_2^c$.  In calculating the   p.m.f. of $K_{2,n,0}^c$, we need
the p.d.f. of the size of the Voronoi cell of $\ell_{0}$ w.r.t. file $m\in   \mathcal N_i \setminus \{n\}$  when  $\ell_{0}$ contains combination $i\in \mathcal I_n$.  However, this p.d.f. is very complex and is still unknown. For the tractability of  the analysis,  as in \cite{arXivSGCaching15}, we approximate this p.d.f. based on a tractable approximated form of the  p.d.f. of the size of the Voronoi cell to which a randomly chosen user belongs\cite{SGcellsize13}, which is widely used in existing literature\cite{AndrewsTWCOffloading14,WCOM13Andrews}. Under this approximation, we obtain the p.m.f. of $K_{2,n,0}^c$.

\begin{Lem} [p.m.f. of $K_{2,n,0}^c$] The p.m.f. of $K_{2,n,0}^c$ for $n\in\mathcal F_2^c$ is given by
\begin{align}
&\Pr \left[K_{2,n,0}^c=k^c\right]\nonumber\\
&=\sum_{i\in \mathcal I_n}\frac{p_i}{T_n}\sum_{\mathcal X\in  \left\{\mathcal S \subseteq \mathcal N_{i,-n}: |\mathcal S|=k^c-1\right\}  }\prod\limits_{m\in \mathcal X}\left(1-\left(1+\frac{a_m\lambda_u}{3.5T_m\lambda_2}\right)^{-4.5}\right)\prod\limits_{m\in {\mathcal N_{i,-n}\setminus \mathcal X}}\left(1+\frac{a_m\lambda_u}{3.5T_m\lambda_2}\right)^{-4.5},\nonumber\\
&\hspace{12cm}k^c=1,\cdots, K_2^c,\label{eqn:K-pmf}
\end{align}
where
%$\mathcal{SN}_i^1(k-1)\triangleq \left\{\mathcal N_i^1 \subseteq \mathcal N_{i,-n} :|\mathcal N_i^1|=k-1\right\}$,
%$g(\cdot,\cdot)$ and $\eta(\cdot,\cdot)$ are defined in Lemma~\ref{Lem:pmf-K-m}, and
$\mathcal N_{i,-n}\triangleq  \mathcal N_i \setminus \{n\}$.
%$W_m(T_m)\triangleq1+3.5^{-1}\frac{a_m\lambda_u}{T_m\lambda_2}$ and
%$G_{n,i}(\mathcal X,\mathbf T_{i,-n})\triangleq \prod\limits_{m\in \mathcal X}\left(1-\eta\left(a_m\lambda_u,T_m\lambda_2\right)\right)\prod\limits_{m\in {\mathcal N_{i,-n}\setminus \mathcal X}}\eta\left(a_m\lambda_u,T_m\lambda_2\right). $
%\begin{align}
%&G_{n,i}(\mathcal X,\mathbf T_{i,-n})\triangleq \prod_{m\in \mathcal X}\left(1-\eta\left(a_m\lambda_u,T_m\lambda_2\right)\right)\prod_{m\in {\mathcal N_{i,-n}\setminus \mathcal X}}\eta\left(a_m\lambda_u,T_m\lambda_2\right). \nonumber
%\end{align}
\label{Lem:pmf-K}
\end{Lem}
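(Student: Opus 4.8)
The plan is to mirror the argument used for Lemma~\ref{Lem:pmf-K-m}, but conditioned on which combination the serving pico-BS $\ell_0$ stores. First I would condition on the event that $u_0$ requests file $n\in\mathcal F_2^c$ and is served by a pico-BS $\ell_0$ that stores combination $i\in\mathcal I_n$. Since each pico-BS stores combination $i$ with probability $p_i$ and $u_0$ (requesting $n$) attaches to the nearest pico-BS storing some combination in $\mathcal I_n$, by the usual PPP-colouring/thinning argument the serving BS stores combination $i$ with probability $p_i/T_n$ (this is exactly the weighting $\sum_{i\in\mathcal I_n}\frac{p_i}{T_n}$ appearing in \eqref{eqn:K-pmf}, and $T_n=\sum_{i\in\mathcal I_n}p_i$ as in \eqref{eqn:def-T-n}). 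Given this conditioning, the only files $\ell_0$ can serve from its cache other than $n$ are those in $\mathcal N_{i,-n}=\mathcal N_i\setminus\{n\}$, so $K_{2,n,0}^c-1$ counts how many of the $K_2^c-1$ files in $\mathcal N_{i,-n}$ are requested by at least one user associated with $\ell_0$.

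Next I would compute, for each $m\in\mathcal N_{i,-n}$, the probability that file $m$ is requested by some user served by $\ell_0$. A user is associated with $\ell_0$ for file $m$ iff $\ell_0$ is that user's nearest pico-BS storing file $m$; the pico-BSs storing $m$ form an independent thinning of $\Phi_2$ with density $T_m\lambda_2$ (using the definition of $T_m$), so the relevant "content-centric" cell of $\ell_0$ with respect to file $m$ is a Voronoi cell of a density-$T_m\lambda_2$ process. Using the standard tractable approximation for the p.d.f. of the size of the Voronoi cell containing a typical user from \cite{SGcellsize13} (as invoked in the paragraph preceding the lemma, and as done for the single-tier case in \cite{arXivSGCaching15}), the number of users in that cell requesting file $m$ is approximately a mixed Poisson with the cell-size distribution as mixing measure; integrating out the cell size, the probability that file $m$ is requested by at least one such user is $1-\left(1+\frac{a_m\lambda_u}{3.5\,T_m\lambda_2}\right)^{-4.5}$, and the complementary probability is $\left(1+\frac{a_m\lambda_u}{3.5\,T_m\lambda_2}\right)^{-4.5}$. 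Treating these events as independent across the files $m\in\mathcal N_{i,-n}$ (consistent with the independence approximations already adopted in the paper), the conditional probability that exactly the subset $\mathcal X\subseteq\mathcal N_{i,-n}$ of size $k^c-1$ is requested is the product $\prod_{m\in\mathcal X}\!\bigl(1-(1+\tfrac{a_m\lambda_u}{3.5T_m\lambda_2})^{-4.5}\bigr)\prod_{m\in\mathcal N_{i,-n}\setminus\mathcal X}(1+\tfrac{a_m\lambda_u}{3.5T_m\lambda_2})^{-4.5}$. Summing over all such $\mathcal X$ gives $\Pr[K_{2,n,0}^c=k^c\mid \ell_0 \text{ stores } i]$, and averaging over $i$ with weights $p_i/T_n$ yields \eqref{eqn:K-pmf}.

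The main obstacle — and the place where honest care is needed rather than a routine calculation — is justifying the Voronoi-cell approximation in this content-centric setting. The cell of $\ell_0$ with respect to file $m\in\mathcal N_{i,-n}$ is not the cell of a typical point of the thinned process $\{$pico-BSs storing $m\}$, because $\ell_0$ is size-biased: it was selected as the serving BS for file $n$, which correlates its cell with respect to $n$ (and hence, through overlap of the point configurations, with its cell with respect to $m$) with being atypically large. As in \cite{arXivSGCaching15}, I would not attempt to resolve this correlation exactly; instead I would state explicitly that we approximate the p.d.f. of this cell size by the tractable form of the typical-user cell-size p.d.f. from \cite{SGcellsize13}, and adopt the same independence-across-files simplification already used for the macro-tier in Lemma~\ref{Lem:pmf-K-m}. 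The remaining steps — the thinning probability $p_i/T_n$, the mixed-Poisson integral producing the $(1+\cdot)^{-4.5}$ factors, and the combinatorial sum over subsets $\mathcal X$ — are then straightforward and parallel to Appendix~A, so I would relegate the full details there.
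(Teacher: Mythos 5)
Your proposal matches the paper's own proof in Appendix~B essentially step for step: conditioning on the serving pico-BS storing combination $i$ with weight $p_i/T_n$, writing $K_{2,n,0}^c-1$ as a sum of per-file indicator variables over $\mathcal N_{i,-n}$, evaluating each indicator via the approximated Voronoi cell-size p.d.f. of \cite{SGcellsize13} applied to the density-$T_m\lambda_2$ thinning, and assuming independence across files before summing over subsets $\mathcal X$. Your explicit remark that the cell of $\ell_0$ is size-biased and that the typical-cell approximation is being imposed rather than derived is a slightly more candid statement of the same approximation the paper acknowledges.
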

\begin{proof}
Please refer to Appendix B.
\end{proof}

%The distributions of the locations of desired transmitters and interferers and the distributions of  file loads $K_{1,n,0}^c,\overline{K}_{1,n,0}^b,\overline{K}_{1,n,0}^c,K_{1,n,0}^b$ in the cache-enabled HetNet are different from those   in traditional connection-based HetNet\cite{WCOM13Andrews}. By carefully handling these distributions and using tools from stochastic geometry, we obtain a tractable expression for $q(\mathcal F_1^c,\mathcal F_2^c, \mathbf p)$, which can be found in \cite{dropboxISIT16}  and is omitted here due to page limitation.

The distributions of the locations of desired transmitters and interferers are more involved than those in the traditional connection-based HetNets. Thus, it is more challenging to analyze the p.d.f. of  ${\rm SINR}_{n,0}$.
When $u_0$ is a macro-user, as in the traditional connection-based HetNets, there are two types of interferers, namely, i) all the other macro-BSs besides its serving macro-BS, and ii) all the pico-BSs.  When $u_0$ is a pico-user, different from  the traditional connection-based HetNets, there are three types of interferers, namely,  i) all the other pico-BSs storing the combinations containing the desired file of $u_0$  besides its serving pico-BS,   ii) all the pico-BSs without  the desired file of $u_0$, and iii) all the macro-BSs. By carefully handling these distributions, we can derive the p.d.f. of  ${\rm SINR}_{n,0}$, for $n\in\mathcal F_1^c \cup \mathcal F_1^b $ and $n\in\mathcal F_2^c $, respectively.

Then, based on Lemma~\ref{Lem:pmf-K-m} and Lemma~\ref{Lem:pmf-K} as well as the p.d.f. of  ${\rm SINR}_{n,0}$, we can derive the successful transmission probability  $q\left(\mathcal F_1^c,\mathcal F_2^c,\mathbf{p}\right)$.

\begin{Thm} [Performance]
The successful transmission probability  $q\left(\mathcal F_1^c,\mathcal F_2^c,\mathbf{p}\right)$ of $u_{0}$ is given by
%\begin{align}\label{eq:CPrate_multifile_noise}
%q_K\left(\mathbf{p}\right)=2\pi\lambda_{b}\sum_{n\in \mathcal N}a_{n} T_n \int_{0}^{\infty}&d\exp\left(-\frac{2\pi}{\alpha} T_n \lambda_{b}\left(2^{\frac{\tau K}{W}}-1\right)^{\frac{2}{\alpha}}d^{2}B^{'}\left(\frac{2}{\alpha},1-\frac{2}{\alpha},2^{-\frac{\tau K}{W}}\right)\right)\notag\\
%&\times \exp\left(-\frac{2\pi}{\alpha}\left(1-T_n\right)\lambda_{b}\left(2^{\frac{\tau K}{W}}-1\right)^{\frac{2}{\alpha}}d^{2}B\left(\frac{2}{\alpha},1-\frac{2}{\alpha}\right)\right)\nonumber\\
%&\times\exp\left(-\pi T_n \lambda_{b}d^{2}\right)\exp\left(-\left(2^{\frac{\tau K}{W}}-1\right)d^{\alpha}\frac{N_0}{P}\right){\rm d}d,
%\end{align}
\begin{align}\label{eq:CPrate_multifile_noise}
q\left(\mathcal F_1^c,\mathcal F_2^c,\mathbf{p}\right)=&\sum_{n\in \mathcal F_1^c}a_{n} \sum_{k^c=1}^{K_1^c}\sum_{k^b=0}^{F_1^b} \Pr [K_{1,n,0}^c=k^c] \Pr[\overline{K}_{1,n,0}^b=k^b]f_{1,{k^c+\min\{K_1^b,k^b\}}}\nonumber\\
&+\sum_{n\in \mathcal F_1^b}a_{n} \sum_{k^c=0}^{K_1^c}\sum_{k^b=1}^{F_1^b} \Pr [\overline{K}_{1,n,0}^c=k^c]\Pr [K_{1,n,0}^b=k^b] \frac{\min\{K_1^b,k^b\}}{k^b}  f_{1,{k^c+\min\{K_1^b,k^b\}}}\nonumber\\
&+\sum_{n\in \mathcal F_2^c}a_{n} \sum_{k^c=1}^{K_2^c} \Pr[K_{2,n,0}=k^c]f_{2,k^c}(T_n),
\end{align}
where the p.m.f.s of $K_{1,n,0}^c$ $\overline{K}_{1,n,0}^b$, $\overline{K}_{1,n,0}^c$, $K_{1,n,0}^b$ and $K_{2,n,0}^c$ are given by Lemma~\ref{Lem:pmf-K-m} and Lemma~\ref{Lem:pmf-K}, $f_{1,k}$ and  $f_{2,k}(T_n)$  are given by \eqref{eqn:f-1-k} and \eqref{eqn:f-2-k}, and  $T_n$  is given by \eqref{eqn:def-T-n}.
%&f_{1,k}\triangleq \nonumber\\ &2\pi\lambda_{1}\int_{0}^{\infty}d\exp\left(-\pi\lambda_{1}d^{2}\right)\exp\left(- \left(2^{\frac{k\tau}{W}}-1\right)d^{\alpha_1}\frac{N_0}{P_1}\right)\nonumber\\
%&\times\exp\left(-\frac{2\pi\lambda_{2}}{\alpha_2}d^{\frac{\alpha_1}{\alpha_2}}\left(\frac{P_2}{P_1}\left(2^{\frac{k\tau}{W}}-1\right)\right)^{\frac{2}{\alpha_2}}B\left(\frac{2}{\alpha_2},1-\frac{2}{\alpha_2}\right)\right)\nonumber\\
%&\times
%\exp\left(-\frac{2\pi\lambda_{1}x}{\alpha_1}d^2\left(2^{\frac{k\tau}{W}}-1\right)^{\frac{2}{\alpha_1}}B^{'}\left(\frac{2}{\alpha_1},1-\frac{2}{\alpha_1},2^{-\frac{k\tau}{W}}\right)\right){\rm d}d.\label{eqn:def-f}
%\end{align}
Here, $B^{'}\left(x,y,z\right)\triangleq \int_{z}^{1}u^{x-1}\left(1-u\right)^{y-1}{\rm d}u$ and $B(x,y)\triangleq\int_{0}^{1}u^{x-1}\left(1-u\right)^{y-1}{\rm d}u$ denote  the complementary incomplete Beta function and  the Beta function, respectively. \label{Thm:generalKmulti}
\end{Thm}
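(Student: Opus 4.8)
The plan is to start from the definition of the successful transmission probability in \eqref{eqn:succ-prob-def}--\eqref{eqn:succ-prob-def-2} and to reduce each of its three sums to a conditional-SINR computation by conditioning on the appropriate file loads. For $n\in\mathcal F_1^c$, conditioning on $K_{1,n,0}^c=k^c$ and $\overline K_{1,n,0}^b=k^b$ and setting $k=k^c+\min\{K_1^b,k^b\}$, the rate event in \eqref{eqn:succ-prob-def-1} becomes $\frac{W}{k}\log_2(1+{\rm SINR}_{n,0})\ge\tau$, i.e.\ ${\rm SINR}_{n,0}\ge 2^{\tau k/W}-1$. For $n\in\mathcal F_1^b$, I would condition on $\overline K_{1,n,0}^c=k^c$ and $K_{1,n,0}^b=k^b$, peel off the independent ``file-$n$-is-selected'' event — which contributes the factor $\frac{\min\{K_1^b,k^b\}}{k^b}$ already identified in Section~\ref{sec:description} — and again arrive at the SINR threshold with $k=k^c+\min\{K_1^b,k^b\}$. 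For $n\in\mathcal F_2^c$, conditioning on $K_{2,n,0}^c=k^c$ gives the threshold with $k=k^c$.

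Next, I would invoke the load/SINR independence approximation adopted throughout the paper (as in \cite{WCOM13Andrews,AndrewsTWCOffloading14}). Under it, the total-probability expansion factorizes: every term splits into the product of the load p.m.f.s — furnished verbatim by Lemma~\ref{Lem:pmf-K-m} and Lemma~\ref{Lem:pmf-K} — the selection factor (for $\mathcal F_1^b$ only), and the conditional tail probability $\Pr[{\rm SINR}_{n,0}\ge 2^{\tau k/W}-1]$. I would name this tail probability $f_{1,k}$ when $u_0$ is a macro-user and $f_{2,k}(T_n)$ when $u_0$ is a pico-user requesting $n\in\mathcal F_2^c$; summing over $k^c$, $k^b$ and then over $n$ within each tier then reproduces the three lines of \eqref{eq:CPrate_multifile_noise} verbatim. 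All that then remains is to evaluate $f_{1,k}$ and $f_{2,k}(T_n)$.

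The substantive step — and the main obstacle — is deriving $f_{1,k}$ and $f_{2,k}(T_n)$ from \eqref{eqn:SINR} by stochastic geometry. For a macro-user the serving distance $D_{1,\ell_0,0}$ is the contact distance of $\Phi_1$, hence has density $2\pi\lambda_1 r e^{-\pi\lambda_1 r^2}$; conditioning on it and using Rayleigh fading, $\Pr[{\rm SINR}_{n,0}\ge\beta\mid r]=e^{-\beta r^{\alpha_1}N_0/P_1}\,\mathcal L_{I_1}(\beta r^{\alpha_1})\,\mathcal L_{I_2}(\beta r^{\alpha_1}P_2/P_1)$, where $I_1$ collects the interference from $\Phi_1\setminus\{\ell_0\}$ (points outside radius $r$) and $I_2$ that from all of $\Phi_2$; both Laplace transforms follow from the PGFL of a PPP, and integrating over $r$ produces the complementary incomplete Beta and Beta terms $B'$ and $B$. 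For a pico-user the delicate point is the content-centric association: the serving pico-BS is the nearest point of the thinned process of pico-BSs storing a combination that contains $n$, a PPP of density $T_n\lambda_2$, so $D_{2,\ell_0,0}$ has density $2\pi T_n\lambda_2 r e^{-\pi T_n\lambda_2 r^2}$, and the pico-tier interference must be partitioned into the three types flagged just before the theorem: (i) other pico-BSs that store $n$ (density $T_n\lambda_2$, necessarily outside radius $r$), (ii) pico-BSs that do not store $n$ (density $(1-T_n)\lambda_2$, with no exclusion region), and (iii) all macro-BSs (density $\lambda_1$, no exclusion region). Carrying the three corresponding PGFL factors through the same conditioning-and-integration argument yields $f_{2,k}(T_n)$, its $T_n$-dependence entering through both the serving-distance density and the thinning probabilities. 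Along the way I would also need to discharge the standard tractability approximations — independence of loads and SINR, and the approximated Voronoi-cell-size p.d.f.\ of \cite{SGcellsize13} already used in Lemmas~\ref{Lem:pmf-K-m} and~\ref{Lem:pmf-K} — after which assembling the three tiers gives \eqref{eq:CPrate_multifile_noise}.
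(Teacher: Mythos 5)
Your proposal is correct and follows essentially the same route as the paper's Appendix C: decompose via the total-probability expansion under the load/SINR independence approximation, identify the serving-distance density (contact distance of $\Phi_1$ for macro-users and of the thinned PPP of density $T_n\lambda_2$ for pico-users), split the interference into the two (macro-user) or three (pico-user) types with the correct exclusion regions, and evaluate the resulting Laplace transforms via the PGFL to obtain the Beta-function expressions. No gaps.
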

\begin{proof}
Please refer to Appendix C.
\end{proof}

From Theorem~\ref{Thm:generalKmulti}, we can see that in the general region, the physical  layer parameters $\alpha_1$, $\alpha_2$, $W$, $\lambda_1$, $\lambda_2$, $\lambda_u$, $\frac{P_1}{N_0}$, $\frac{P_2}{N_0}$, and the design parameters $\left(\mathcal F_1^c,\mathcal F_2^c,\mathbf{p}\right)$
jointly affect the successful transmission probability $q\left(\mathcal F_1^c,\mathcal F_2^c,\mathbf{p}\right)$.  The impacts of the physical layer parameters   and  the design parameters on $q\left(\mathcal F_1^c,\mathcal F_2^c,\mathbf{p}\right)$  are coupled in a complex manner.
%It is very difficult to obtain design insights from this expression.

\begin{figure*}[!t]
\small{\begin{align}
f_{1,k}\triangleq& 2\pi\lambda_{1}\int_{0}^{\infty} d\exp\left(- \left(2^{\frac{k\tau}{W}}-1\right)d^{\alpha_1}\frac{N_0}{P_1}\right)\exp\left(-\frac{2\pi\lambda_{2}}{\alpha_2}d^{\frac{2\alpha_1}{\alpha_2}}\left(\frac{P_2}{P_1}\left(2^{\frac{k\tau}{W}}-1\right)\right)^{\frac{2}{\alpha_2}}B\left(\frac{2}{\alpha_2},1-\frac{2}{\alpha_2}\right)\right)\nonumber\\
&  \times\exp\left(-\frac{2\pi\lambda_{1}}{\alpha_1}d^2\left(2^{\frac{k\tau}{W}}-1\right)^{\frac{2}{\alpha_1}}B^{'}\left(\frac{2}{\alpha_1},1-\frac{2}{\alpha_1},2^{-\frac{k\tau}{W}}\right)\right)\exp\left(-\pi\lambda_{1}d^{2}\right){\rm d}d.\label{eqn:f-1-k}\\
f_{2,k}(x)\triangleq &2\pi\lambda_{2}x\int_{0}^{\infty} d\exp\left(- \left(2^{\frac{k\tau}{W}}-1\right)d^{\alpha_2}\frac{N_0}{P_2}\right)\exp\left(-\frac{2\pi\lambda_{1}}{\alpha_1}d^{\frac{2\alpha_2}{\alpha_1}}\left(\frac{P_1}{P_2}\left(2^{\frac{k\tau}{W}}-1\right)\right)^{\frac{2}{\alpha_1}}B\left(\frac{2}{\alpha_1},1-\frac{2}{\alpha_1}\right)\right)\nonumber\\
& \times
\exp\left(-\frac{2\pi\lambda_{2}}{\alpha_2}d^2\left(2^{\frac{k\tau}{W}}-1\right)^{\frac{2}{\alpha_2}}\left(xB^{'}\left(\frac{2}{\alpha_2},1-\frac{2}{\alpha_2},2^{-\frac{k\tau}{W}}\right)+\left(1-x\right)B\left(\frac{2}{\alpha_2},1-\frac{2}{\alpha_2}\right)\right)\right)\nonumber\\
&\times \exp\left(-\pi\lambda_{2}xd^{2}\right){\rm d}d.\label{eqn:f-2-k}
\end{align}}
\normalsize \hrulefill
\end{figure*}

\subsection{Performance Analysis in Asymptotic Region}

In this part, to obtain design insights,  we  focus on analyzing  the asymptotic successful transmission probability  in the high SNR and user density region. %, in which the gain of multicast transmissions over unicast transmissions  is very promising in the cache-enabled HetNet.
Note that in the remaining of the paper, when considering the high SNR region, we assume
$P_1=\beta P$ and $P_2=P$ for some $\beta>1$ and $P>0$, and let $\frac{P}{N_0}\to \infty$.
%The high SNR region indicates the region where $\frac{P}{N_0}\to \infty$.
On the other hand, in the high user density region where $\lambda_u\to \infty $, discrete random variables $K_{1,n,0}^c,\overline{K}_{1,n,0}^c\to K_1^c$, $\overline{K}_{1,n,0}^b,K_{1,n,0}^b\to F_1^b$ and $K_{2,n,0}^c\to K_2^c$ in distribution.
Define $ q_{1, \infty}\left(\mathcal F_1^c, \mathcal F_2^c\right)\triangleq \lim_{\frac{P}{N_0}\to \infty , \lambda_u\to \infty}q_{1}\left(\mathcal F_1^c, \mathcal F_2^c\right)$, $  q_{2, \infty}\left(\mathcal F_2^c,\mathbf{T}\right)\triangleq \lim_{\frac{P}{N_0}\to \infty , \lambda_u\to \infty}q_{2}\left(\mathcal F_2^c,\mathbf{p}\right)$, and $ q_{ \infty}\left(\mathcal F_1^c, \mathcal F_2^c,\mathbf{T}\right)\triangleq \lim_{\frac{P}{N_0}\to \infty , \lambda_u\to \infty} q(\mathcal F_1^c,\mathcal F_2^c, \mathbf p) $. Note that when $\lambda_u \to \infty$, $q_2$ and $q$ become functions of $\mathbf T$ instead of $\mathbf p$.
From Theorem~\ref{Thm:generalKmulti}, we have the following lemma.
\begin{Lem}[Asymptotic Performance]
When  $\frac{P}{N_0}\to \infty$ and $\lambda_u\to \infty$, we have $ q_{ \infty}\left(\mathcal F_1^c, \mathcal F_2^c,\mathbf{T}\right)\\=q_{1, \infty}\left(\mathcal F_1^c, \mathcal F_2^c\right)+q_{2, \infty}\left(\mathcal F_2^c,\mathbf{T}\right)$, where
$q_{1, \infty}\left(\mathcal F_1^c, \mathcal F_2^c\right)= f_{1,K_1^c+\min\{K_1^b,F_1^b\},\infty}\bigg(\sum_{n\in \mathcal F_1^b}\frac{\min\{K_1^b,F_1^b\}}{F_1^b}a_n\\+\sum_{n\in \mathcal F_1^c}a_n \bigg)$ and $q_{2, \infty}\left(\mathcal F_2^c,\mathbf{T}\right)=\sum_{n\in \mathcal F_2^c}a_{n}f_{2,K_2^c,\infty}(T_n)$.
%\begin{align}
% q_{1, \infty}\left(\mathcal F_1^c, \mathcal F_2^c\right)=& f_{1,K_1^c+\min\{K_1^b,F_1^b\},\infty}\left(\sum_{n\in \mathcal F_1^c}a_n +\sum_{n\in \mathcal F_1^b}\frac{\min\{K_1^b,F_1^b\}}{F_1^b}a_n\right),\nonumber\\
% q_{2, \infty}\left(\mathcal F_2^c,\mathbf{T}\right)=&\sum_{n\in \mathcal F_2^c}a_{n}f_{2,K_2^c,\infty}(T_n).\nonumber
%\end{align}
Here, $f_{1,k,\infty}$ and  $f_{2,k,\infty}(T_n)$ are given by \eqref{eqn:f-1-k-infty} and \eqref{eqn:f-2-k-infty}, and  $T_n$ is  given by \eqref{eqn:def-T-n}.

\begin{proof}
Please refer to Appendix D.
\end{proof}
%As $\lambda_u\to \infty$, $q(\mathcal F_1^c,\mathcal F_2^c, \mathbf p)\to q_{\lambda_u,\infty}\left(\mathcal F_1^c, \mathcal F_2^c,\mathbf{T}\right)\triangleq  q_{1,\lambda_u,\infty}\left(\mathcal F_1^c, \mathcal F_2^c\right)+q_{2,\lambda_u,\infty}\left(\mathcal F_2^c,\mathbf{T}\right)$, where $q_{1,\lambda_u,\infty}\left(\mathcal F_1^c, \mathcal F_2^c\right)\triangleq\sum_{n\in \mathcal F_1^c}a_n f_{1,K_1}+\frac{K_1^b}{F_1^b}\sum_{n\in \mathcal F_1^b}a_nf_{1,K_1}$ and $q_{2,\lambda_u,\infty}\left(\mathcal F_2^c,\mathbf{T}\right)\triangleq\sum_{n\in \mathcal F_2^c}a_{n}f_{2,K_2^c}(T_n)$, with $f_{1,k}$ and $f_{2,k}(T_n)$  given by \eqref{eqn:f-1-k} and \eqref{eqn:f-2-k}, and  $T_n$   given by \eqref{eqn:def-T-n}.
\label{Lem:asym-perf}
\end{Lem}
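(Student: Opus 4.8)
The plan is to obtain Lemma~\ref{Lem:asym-perf} as a direct limiting case of Theorem~\ref{Thm:generalKmulti}, by passing to the limit $\frac{P}{N_0}\to\infty$ and $\lambda_u\to\infty$ term by term in \eqref{eq:CPrate_multifile_noise}. First I would treat the user density limit. As noted in the text, when $\lambda_u\to\infty$ the factors $\left(1+\frac{a_m\lambda_u}{3.5\lambda_1}\right)^{-4.5}\to 0$ and $\left(1+\frac{a_m\lambda_u}{3.5T_m\lambda_2}\right)^{-4.5}\to 0$, so from the explicit forms of $g(\cdot,\cdot)$ in Lemma~\ref{Lem:pmf-K-m} and of the p.m.f.\ in Lemma~\ref{Lem:pmf-K} one reads off that $\Pr[K_{1,n,0}^c=k^c]\to\mathbf 1\{k^c=K_1^c\}$, $\Pr[\overline K_{1,n,0}^b=k^b]\to\mathbf 1\{k^b=F_1^b\}$, $\Pr[\overline K_{1,n,0}^c=k^c]\to\mathbf 1\{k^c=K_1^c\}$, $\Pr[K_{1,n,0}^b=k^b]\to\mathbf 1\{k^b=F_1^b\}$, and $\Pr[K_{2,n,0}^c=k^c]\to\mathbf 1\{k^c=K_2^c\}$ (using $\sum_{i\in\mathcal I_n}p_i/T_n=1$ for the last one). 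Since all the finite sums over $k^c,k^b$ in \eqref{eq:CPrate_multifile_noise} are finite, I can exchange limit and sum freely: only the single surviving index in each p.m.f.\ contributes, collapsing the double sums in the first two lines to the single term $k^c=K_1^c$, $k^b=F_1^b$ with factor $f_{1,K_1^c+\min\{K_1^b,F_1^b\}}$, and the third line to $\sum_{n\in\mathcal F_2^c}a_n f_{2,K_2^c}(T_n)$. Collecting the coefficients of $f_{1,K_1^c+\min\{K_1^b,F_1^b\}}$, the macro-tier terms combine into $\big(\sum_{n\in\mathcal F_1^c}a_n+\sum_{n\in\mathcal F_1^b}\frac{\min\{K_1^b,F_1^b\}}{F_1^b}a_n\big)$.

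Next I would take the high SNR limit inside $f_{1,k}$ and $f_{2,k}(x)$. Inspecting \eqref{eqn:f-1-k} and \eqref{eqn:f-2-k}, the noise appears only through the factors $\exp\!\big(-(2^{k\tau/W}-1)d^{\alpha_1}\frac{N_0}{P_1}\big)$ and $\exp\!\big(-(2^{k\tau/W}-1)d^{\alpha_2}\frac{N_0}{P_2}\big)$. Under $P_1=\beta P$, $P_2=P$, $\frac{P}{N_0}\to\infty$, these exponentials tend to $1$ pointwise in $d$, while all remaining factors are independent of $\frac{P}{N_0}$ (the power ratios $P_2/P_1=1/\beta$ and $P_1/P_2=\beta$ are fixed) and are integrable in $d$ since the dominating Gaussian-type factor $\exp(-\pi\lambda_1 d^2)$ (resp.\ $\exp(-\pi\lambda_2 x d^2)$) decays at infinity and the integrands are bounded near $d=0$. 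So by dominated convergence (with the $\frac{P}{N_0}=\infty$ integrand itself as the dominating function, as the noise exponentials are $\le 1$) the integrals converge, and I would define $f_{1,k,\infty}$ and $f_{2,k,\infty}(x)$ in \eqref{eqn:f-1-k-infty} and \eqref{eqn:f-2-k-infty} as exactly these limiting integrals. Assembling, $q_{\infty}=q_{1,\infty}+q_{2,\infty}$ with the claimed expressions; the factoring of $q_{1,\infty}$ as $f_{1,K_1^c+\min\{K_1^b,F_1^b\},\infty}$ times the popularity sum is immediate because after the density limit $f_{1,\cdot}$ no longer depends on $n$.

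The main obstacle — really the only non-routine point — is justifying the interchange of the two limits and the integral/sum, i.e.\ making the convergence uniform or dominated rather than merely pointwise. For the user density limit this is trivial (finite sums), so the substantive step is the high SNR limit inside the improper integrals defining $f_{1,k}$ and $f_{2,k}(x)$: I would exhibit the explicit $\frac{P}{N_0}$-independent integrable envelope described above and invoke dominated convergence, which is clean because every $\frac{P}{N_0}$-dependent factor is a decreasing-in-$N_0$ exponential bounded by $1$. One should also note that the limits commute — taking $\lambda_u\to\infty$ first (collapsing p.m.f.s) and then $\frac{P}{N_0}\to\infty$ (inside $f_{1,k},f_{2,k}$) gives the same result as the reverse order, because the p.m.f.\ factors do not involve $\frac{P}{N_0}$ and the $f$-integrals do not involve $\lambda_u$ — so the joint limit $q_\infty$ is well defined. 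I would relegate the detailed verification to Appendix~D, as the statement indicates.
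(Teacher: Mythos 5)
Your proposal is correct and follows essentially the same route as the paper's Appendix D: pass to the limit term by term in Theorem~\ref{Thm:generalKmulti}, noting that the noise exponentials in \eqref{eqn:f-1-k} and \eqref{eqn:f-2-k} tend to $1$ and that the file-load p.m.f.s collapse to point masses at $K_1^c$, $F_1^b$ and $K_2^c$. If anything, your write-up is more careful than the paper's, which simply asserts $f_{1,k}\to f_{1,k,\infty}$ and $f_{2,k}(x)\to f_{2,k,\infty}(x)$ without spelling out the dominated-convergence justification you supply.
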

%\begin{proof}
%Please refer to Appendix D.
%\end{proof}

\begin{figure*}[!t]
\small{\begin{align}
f_{1,k,\infty}\triangleq& 2\pi\lambda_{1}\int_{0}^{\infty} d\exp\left(-\pi\lambda_{1}d^{2}\right)\exp\left(-\frac{2\pi\lambda_{2}}{\alpha_2}d^{\frac{2\alpha_1}{\alpha_2}}\left(\frac{1}{\beta}\left(2^{\frac{k\tau}{W}}-1\right)\right)^{\frac{2}{\alpha_2}}B\left(\frac{2}{\alpha_2},1-\frac{2}{\alpha_2}\right)\right)\nonumber\\
&  \times\exp\left(-\frac{2\pi\lambda_{1}}{\alpha_1}d^2\left(2^{\frac{k\tau}{W}}-1\right)^{\frac{2}{\alpha_1}}B^{'}\left(\frac{2}{\alpha_1},1-\frac{2}{\alpha_1},2^{-\frac{k\tau}{W}}\right)\right){\rm d}d.\label{eqn:f-1-k-infty}\\
f_{2,k,\infty}(x)\triangleq &2\pi\lambda_{2}x\int_{0}^{\infty} d\exp\left(-\pi\lambda_{2}xd^{2}\right)\exp\left(-\frac{2\pi\lambda_{1}}{\alpha_1}d^{\frac{2\alpha_2}{\alpha_1}}\left(\beta\left(2^{\frac{k\tau}{W}}-1\right)\right)^{\frac{2}{\alpha_1}}B\left(\frac{2}{\alpha_1},1-\frac{2}{\alpha_1}\right)\right)\nonumber\\
& \times
\exp\left(-\frac{2\pi\lambda_{2}}{\alpha_2}d^2\left(2^{\frac{k\tau}{W}}-1\right)^{\frac{2}{\alpha_2}}\left(xB^{'}\left(\frac{2}{\alpha_2},1-\frac{2}{\alpha_2},2^{-\frac{k\tau}{W}}\right)+\left(1-x\right)B\left(\frac{2}{\alpha_2},1-\frac{2}{\alpha_2}\right)\right)\right)
{\rm d}d.\label{eqn:f-2-k-infty}
\end{align}}
\normalsize \hrulefill
\end{figure*}

Note that $f_{1,K_1^c+\min\{K_1^b,F_1^b\},\infty}$ represents the successful transmission probability for file $n\in \mathcal F_1^c\cup\mathcal F_1^b$ (given that this file is transmitted), and $f_{2,K_2^c,\infty}(T_n)$ represents the successful transmission probability for file $n\in \mathcal F_2^c$, in the asymptotic region.
For given $(\mathcal F_1^c,\mathcal F_2^c, \mathbf T)$,   we interpret Lemma~\ref{Lem:asym-perf} below.   When $F_1^b\leq K_1^b$, the successful transmission probability of file $n_1\in\mathcal F_1^c$ is the same as that of file $n_2\in\mathcal F_1^b$. In other words, when backhaul capacity is sufficient, storing a file at a macro-BS   or retrieving the file  via the backhaul link makes no difference in successful transmission probability.
When $F_1^b> K_1^b$, the successful transmission probability of file $n_1\in\mathcal F_1^c$ is greater than that  of file $n_2\in\mathcal F_1^b$. In other words, when backhaul capacity is limited, storing a file at a macro-BS is better than  retrieving the file via the backhaul link. Note  that $f_{2,k,\infty}(x)$ is an increasing function (Please refer to Appendix E for the proof). Thus, for any $n_1,n_2\in\mathcal F_2^c$ satisfying $T_{n_1}>T_{n_2}$,  the successful transmission probability of file $n_1\in\mathcal F_2^c$ is greater than that  of file $n_2\in\mathcal F_2^c$. That is, a file of higher  probability being cached at a pico-BS has higher  successful transmission probability.  Later, in Section~\ref{Sec:new-opt}, we shall see that the structure of $q_{\infty}\left(\mathcal F_1^c, \mathcal F_2^c,\mathbf{T}\right)$ facilitates the optimization of $q\left(\mathcal F_1^c, \mathcal F_2^c,\mathbf{p}\right)$.

Next, we further study the symmetric  case where $\alpha_1=\alpha_2 \triangleq \alpha$ in the high SNR and user density region. From Lemma~\ref{Lem:asym-perf}, we have the following lemma.
\begin{Lem}[Asymptotic Performance When $\alpha_1=\alpha_2$]
When $\alpha_1=\alpha_2 =\alpha$,   $\frac{P}{N_0}\to \infty$ and $\lambda_u\to \infty$,  we have $q_{ \infty}\left(\mathcal F_1^c, \mathcal F_2^c,\mathbf{T}\right)= q_{1, \infty}\left(\mathcal F_1^c, \mathcal F_2^c\right)+q_{2, \infty}\left(\mathcal F_2^c,\mathbf{T}\right)$, where $q_{1, \infty }\left(\mathcal F_1^c, \mathcal F_2^c\right)= \frac{1}{\omega_{K_1^c+\min\{K_1^b, F_1^b\}}}\left(\sum_{n\in \mathcal F_1^c}a_n +\frac{\min\{K_1^b,F_1^b\}}{F_1^b}\sum_{n\in \mathcal F_1^b}a_n\right)$ and  $q_{2, \infty}\left(\mathcal F_2^c,\mathbf{T}\right)= \sum_{n\in \mathcal F_2^c }\frac{a_nT_n}{\theta_{2,K_2^c}+\theta_{1,K_2^c}T_n}$.
%\begin{align}
%q_{1, \infty }\left(\mathcal F_1^c, \mathcal F_2^c\right)= &\frac{1}{\omega_{K_1^c+\min\{K_1^b, F_1^b\}}}\left(\sum_{n\in \mathcal F_1^c}a_n +\frac{\min\{K_1^b,F_1^b\}}{F_1^b}\sum_{n\in \mathcal F_1^b}a_n\right),\nonumber\\
%q_{2, \infty}\left(\mathcal F_2^c,\mathbf{T}\right)=& \sum_{n\in \mathcal F_2^c }\frac{a_nT_n}{\theta_{2,K_2^c}+\theta_{1,K_2^c}T_n}.\nonumber
%\end{align}
Here, $T_n$ is given by \eqref{eqn:def-T-n}, and $\omega_{k}$, $\theta_{1,k}$ and  $\theta_{2,k}$ are given by
\begin{align}
\omega_{k}=&\frac{2}{\alpha}\left(2^{\frac{k\tau}{W}}-1\right)^{\frac{2}{\alpha}}B'\left(\frac{2}{\alpha},1-\frac{2}{\alpha},2^{\frac{-k\tau}{W}}\right)
+\frac{2\lambda_2}{\alpha\lambda_1}\left(\frac{1}{\beta}\left(2^{\frac{k\tau}{W}}-1\right)\right)^{\frac{2}{\alpha}}B\left(\frac{2}{\alpha},1-\frac{2}{\alpha}\right)+1,\label{eqn:c_1_k}\\
\theta_{1,k}=&\frac{2}{\alpha}\left(2^{\frac{k\tau}{W}}-1\right)^{\frac{2}{\alpha}}B'\left(\frac{2}{\alpha},1-\frac{2}{\alpha},2^{\frac{-k\tau}{W}}\right)-\frac{2}{\alpha}\left(2^{\frac{k\tau}{W}}-1\right)^{\frac{2}{\alpha}}B\left(\frac{2}{\alpha},1-\frac{2}{\alpha}\right)+1 ,\label{eqn:c_2_1_k}\\
\theta_{2,k}=&\frac{2}{\alpha}\left(2^{\frac{k\tau}{W}}-1\right)^{\frac{2}{\alpha}}B\left(\frac{2}{\alpha},1-\frac{2}{\alpha}\right)+\frac{2\lambda_1}{\alpha\lambda_2}\left(\beta\left(2^{\frac{k\tau}{W}}-1\right)\right)^{\frac{2}{\alpha}}B\left(\frac{2}{\alpha},1-\frac{2}{\alpha}\right).\label{eqn:c_2_2_k}
\end{align}
\label{Lem:asym-perf-v2}
\end{Lem}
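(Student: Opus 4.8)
The plan is to derive this directly from Lemma~\ref{Lem:asym-perf} by specializing the integral representations \eqref{eqn:f-1-k-infty} and \eqref{eqn:f-2-k-infty} of $f_{1,k,\infty}$ and $f_{2,k,\infty}(x)$ to the case $\alpha_1=\alpha_2=\alpha$. The key structural observation is that in the general asymptotic expressions the exponent of each integrand is a sum of a term in $d^2$ (serving BS and same-tier interferers) and a term in $d^{2\alpha_1/\alpha_2}$ or $d^{2\alpha_2/\alpha_1}$ (the cross-tier interferers). When $\alpha_1=\alpha_2$ both exponents equal $2$, so the whole exponent collapses to a single quadratic $-\pi\lambda_j\,c\,d^2$, and each integral reduces to the elementary Gaussian moment $\int_{0}^{\infty} d\,e^{-\pi\lambda_j c d^2}\,{\rm d}d=\frac{1}{2\pi\lambda_j c}$.

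First I would handle $q_{1,\infty}$. Setting $\alpha_1=\alpha_2=\alpha$ in \eqref{eqn:f-1-k-infty}, I collect the coefficient of $d^2$ in the exponent: $\pi\lambda_1$ from the $\exp(-\pi\lambda_1 d^2)$ factor, plus $\frac{2\pi\lambda_1}{\alpha}(2^{\frac{k\tau}{W}}-1)^{2/\alpha}B'(\frac{2}{\alpha},1-\frac{2}{\alpha},2^{-\frac{k\tau}{W}})$ from the same-tier interference, plus $\frac{2\pi\lambda_2}{\alpha}(\frac{1}{\beta}(2^{\frac{k\tau}{W}}-1))^{2/\alpha}B(\frac{2}{\alpha},1-\frac{2}{\alpha})$ from the pico-tier interference. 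Factoring out $\pi\lambda_1$ gives exactly $\pi\lambda_1\omega_k$ with $\omega_k$ as in \eqref{eqn:c_1_k}, so the Gaussian moment yields $f_{1,k,\infty}=2\pi\lambda_1\cdot\frac{1}{2\pi\lambda_1\omega_k}=\frac{1}{\omega_k}$. Substituting $k=K_1^c+\min\{K_1^b,F_1^b\}$ into the expression for $q_{1,\infty}$ from Lemma~\ref{Lem:asym-perf} then gives the claimed formula.

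Next I would handle $q_{2,\infty}$. Setting $\alpha_1=\alpha_2=\alpha$ in \eqref{eqn:f-2-k-infty}, the coefficient of $d^2$ now depends on $x$; I would group it by powers of $x$. The part multiplying $x$ gathers $\pi\lambda_2$ from $\exp(-\pi\lambda_2 x d^2)$, the $\frac{2\pi\lambda_2}{\alpha}(2^{\frac{k\tau}{W}}-1)^{2/\alpha}B'(\frac{2}{\alpha},1-\frac{2}{\alpha},2^{-\frac{k\tau}{W}})$ weighted by $x$, and $-\frac{2\pi\lambda_2}{\alpha}(2^{\frac{k\tau}{W}}-1)^{2/\alpha}B(\frac{2}{\alpha},1-\frac{2}{\alpha})$ from the $-x$ part of the $(1-x)$ weight, which after factoring $\pi\lambda_2$ is exactly $\theta_{1,k}$ as in \eqref{eqn:c_2_1_k}; the $x$-independent part gathers the cross-tier term $\frac{2\pi\lambda_1}{\alpha}(\beta(2^{\frac{k\tau}{W}}-1))^{2/\alpha}B(\frac{2}{\alpha},1-\frac{2}{\alpha})$ and the $\frac{2\pi\lambda_2}{\alpha}(2^{\frac{k\tau}{W}}-1)^{2/\alpha}B(\frac{2}{\alpha},1-\frac{2}{\alpha})$ from the constant part of the $(1-x)$ weight, which is $\pi\lambda_2\theta_{2,k}$ with $\theta_{2,k}$ as in \eqref{eqn:c_2_2_k}. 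Hence the exponent is $-\pi\lambda_2(x\theta_{1,k}+\theta_{2,k})d^2$, and the Gaussian moment gives $f_{2,k,\infty}(x)=2\pi\lambda_2 x\cdot\frac{1}{2\pi\lambda_2(x\theta_{1,k}+\theta_{2,k})}=\frac{x}{x\theta_{1,k}+\theta_{2,k}}$. Plugging $k=K_2^c$, $x=T_n$ (with $T_n$ from \eqref{eqn:def-T-n}), and summing $a_n$-weighted over $n\in\mathcal F_2^c$ gives $q_{2,\infty}(\mathcal F_2^c,\mathbf T)=\sum_{n\in\mathcal F_2^c}\frac{a_n T_n}{\theta_{2,K_2^c}+\theta_{1,K_2^c}T_n}$, and adding the two pieces completes the proof.

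The only step requiring care — more bookkeeping than genuine difficulty — is verifying that the coefficient groupings reproduce \eqref{eqn:c_1_k}--\eqref{eqn:c_2_2_k} exactly: one must keep straight which of the three interferer types seen by a pico-user contributes a complete Beta function $B(\frac{2}{\alpha},1-\frac{2}{\alpha})$ versus a complementary incomplete Beta function $B'(\frac{2}{\alpha},1-\frac{2}{\alpha},2^{-\frac{k\tau}{W}})$, and correctly separate the $x$ and $1-x$ weights multiplying the pico-tier interference term in \eqref{eqn:f-2-k-infty}.
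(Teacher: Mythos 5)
Your proposal is correct and follows essentially the same route as the paper's Appendix D: when $\alpha_1=\alpha_2=\alpha$ the exponents in \eqref{eqn:f-1-k-infty} and \eqref{eqn:f-2-k-infty} collapse to $-\pi\lambda_1\omega_k d^2$ and $-\pi\lambda_2(\theta_{2,k}+x\theta_{1,k})d^2$ respectively, and the identity $\int_{0}^{\infty}d\exp(-cd^{2})\,{\rm d}d=\frac{1}{2c}$ yields $f_{1,k,\infty}=\frac{1}{\omega_k}$ and $f_{2,k,\infty}(x)=\frac{x}{\theta_{2,k}+\theta_{1,k}x}$, after which substitution into Lemma~\ref{Lem:asym-perf} finishes the argument. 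Your coefficient groupings by powers of $x$ reproduce \eqref{eqn:c_1_k}--\eqref{eqn:c_2_2_k} exactly, so the bookkeeping checks out.
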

\begin{proof}
Please refer to Appendix D.
\end{proof}

From Lemma~\ref{Lem:asym-perf-v2}, we can see that   in the high SNR and user density region,  when $\alpha_1=\alpha_2 =\alpha$, the impact of the physical layer parameters $\alpha$, $\beta$ and $W$, captured by $\omega_k$, $\theta_{1,K}$ and $\theta_{2,K}$, and the impact of the design parameters $\left(\mathcal F_1^c,\mathcal F_2^c,\mathbf{p}\right)$ on the successful transmission probability $q_{ \infty}\left(\mathcal F_1^c, \mathcal F_2^c,\mathbf{T}\right)$ can be easily separated. Later, in Section~\ref{Sec:new-opt}, we shall see that this separation greatly facilitates the optimization of $q\left(\mathcal F_1^c, \mathcal F_2^c,\mathbf{p}\right)$.

Fig.~\ref{fig:verification-Kmulti}  plots the successful transmission probability  versus the  transmit  SNR $\frac{P}{N_0}$ and the user density $\lambda_u$.
 Fig.~\ref{fig:verification-Kmulti}  verifies Theorem~\ref{Thm:generalKmulti} and Lemma~\ref{Lem:asym-perf} (Lemma~\ref{Lem:asym-perf-v2}), and demonstrates the accuracy of  the approximations adopted.
 Fig.~\ref{fig:verification-Kmulti} also indicates that  $q_{\infty}\left(\mathcal F_1^c, \mathcal F_2^c,\mathbf{T}\right)$  provides a simple and good approximation for   $q\left(\mathcal F_1^c, \mathcal F_2^c,\mathbf{T}\right)$  in  the high transmit SNR region (e.g., $\frac{P}{N_0}\geq 100$ dB) and the high user density region (e.g., $\lambda_u\geq 3\times 10^{-5}$).

%On the other hand, Fig.~\ref{fig:verification-Kmulti} shows that  $q_K(\mathbf p)$  is greater than  $q_K^{uc}(\mathbf p)$; $q_K(\mathbf p)$ and  $q_K^{uc}(\mathbf p)$ both decrease with  $\lambda_u$; and  the gap   $q_K(\mathbf p)-q_K^{uc}(\mathbf p)$ increases with $\lambda_u$. These observations verify the discussions  in Section~\ref{Subsec:perfm} and demonstrate the advantage of the proposed multicasting scheme over the traditional unicasting scheme.

 \begin{figure}[t]
\begin{center}
 \includegraphics[width=7cm]{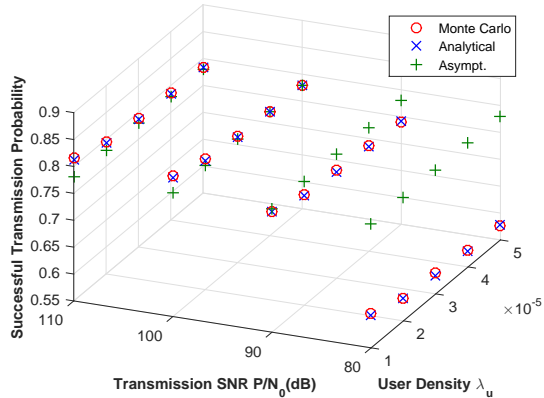}
  \end{center}
    \caption{\small{Successful transmission probability versus transmit SNR $\frac{P}{N_0}$  and user density $\lambda_u$.  $N=10$, $K_1^c=3$, $K_2^c=2$, $K_1^b=1$, $\mathcal F_1^c=\{1,2,3\}$, $\mathcal F_1^b=\{7,8,9,10\}$, $\mathcal F_2^c=\{4,5,6\}$, $\mathbf p = (0.7, 0.2, 0.1)$, $\mathcal N_1=\{4,5\}$, $\mathcal N_2=\{4,6\}$, $\mathcal N_3=\{5,6\}$, $\lambda_{1}=5\times10^{-7}$, $\lambda_{2}=3\times10^{-6}$, $P_1=10^{1.5}P$, $P_2=P$, $\alpha_1=\alpha_2=4$, $W = 20\times 10^6$, $\tau =  2\times10^4$, and $a_n=\frac{n^{-\gamma}}{\sum_{n\in \mathcal N}n^{-\gamma}}$  with $\gamma =1$. In this paper, to simulate the large-scale HetNet, we use a 2-dimensional square of area $15000^2$, which is sufficiently large in our case.
    Note that if the simulation window size is not large enough, the observed interference would be smaller than the true interference due to the edge effect, resulting in larger successful transmission probability than the true value.
    In addition, the Monte Carlo results are obtained by averaging over $10^5$ random realizations.}}
\label{fig:verification-Kmulti}
\end{figure}

%,  $q_{\lambda_u,\infty}\left(\mathcal F_1^c, \mathcal F_2^c,\mathbf{T}\right)$ is a function of $\mathbf T$ instead of $\mathbf p$ and has a much simplified form.

%The caching, scheduling  and multicasting design fundamentally affects the network performance  via the caching design $\mathcal F_1^c$, $\mathcal F_2^c$ and $\mathbf p(\mathcal F_2^c)$.
%We would like to consider the optimal caching, scheduling  and multicasting to maximize the successful transmission probability by carefully optimizing the design $\mathcal F_1^c$, $\mathcal F_2^c$ and $\mathbf p(\mathcal F_2^c)$.  The maximization of $q(\mathcal F_1^c,\mathcal F_2^c, \mathbf p)$ under the constraints in \eqref{eqn:cache-constr}, \eqref{eqn:cache-constr-indiv} and \eqref{eqn:cache-constr-sum}, which is a mixed discrete-continuos optimization with a complex objective function,  is not tractable and brute force solutions do not offer many design insights. Therefore, in the following, we focus on optimizing the asymptotic successful transmission probability in the high user density regime,to get first-order insights into the design of cache-enabled Hetnets. First, we formulate the asymptotic performance optimization problem as follows.

\section{Optimization Problem Formulation}

In this section,  we  formulate the optimal  caching and multicasting design problem  to maximize  the successful transmission probability $q\left(\mathcal F_1^c, \mathcal F_2^c,\mathbf{p}\right)$, which is a mixed discrete-continuous optimization problem. To facilitate the solution of this challenging  optimization problem in the next section,  we also formulate the   asymptotically optimal caching and multicasting design problem  to maximize the asymptotic successful transmission probability $q_{\infty}\left(\mathcal F_1^c, \mathcal F_2^c,\mathbf{T}\right)$ in the high SNR and user density region.

\subsection{Optimization Problem}

The caching and multicasting design fundamentally affects the successful transmission probability  via the design parameters $\left(\mathcal F_1^c,\mathcal F_2^c,\mathbf{p}\right)$.
We would like to maximize $q\left(\mathcal F_1^c, \mathcal F_2^c,\mathbf p\right)$ by carefully optimizing  $\left(\mathcal F_1^c,\mathcal F_2^c,\mathbf{p}\right)$. %Specifically, the optimization problem is formulated as follows.

\begin{Prob} [Performance Optimization]\label{prob:opt}
\begin{align}
q^*\triangleq \max_{\mathcal F_1^c, \mathcal F_2^c, \mathbf{p}} &\quad  q\left(\mathcal F_1^c, \mathcal F_2^c,\mathbf p\right)\nonumber\\
s.t. & \quad \eqref{eqn:cache-constr}, \eqref{eqn:cache-constr-indiv}, \eqref{eqn:cache-constr-sum},\nonumber
\end{align}
where $q\left(\mathcal F_1^c, \mathcal F_2^c,\mathbf p\right)$ is given by \eqref{eq:CPrate_multifile_noise}.
%The optimal solution is written as $\left(\mathcal F_1^{c*}, \mathcal F_2^{c*},\mathbf p^*\right)$.
\end{Prob}

%Note that in this paper, we focus on the successful transmission probability  maximization  to get first-order insights into the design of cache-enabled Hetnets.\footnote{The optimal solution to Problem~\ref{prob:opt} may result in starvation of requesters for files with low caching probabilities. We assume these starving users can be satisfied  through other service mechanisms at additional backhaul or delay costs.}  The  optimization framework in this paper can be easily applied to address  the  QoS requirements in terms of the successful transmission probability of each file, e.g., $q_{K,n}(\mathbf p)\geq Q_{K,n}$ for all $n\in \mathcal N$.
%Later, in Sections~\ref{Subsec:K1-opt} and \ref{Subsec:K-opt}, we shall solve Problem~\ref{prob:opt} for the unit cache  size ($K=1$) and the general cache size ($K>1$), respectively.

Note that Problem~\ref{prob:opt} is a mixed discrete-continuous optimization problem with two main challenges. One is the choice of the  sets of files $\mathcal F_1^c$ and $\mathcal F_2^c$ (discrete variables)  stored in the two tiers, and the other is the choice of the caching distribution $\mathbf p(\mathcal F_2^c)$  (continuous variables) of random caching for the 2nd tier. We thus propose an equivalent alternative formulation of Problem~\ref{prob:opt}  which naturally subdivides Problem~\ref{prob:opt}  according to these two aspects.

\begin{Prob} [Equivalent Optimization]\label{prob:opt-eq}
\begin{align}
q^*=\max_{\mathcal F_1^c, \mathcal F_2^c} &\quad q_{1}(\mathcal F_1^c, \mathcal F_2^c)+q_{2}^*\left(\mathcal F_2^c\right)\label{eqn:prob-caching}\\
s.t. & \quad \eqref{eqn:cache-constr}.\nonumber
\end{align}
%where %The optimal solution is written as $\left(\mathcal F_1^{c*}, \mathcal F_2^{c*}\right)$ and $q_{2}^*(\mathcal F_2^c)$ is given by
\begin{align}
q_{2}^*(\mathcal F_2^c)\triangleq \max_{\mathbf p} &\quad  q_{2}\left(\mathcal F_2^c,\mathbf p\right)\label{eqn:prob-p-F_2^c}\\
s.t. & \quad  \eqref{eqn:cache-constr-indiv}, \eqref{eqn:cache-constr-sum}.\nonumber
\end{align}
%where the optimal solution  is written as $\mathbf p^*(\mathcal F_2^c)$. Let $\mathbf p^*=\mathbf p^*(\mathcal F_2^{c*})$.
\end{Prob}

For given $\mathcal F_2^c$, the optimization problem in \eqref{eqn:prob-p-F_2^c} is in general a non-convex optimization problem with a large number of optimization variables (i.e., $I=\binom{F_2^c}{K_2^c}$ optimization variables), and it is difficult to obtain the global optimal solution and calculate  $q_{2}^*\left(\mathcal F_2^c\right)$.
Even given $q_{2}^*\left(\mathcal F_2^c\right)$, the optimization problem in \eqref{eqn:prob-caching} is a discrete optimization problem over a very large constraint set, and is NP-complete in general.
Therefore,  Problem~\ref{prob:opt-eq} is still  very challenging.

%
%In the following, we propose a two-step optimization framework to obtain an asymptotically optimal solution with manageable complexity and superior performance. Specifically, we first identify a set of asymptotically optimal solutions in the high user density region. Then, we obtain the asymptotically optimal solution achieving the optimal successful transmission probability in the general region among the set of asymptotically optimal solutions, referred to as the best asymptotically optimal solution.

\subsection{Asymptotic Optimization Problem}\label{subsec:asym-opt-prob}

To facilitate the solution of the challenging mixed discrete-continuous  optimization problem, we also formulate the optimization of the asymptotic successful transmission probability $q_{\infty}\left(\mathcal F_1^c, \mathcal F_2^c,\mathbf{T}\right)$ given in Lemma~\ref{Lem:asym-perf}, i.e., which has a much simpler form than $q\left(\mathcal F_1^c, \mathcal F_2^c,\mathbf p\right)$ given in Theorem~\ref{Thm:generalKmulti}. Equivalently, we can consider the asymptotic version of Problem~\ref{prob:opt-eq} in the high SNR and user density region.

%\begin{Prob} [Asymptotic Performance Optimization]\label{prob:opt-asymp}
%\begin{align}
%q_{\infty}^*\triangleq \max_{\mathcal F_1^c, \mathcal F_2^c, \mathbf{p}} &\quad  q_{\infty}\left(\mathcal F_1^c, \mathcal F_2^c,\mathbf{T}\right)\nonumber\\
%s.t. & \quad \eqref{eqn:cache-constr}, \eqref{eqn:cache-constr-indiv}, \eqref{eqn:cache-constr-sum}, \eqref{eqn:def-T-n}, \nonumber
%\end{align}
%where  $q_{\infty}\left(\mathcal F_1^c, \mathcal F_2^c,\mathbf{T}\right)$ is given by  Lemma~\ref{Lem:asym-perf}. The optimal solution is written as $\left(\mathcal F_1^{c*}, \mathcal F_2^{c*},\mathbf p^*\right)$.
%\end{Prob}
%
% Similarly, Problem~\ref{prob:opt-asymp} is also a mixed discrete-continuous optimization.
% By subdividing the discrete and continuous parts and exploring the relationship\footnote{Note that different from $q(\mathcal F_1^c,\mathcal F_2^c, \mathbf p)$,
%$q_{\infty}\left(\mathcal F_1^c, \mathcal F_2^c,\mathbf{T}\right)$   is a function of $\mathbf T$,  instead of $\mathbf p$.} between $\mathbf T$ and $\mathbf p$, we introduce  a new optimization problem  to further simplify Problem~\ref{prob:opt-asymp}.

\begin{Prob} [Asymptotic Optimization]\label{prob:opt-asymp-eq}
\begin{align}
q_{\infty}^*=\max_{\mathcal F_1^c, \mathcal F_2^c} &\quad q_{1,\infty}(\mathcal F_1^c, \mathcal F_2^c)+q_{2,\infty}^*\left(\mathcal F_2^c\right)\label{eqn:prob-asymp-caching}\\
s.t. & \quad \eqref{eqn:cache-constr}.\nonumber
\end{align}
The optimal solution to the optimization in \eqref{eqn:prob-asymp-caching} is written as $\left(\mathcal F_1^{c*}, \mathcal F_2^{c*}\right)$ and $q_{2,\infty}^*(\mathcal F_2^c)$ is given by
\begin{align}
q_{2,\infty}^*(\mathcal F_2^c)\triangleq \max_{\mathbf{p}} &\quad  q_{2,\infty}\left(\mathcal F_2^c,\mathbf{p}\right)\label{eqn:prob-p-asymp-F_2^c-p}\\
s.t. & \quad \eqref{eqn:cache-constr-indiv}, \eqref{eqn:cache-constr-sum}, \eqref{eqn:def-T-n},\nonumber
\end{align}
where the optimal solution to the optimization in \eqref{eqn:prob-p-asymp-F_2^c-p} is written as $\mathbf p^*(\mathcal F_2^c)$. The optimal solution to Problem \ref{prob:opt-asymp-eq} is given by $(\mathcal F_1^{c*},\mathcal F_2^{c*},\mathbf p^*(\mathcal F_2^{c*}))$, which is the asymptotic optimal solution to Problem \ref{prob:opt-eq} (Problem \ref{prob:opt}).
\end{Prob}

Based on Lemma 2 in \cite{arXivSGCaching15}, we know that  the optimization in \eqref{eqn:prob-p-asymp-F_2^c-p} is equivalent to the following optimization for given $\mathcal F_2^c$
\begin{align}
q_{2,\infty}^*(\mathcal F_2^c)\triangleq \max_{\mathbf{T}} &\quad  q_{2,\infty}\left(\mathcal F_2^c,\mathbf{T}\right)\label{eqn:prob-p-asymp-F_2^c}\\
s.t. & \quad \quad 0\leq T_n\leq 1,\ n\in \mathcal F_2^c, \label{eqn:cache-constr-indiv-t}\\
&\quad\sum_{n\in \mathcal F_2^c}T_n=K_2^c, \label{eqn:cache-constr-sum-t}
\end{align}
where the optimal solution  is written as $\mathbf T^*(\mathcal F_2^c)$.
In addition, any $\mathbf p^*(\mathcal F_2^c)$ in convex polyhedron $\mathcal P^*(\mathcal F_2^c)\triangleq \{\mathbf p^*(\mathcal F_2^c): \eqref{eqn:cache-constr-indiv}, \eqref{eqn:cache-constr-sum},  \eqref{eqn:def-T-n*} \}$ is an optimal solution to the optimization in \eqref{eqn:prob-p-asymp-F_2^c-p}, where \eqref{eqn:def-T-n*} is given by:
\begin{align}
\sum_{i\in \mathcal I_n}p_i^*(\mathcal F_2^c)=T_n^*(\mathcal F_2^c), \ n\in \mathcal F_2^c.\label{eqn:def-T-n*}
\end{align}
The vertices of the convex polyhedron $\mathcal P^{*}(\mathcal F_2^c)$ can be obtained based on the simplex method, and any $\mathbf p^*(\mathcal F_2^{c}) \in \mathcal P^{*}(\mathcal F_2^c)$ can be constructed from all the vertices using convex combination. Thus, when optimizing  the asymptotic performance for given $\mathcal F_2^c$, we can  focus on  the optimization in \eqref{eqn:prob-p-asymp-F_2^c} instead of the optimization in \eqref{eqn:prob-p-asymp-F_2^c-p}.

\section{Near Optimal Solution}\label{Sec:new-opt}

In this section, we propose a two-step optimization framework to obtain a near optimal solution with manageable complexity and superior performance in the general region.  We first characterize the structural properties of the asymptotically optimal solutions. Then, based on these properties, we obtain a near optimal solution in the general region.

\subsection{Asymptotically Optimal Solution}

 In this part, we study  the  continuous part  and the discrete part of the asymptotic optimization in Problem~\ref{prob:opt-asymp-eq}, respectively, to obtain design insights into the solution in the general region.

\subsubsection{Continuous  Optimization}  As the structure of $q_{2,\infty}\left(\mathcal F_2^c,\mathbf{T}\right)$ is very complex, it is difficult to obtain the closed-form optimal solution $\mathbf T^*(\mathcal F_2^c)$ to the optimization in \eqref{eqn:prob-p-asymp-F_2^c}. By exploring the structural properties of $q_{2,\infty}\left(\mathcal F_2^c,\mathbf{T}\right)$, we know that files of higher popularity get more storage resources.
\begin{Lem} [Structural Property of Optimization in \eqref{eqn:prob-p-asymp-F_2^c}]
Given any $\mathcal F_2^c\subseteq \mathcal N$ satisfying  $F_2^c\geq K_2^c$ and  $n_1, n_2 \in \mathcal F_2^c$, if $n_1<n_2$, then  $T_{n_1}^*(\mathcal F_2^c)\geq T_{n_2}^*(\mathcal F_2^c)$.\label{Lem:mono-general-asym}
\end{Lem}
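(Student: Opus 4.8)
The plan is to prove Lemma~\ref{Lem:mono-general-asym} by a short interchange (exchange) argument that exploits the separable structure of the objective in the optimization \eqref{eqn:prob-p-asymp-F_2^c} together with the permutation symmetry of its feasible region. Recall from Lemma~\ref{Lem:asym-perf} that the objective is
\[
q_{2,\infty}\left(\mathcal F_2^c,\mathbf{T}\right)=\sum_{n\in\mathcal F_2^c}a_{n}\,f_{2,K_2^c,\infty}(T_n),
\]
i.e., a weighted sum of copies of a single common scalar function $f_{2,K_2^c,\infty}(\cdot)$, one copy per coordinate $T_n$, with weights $a_n$; and that the constraints \eqref{eqn:cache-constr-indiv-t}--\eqref{eqn:cache-constr-sum-t} are invariant under any permutation of the coordinates of $\mathbf{T}$. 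Two facts will be used: first, that $f_{2,K_2^c,\infty}(\cdot)$ is strictly increasing on $[0,1]$, which is the statement proved in Appendix~E by differentiating \eqref{eqn:f-2-k-infty}; and second, that $a_1>a_2>\cdots>a_N$ (from the network model), so $n_1<n_2$ implies $a_{n_1}>a_{n_2}$. Note also that, since $F_2^c\ge K_2^c$, the feasible set is nonempty and compact and the objective is continuous, so an optimal $\mathbf T^*(\mathcal F_2^c)$ indeed exists.

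Next I would argue by contradiction. Suppose $\mathbf T^*(\mathcal F_2^c)$ is optimal for \eqref{eqn:prob-p-asymp-F_2^c} but there are $n_1,n_2\in\mathcal F_2^c$ with $n_1<n_2$ and $T_{n_1}^*(\mathcal F_2^c)<T_{n_2}^*(\mathcal F_2^c)$. Define $\tilde{\mathbf T}$ by swapping these two coordinates, i.e., $\tilde T_{n_1}=T_{n_2}^*(\mathcal F_2^c)$, $\tilde T_{n_2}=T_{n_1}^*(\mathcal F_2^c)$, and $\tilde T_n=T_n^*(\mathcal F_2^c)$ otherwise; by the permutation symmetry noted above, $\tilde{\mathbf T}$ is feasible. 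All terms except the $n_1$ and $n_2$ terms cancel in the difference of objective values, so
\[
q_{2,\infty}(\mathcal F_2^c,\tilde{\mathbf T})-q_{2,\infty}(\mathcal F_2^c,\mathbf T^*)=\left(a_{n_1}-a_{n_2}\right)\left(f_{2,K_2^c,\infty}(T_{n_2}^*)-f_{2,K_2^c,\infty}(T_{n_1}^*)\right).
\]
Here $a_{n_1}-a_{n_2}>0$, and by the strict monotonicity of $f_{2,K_2^c,\infty}$ together with $T_{n_2}^*>T_{n_1}^*$ the second factor is also strictly positive, so the difference is strictly positive. This contradicts the optimality of $\mathbf T^*(\mathcal F_2^c)$. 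Hence $T_{n_1}^*(\mathcal F_2^c)\ge T_{n_2}^*(\mathcal F_2^c)$ whenever $n_1<n_2$, which is the claim.

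The argument is essentially elementary: the only non-routine ingredient is the strict monotonicity of $f_{2,K_2^c,\infty}(\cdot)$, which is where the real work lies and which is handled separately in Appendix~E; everything else — feasibility of the swapped vector, cancellation of the untouched summands, and the signs of the two factors — is immediate. The one point to state carefully is that the conclusion holds for \emph{every} optimal solution (not merely that an ordered optimal solution exists), which is exactly what the contradiction delivers, since any violating optimal point would admit a strictly better feasible neighbor.
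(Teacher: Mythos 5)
Your proposal is correct and matches the paper's own proof in Appendix~E essentially step for step: both rely on the monotonicity of $f_{2,K_2^c,\infty}(\cdot)$ (the paper's Lemma~\ref{Lem:monotonicity-f-2-k}) and then run the same swap-and-contradict argument on the two coordinates $T_{n_1}$ and $T_{n_2}$. The only cosmetic difference is that the paper first invokes optimality to bound the objective difference by zero and then contradicts monotonicity, whereas you show the swapped point is strictly better directly; these are the same argument.
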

\begin{proof} Please refer to Appendix E.
\end{proof}

Now, we focus on obtaining a numerical solution to the optimization in \eqref{eqn:prob-p-asymp-F_2^c}.
For given $\mathcal F_2^c\subseteq \mathcal N$ satisfying  $F_2^c\geq K_2^c$, the optimization in \eqref{eqn:prob-p-asymp-F_2^c} is a continuous optimization of a differentiable  function $q_{2,\infty}\left(\mathcal F_2^c,\mathbf{T}\right)$ over a convex set. In general, it is difficult to  show the convexity of $f_{2,k,\infty}(x)$ in \eqref{eqn:f-2-k-infty}.
 A stationary point  to the optimization in \eqref{eqn:prob-p-asymp-F_2^c} can be obtained using standard gradient projection methods\cite[pp. 223]{Bertsekasbooknonlinear:99}.
Here, we consider the diminishing stepsize\cite[pp. 227]{Bertsekasbooknonlinear:99} satisfying
\begin{align}
\epsilon(t)\to 0\ \text{as}\ t\to \infty,\  \sum_{t=1}^{\infty}\epsilon(t)=\infty, \sum_{t=1}^{\infty}\epsilon(t)^2<\infty, \label{eqn:stepcond}
\end{align}
and propose Algorithm~\ref{alg:local}. In Step 2 of Algorithm~\ref{alg:local}, $\frac {\partial q_{2,\infty}\left(\mathcal F_2^c,\mathbf{T}(t)\right)}{\partial T_n(t)}=a_nf_{2,k,\infty}'(x)$, where $f_{2,k,\infty}'(x)$ is given by \eqref{eqn:f-2-k_d}.
\begin{figure*}[!t]
\small{\begin{align}
f_{2,k,\infty}'(x)\triangleq &2\pi\lambda_{2}x\int_{0}^{\infty} d\exp\left(-\frac{2\pi\lambda_{2}}{\alpha_2}d^2\left(2^{\frac{k\tau}{W}}-1\right)^{\frac{2}{\alpha_2}}\left(xB^{'}\left(\frac{2}{\alpha_2},1-\frac{2}{\alpha_2},2^{-\frac{k\tau}{W}}\right)+\left(1-x\right)B\left(\frac{2}{\alpha_2},1-\frac{2}{\alpha_2}\right)\right)\right)\nonumber\\
& \times\exp\left(-\pi\lambda_{2}xd^{2}\right)\exp\left(-\frac{2\pi\lambda_{1}}{\alpha_1}d^{\frac{2\alpha_2}{\alpha_1}}\left(\frac{P_1}{P_2}\left(2^{\frac{k\tau}{W}}-1\right)\right)^{\frac{2}{\alpha_1}}B\left(\frac{2}{\alpha_1},1-\frac{2}{\alpha_1}\right)\right)\nonumber\\ &\times\left(-\pi\lambda_{2}d^{2}-\frac{2\pi\lambda_{2}}{\alpha_2}d^{2}\left(2^{\frac{k\tau}{W}}-1\right)^{\frac{2}{\alpha_2}}\left(B^{'}\left(\frac{2}{\alpha_2},1-\frac{2}{\alpha_2},2^{-\frac{k\tau}{W}}\right)-B\left(\frac{2}{\alpha_2},1-\frac{2}{\alpha_2}\right)\right)\right)
{\rm d}d\nonumber\\
&+\frac{f_{2,k,\infty}(x)}{x}.\label{eqn:f-2-k_d}
\end{align}}
\normalsize \hrulefill
\end{figure*}
%*********************full version**************************
%*********************simplification**************************
%In Step 2, the expression of $\frac {\partial q_K\left(\mathbf{p}(t)\right)}{\partial p_i(t)}$ can be found in  \cite{arXivSGCaching15}, and   the diminishing stepsize\cite[pp. 227]{Bertsekasbooknonlinear:99} satisfies
%\begin{align}
%\epsilon(t)\to 0\ \text{as}\ t\to \infty,\  \sum_{t=1}^{\infty}\epsilon(t)=\infty, \sum_{t=1}^{\infty}\epsilon(t)^2<\infty.  \label{eqn:stepcond}
%\end{align}
Step 3 is the projection of $\bar T_n(t+1)$ onto the set of the  variables satisfying the constraints in \eqref{eqn:cache-constr-indiv-t} and \eqref{eqn:cache-constr-sum-t}. It is shown in \cite[pp. 229]{Bertsekasbooknonlinear:99} that $\mathbf T(t)$ in Algorithm~\ref{alg:local} converges
to  a stationary point of  the optimization in \eqref{eqn:prob-p-asymp-F_2^c}  as $t\to \infty$.

On the other hand, as illustrated in the discussion of Lemma~\ref{Lem:asym-perf}, $f_{2,k,\infty}(x)$ is actually a cumulative density function (c.d.f.), and is  concave in most of the cases we are interested in. If $f_{2,k,\infty}(x)$ in \eqref{eqn:f-2-k} is concave w.r.t. $x$, the differentiable function $q_{2,\infty}\left(\mathcal F_2^c,\mathbf{T}\right)$ is concave w.r.t. $\mathbf T$, and hence, the optimization in \eqref{eqn:prob-p-asymp-F_2^c} is a convex problem. Then, $\mathbf T(t)$ in Algorithm~\ref{alg:local} converges  to the optimal solution $\mathbf T^*(\mathcal F_2^c)$  to  the optimization in \eqref{eqn:prob-p-asymp-F_2^c}  as $t\to \infty$. In other words, under a mild condition (i.e., $f_{2,k,\infty}(x)$ is convex), we can obtain the optimal solution $\mathbf T^*(\mathcal F_2^c)$  to  the optimization in \eqref{eqn:prob-p-asymp-F_2^c} using Algorithm~\ref{alg:local}.

%As illustrated in the discussion of Lemma~\ref{Lem:asym-perf}, $f_{2,k,\infty}(x)$ is actually a cumulative density function (c.d.f.), and is  convex in most of the cases we are interested in. Thus, in this part, we assume  $f_{2,k,\infty}(x)$ is convex. The optimal solution to the optimization in \eqref{eqn:prob-p-asymp-F_2^c} can be obtained using standard gradient projection methods\cite[pp. 223]{Bertsekasbooknonlinear:99}.
%Here, we consider the diminishing stepsize\cite[pp. 227]{Bertsekasbooknonlinear:99} satisfying
%\begin{align}
%\epsilon(t)\to 0\ \text{as}\ t\to \infty,\  \sum_{t=1}^{\infty}\epsilon(t)=\infty, \sum_{t=1}^{\infty}\epsilon(t)^2<\infty, \label{eqn:stepcond}
%\end{align}
%and propose Algorithm~\ref{alg:local} to solve the optimization in \eqref{eqn:prob-p-asymp-F_2^c}. It is shown in \cite[pp. 229]{Bertsekasbooknonlinear:99} that $\mathbf T(t)$ converges to the optimal solution $\mathbf T^*(\mathcal F_2^c)$ as $t\to \infty$, under a mild condition (i.e., $f_{2,k,\infty}(x)$ is convex).\footnote{If $f_{2,k,\infty}(x)$ in \eqref{eqn:f-2-k} is not convex w.r.t. $x$, $\mathbf T(t)$ converges to a stationary point of  the optimization in \eqref{eqn:prob-p-asymp-F_2^c}\cite[pp. 229]{Bertsekasbooknonlinear:99}.}
%

\begin{algorithm}[t]
\caption{Asymptotically Optimal Solution}
\small{\begin{algorithmic}[1]
\STATE Initialize  $t=1$ and $T_n(1)=\frac{1}{F_2^c}$  for all $n\in \mathcal F_2^c$.
%\LOOP
 \STATE   For all $n\in \mathcal F_2^c$, compute $\bar T_n(t+1)$ according to $\bar T_n(t+1)=T_n(t)+\epsilon(t)\frac {\partial q_{2,\infty}\left(\mathcal F_2^c,\mathbf{T}(t)\right)}{\partial T_n(t)}$,
where  $\{\epsilon(t)\}$ satisfies \eqref{eqn:stepcond}.
 \STATE For all $n\in \mathcal F_2^c$, compute $T_n(t+1)$ according to $T_n(t+1)=\min\left\{\left[\bar T_n(t+1)-\nu^*\right]^+,1\right\}$,
% \begin{align}
%p_i(t+1)=\min\left\{\left[\bar p_i(t+1)-\nu^*\right]^+,1\right\}\label{eqn:proj-K}
%\end{align}
where $\nu^*$ satisfies $\sum_{n\in \mathcal F_2^c}\min\left\{\left[\bar T_n(t+1)-\nu^*\right]^+,1\right\}=K_2^c$.
 \STATE Set $t=t+1$ and go to Step 2.
%\ENDLOOP
\end{algorithmic}}\label{alg:local}
\end{algorithm}
%*********************full version**************************

Next, we consider the symmetric case, i.e., $\alpha_1=\alpha_2 = \alpha$, in the high SNR and user density region. In this case, we can easily verify that $q_{2, \infty}\left(\mathcal F_2^c,\mathbf{T}\right)=\sum_{n\in \mathcal F_2^c }\frac{a_nT_n}{\theta_{2,K_2^c}+\theta_{1,K_2^c}T_n}$  (given in Lemma~\ref{Lem:asym-perf-v2}) is convex and Slater's condition is satisfied, implying that strong duality holds. Using KKT conditions, we can obtain the closed-form solution to the optimization in \eqref{eqn:prob-p-asymp-F_2^c} in this case.

\begin{Lem} [Asymptotically Optimal Solution when $\alpha_1=\alpha_2$]
For given $\mathcal F_2^c$, when $\alpha_1=\alpha_2 =\alpha$,   $\frac{P}{N_0}\to \infty$ and $\lambda_u\to \infty$,  the optimal solution to the optimization in \eqref{eqn:prob-p-asymp-F_2^c} is given by
 \begin{align}
T_n^*(\mathcal F_2^c)=\min\left\{\left[\frac{1}{\theta_{1,K_2^c}}\sqrt{\frac{a_n\theta_{2,K_2^c}}{\nu^*}}-\frac{\theta_{2,K_2^c}}{\theta_{1,K_2^c}}\right]^+,1\right\} , \ n\in \mathcal F_2^c,\label{eqn:opt-K-infty}
\end{align}
where $[x]^+\triangleq \max\{x,0\}$ and $\nu^*$ satisfies
\begin{align}
\sum_{n\in \mathcal F_2^c}\min\left\{\left[\frac{1}{\theta_{1,K_2^c}}\sqrt{\frac{a_n\theta_{2,K_2^c}}{\nu^*}}-\frac{\theta_{2,K_2^c}}{\theta_{1,K_2^c}}\right]^+,1\right\}=K_2^c.\label{eqn:opt-K-infty-v}
\end{align}
Here, $\theta_{1,k}$ and $\theta_{2,k}$ are given by \eqref{eqn:c_2_1_k} and \eqref{eqn:c_2_2_k}, respectively.
\label{Lem:solu-opt-infty}
\end{Lem}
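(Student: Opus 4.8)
The plan is to solve the optimization in \eqref{eqn:prob-p-asymp-F_2^c} for the symmetric case directly through its KKT conditions, exploiting the concavity already noted just before the lemma. First I would record that each summand of $q_{2,\infty}(\mathcal F_2^c,\mathbf T)=\sum_{n\in\mathcal F_2^c}\frac{a_nT_n}{\theta_{2,K_2^c}+\theta_{1,K_2^c}T_n}$ can be rewritten as $\frac{a_n}{\theta_{1,K_2^c}}\bigl(1-\frac{\theta_{2,K_2^c}}{\theta_{2,K_2^c}+\theta_{1,K_2^c}T_n}\bigr)$, so that, with $\theta_{2,K_2^c}>0$ from \eqref{eqn:c_2_2_k} and $\theta_{1,K_2^c}>0$ in the regime of interest, it is increasing and concave on $[0,1]$. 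Hence the objective is concave, the feasible set defined by \eqref{eqn:cache-constr-indiv-t}--\eqref{eqn:cache-constr-sum-t} is a polytope with nonempty relative interior (for instance $T_n=K_2^c/F_2^c$, which is feasible since $K_2^c\le F_2^c$), so Slater's condition holds, strong duality holds, and the KKT conditions are both necessary and sufficient for global optimality.

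Next I would form the Lagrangian with a scalar multiplier $\nu$ for the equality constraint \eqref{eqn:cache-constr-sum-t} and multipliers $\mu_n\ge0$, $\xi_n\ge0$ for $T_n\ge0$ and $T_n\le1$. Stationarity gives, for each $n\in\mathcal F_2^c$, $\frac{a_n\theta_{2,K_2^c}}{(\theta_{2,K_2^c}+\theta_{1,K_2^c}T_n)^2}=\nu-\mu_n+\xi_n$, and I would then split into the three complementary-slackness cases. For $0<T_n<1$ we have $\mu_n=\xi_n=0$, and solving the resulting quadratic in $T_n$ yields $T_n=\frac{1}{\theta_{1,K_2^c}}\sqrt{a_n\theta_{2,K_2^c}/\nu}-\frac{\theta_{2,K_2^c}}{\theta_{1,K_2^c}}$; for $T_n=0$ the condition $\mu_n\ge0$ is equivalent to $a_n/\theta_{2,K_2^c}\le\nu$, i.e.\ to the above expression being $\le0$; and for $T_n=1$ the condition $\xi_n\ge0$ is equivalent to the expression being $\ge1$. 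Gluing the three cases shows that any KKT point has exactly the water-filling form $T_n^*=\min\{[\frac{1}{\theta_{1,K_2^c}}\sqrt{a_n\theta_{2,K_2^c}/\nu}-\frac{\theta_{2,K_2^c}}{\theta_{1,K_2^c}}]^+,1\}$, which is \eqref{eqn:opt-K-infty}.

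It then remains to pin down the water level $\nu^*$ from primal feasibility. I would observe that $\nu\mapsto\sum_{n\in\mathcal F_2^c}\min\{[\frac{1}{\theta_{1,K_2^c}}\sqrt{a_n\theta_{2,K_2^c}/\nu}-\frac{\theta_{2,K_2^c}}{\theta_{1,K_2^c}}]^+,1\}$ is continuous and nonincreasing on $(0,\infty)$, tends to $F_2^c\ge K_2^c$ as $\nu\to0^+$ and to $0<K_2^c$ as $\nu\to\infty$, so by the intermediate value theorem there exists $\nu^*>0$ satisfying \eqref{eqn:opt-K-infty-v}; plugging any such $\nu^*$ into \eqref{eqn:opt-K-infty} makes $\{T_n^*\}$ primal feasible, and together with the multipliers recovered above it satisfies all KKT conditions, hence is globally optimal by convexity. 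I expect the only delicate point to be the bookkeeping in the case analysis---checking that the sign conditions on $\mu_n$ and $\xi_n$ line up precisely with the $[\cdot]^+$ and $\min\{\cdot,1\}$ operations---together with confirming $\theta_{1,K_2^c}>0$ so that the expression is well posed; the remaining steps are a routine Lagrangian computation.
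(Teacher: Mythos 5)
Your proposal follows essentially the same route as the paper's Appendix F: assert convexity and Slater's condition so that the KKT conditions are necessary and sufficient, form the Lagrangian with multipliers for the box and sum constraints, do the three-way complementary-slackness case analysis to obtain the water-filling form, and determine $\nu^*$ from the sum constraint. Your added details (the explicit rewriting showing concavity of each summand and the intermediate-value argument for the existence of $\nu^*$) are correct refinements of the same argument, so the proof is sound and matches the paper's approach.
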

\begin{proof}
Please refer to Appendix F.
\end{proof}

As the water-level in the traditional water-filling power control,   the root $\nu^*$ to the equation in \eqref{eqn:opt-K-infty-v} can be easily solved. Thus, by Lemma~\ref{Lem:solu-opt-infty}, we can efficiently compute   $\mathbf T^*(\mathcal F_2^c)$ when $\alpha_1=\alpha_2$.

Lemma~\ref{Lem:solu-opt-infty} can be interpreted as follows. As illustrated in Fig.~\ref{Fig:optstructure},  $\mathbf T^*(\mathcal F_2^c)$ given by Lemma~\ref{Lem:solu-opt-infty}  has a reverse water-filling structure. The file popularity distribution $\{a_n:n\in\mathcal F_2^c\}$ and the physical layer parameters (captured in  $\theta_{1,K_2^c}$ and $\theta_{2,K_2^c}$)  jointly affect   $\nu^*$.  Given  $\nu^*$, the physical layer parameters (captured in  $\theta_{1,K_2^c}$ and $\theta_{2,K_2^c}$)  affect the  caching probabilities of all the files  in the same way, while the popularity of file $n\in\mathcal F_2^c$  (i.e., $a_n$) only affects the caching probability of file $n$  (i.e., $T^*_n$).
From Lemma~\ref{Lem:solu-opt-infty}, we know that for any $n_1,n_2\in\mathcal F_2^c$ such that $n_1<n_2$,  we have  $T_{n_1}^*>T_{n_2}^*$, as $a_{n_1}> a_{n_2}$. In other words, files in $\mathcal F_2^c$ of higher popularity get more storage resources in the 2nd tier. In addition, there may exist $\bar n\in\mathcal F_2^c$  such that $T_n^*>0$ for all  $n\in\mathcal F_2^c$ and $n<\bar n$, and $T_n^*=0$ for all  $n\in\mathcal F_2^c$ and $n\geq\bar n$. In other words, some files in $\mathcal F_2^c$ of lower popularity may not be stored in the 2nd tier.   For a popularity distribution with a heavy tail,  more  different files  in $\mathcal F_2^c$ can be stored in the 2nd tier.
%These features stem from the successful transmission probability  maximization and are similar to the water-filling structure of the optimal power control in the throughput maximization.
%The optimal solution has the reverse water-filling structure.
%Fig.~\ref{Fig:optstructure} (a) illustrates the optimality structure  in Theorem~\ref{Thm:solu-opt-1-infty}.

 \begin{figure}[t]
\begin{center}
 \includegraphics[width=8cm]{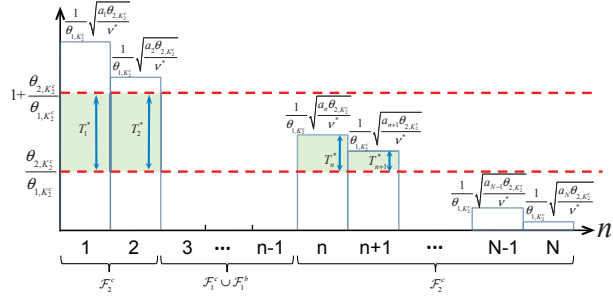}
  \end{center}
     \caption{\small{Illustration of the optimality structure in Lemma~\ref{Lem:solu-opt-infty}. In this example, $\mathcal F_1^c\cup\mathcal F_1^b=\{3,4\cdots,n-1\}$, and $\mathcal F_2^c=\{1,2,n,n+1,\cdots,N\}$ for some $n\in\mathcal N$.}}
\label{Fig:optstructure}
\end{figure}

\subsubsection{Discrete Optimization}
Given  $q_{2,\infty}^*(\mathcal F_2^c)=q_{2,\infty}(\mathcal F_2^c, \mathbf T^*(\mathcal F_2^c))$, the optimization in \eqref{eqn:prob-asymp-caching} is a discrete optimization. It can be shown that the number of possible choices for $(\mathcal F_1^{c}, \mathcal F_2^{c})$  satisfying \eqref{eqn:cache-constr} is given by $\sum_{F_2^c=K_2^c}^{N-K_1^c}\binom{N}{F_2^c}\binom{N-F_2^c}{K_1^c}=\Theta (N^N)$.
%\begin{align}
%\sum_{F_2^c=K_2^c}^{N-K_1^c}\binom{N}{F_2^c}\binom{N-F_2^c}{K_1^c}=\Theta (N^N).\label{eqn:high-order}
%\end{align}
Thus, a brute-force solution to the discrete optimization in \eqref{eqn:prob-asymp-caching}  is not acceptable.
%The key challenge  lies in solving the discrete optimization in \eqref{eqn:prob-asymp-caching} with reasonable complexity.
Now, we explore the  structural properties of the discrete optimization in \eqref{eqn:prob-asymp-caching} to facilitate the design of low-complexity asymptotically optimal  solutions.

\begin{Thm} [Structural Properties of Optimization in \eqref{eqn:prob-asymp-caching}]
There exists an optimal  solution $(\mathcal F_1^{c*},\mathcal F_2^{c*}) $ to the   optimization in \eqref{eqn:prob-asymp-caching} satisfying the following conditions:
(i) $F_2^{c*}\in\{F_{2,lb}^{c*}, F_{2,lb}^{c*}+1,\cdots, N-K_1^c\}$, where $F_{2,lb}^{c*}\triangleq \max\{K_2^c, N-K_1^c-K_1^b\}$;  and
(ii) there exists $n_1^c\in\{1,2,\cdots, F_2^{c*}+1\}$, such that   $\mathcal F_1^{c*}=\left\{n_1^c,n_1^c+1,\cdots,n_1^c+K_1^{c}-1\right\}$ and $\mathcal F_2^{c*}=\mathcal N\setminus  \left(\mathcal F_1^{c*}\cup\mathcal F_1^{b*}\right)$, where $\mathcal F_1^{b*}=\left\{ n_1^c+K_1^c, n_1^c+K_1^c+1,\cdots, n_1^c+K_1^c+N-(K_1^c+F_2^{c*})-1\right\}$.\label{Thm:opt-prop}
\end{Thm}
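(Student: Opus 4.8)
The plan is to establish both conclusions by exchange arguments applied to the asymptotic objective $q_{1,\infty}(\mathcal F_1^c,\mathcal F_2^c)+q_{2,\infty}^*(\mathcal F_2^c)$ of the optimization in \eqref{eqn:prob-asymp-caching}: starting from an arbitrary optimal pair, I move files among the three groups $\mathcal F_1^c$, $\mathcal F_1^b$, $\mathcal F_2^c$ (respecting \eqref{eqn:cache-constr}) without ever decreasing the objective, until the claimed structure is reached. I use the following facts. By Lemma~\ref{Lem:asym-perf}, $q_{1,\infty}$ factorizes as $f_{1,K_1^c+\min\{K_1^b,F_1^b\},\infty}$ times the popularity mass $\sum_{n\in\mathcal F_1^c}a_n+\frac{\min\{K_1^b,F_1^b\}}{F_1^b}\sum_{n\in\mathcal F_1^b}a_n$, and $q_{2,\infty}^*(\mathcal F_2^c)=\sum_{n\in\mathcal F_2^c}a_n\,f_{2,K_2^c,\infty}(T_n^*)$ for the optimal $\mathbf T^*$. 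From \eqref{eqn:f-2-k-infty} and the text following Lemma~\ref{Lem:asym-perf}, $f_{2,k,\infty}(x)$ is increasing in $x$ with $f_{2,k,\infty}(0)=0$. And by Lemma~\ref{Lem:mono-general-asym}, $T_n^*$, hence $f_{2,K_2^c,\infty}(T_n^*)$, is nonincreasing in $n$ over $\mathcal F_2^c$.

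\emph{Proof of (i).} Let $(\mathcal F_1^c,\mathcal F_2^c)$ be optimal with $F_1^b=N-K_1^c-F_2^c>K_1^b$, and move the least popular file $m$ of $\mathcal F_1^b$ (the one of largest index) into $\mathcal F_2^c$, keeping $\mathcal F_1^c$ fixed; then $F_1^b$ drops by $1$ and $F_2^c$ rises by $1$. Extending the optimal pico-allocation of $\mathcal F_2^c$ by assigning caching probability $0$ to the new file shows $q_{2,\infty}^*$ does not decrease. For $q_{1,\infty}$: since $F_1^b\ge K_1^b+1$, after the move $\min\{K_1^b,F_1^b-1\}=K_1^b$, so the index $K_1^c+\min\{\cdot\}$ is unchanged; and because $m$ is least popular in $\mathcal F_1^b$, $a_m\le\frac{1}{F_1^b}\sum_{n\in\mathcal F_1^b}a_n$, from which a one-line computation gives that the popularity mass does not decrease. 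Hence the objective does not decrease. Iterating until $F_1^b\le K_1^b$, i.e.\ $F_2^c\ge N-K_1^c-K_1^b$, and combining with the always-valid bounds $K_2^c\le F_2^c\le N-K_1^c$, yields $F_2^c\in\{F_{2,lb}^{c*},\ldots,N-K_1^c\}$, which is (i). If the starting solution already had $\mathcal N\setminus\mathcal F_2^c$ consecutive, each move removes the largest index of $\mathcal F_1^b$ and keeps it consecutive, so the reduction is compatible with (ii).

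\emph{Proof of (ii).} By (i) we may assume $F_1^b\le K_1^b$, so $\min\{K_1^b,F_1^b\}=F_1^b$ and, using $K_1^c+F_1^b=N-F_2^c$, $q_{1,\infty}(\mathcal F_1^c,\mathcal F_2^c)=(1-\sum_{n\in\mathcal F_2^c}a_n)\,f_{1,N-F_2^c,\infty}$, which depends on $\mathcal F_2^c$ only through $F_2^c$ and $\sum_{n\in\mathcal F_2^c}a_n$, and not on how $\mathcal N\setminus\mathcal F_2^c$ is partitioned into $\mathcal F_1^c$ and $\mathcal F_1^b$; so we may take $\mathcal F_1^{c*}$ to be the $K_1^c$ most popular files of $\mathcal N\setminus\mathcal F_2^c$, which is exactly the assignment in (ii) once $\mathcal N\setminus\mathcal F_2^c$ is shown to be a block of $N-F_2^c$ consecutive indices. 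So fix $F_2^c$ in the admissible range, let $(\mathcal F_1^c,\mathcal F_2^c)$ be optimal, and suppose $\mathcal B:=\mathcal N\setminus\mathcal F_2^c$ is not an interval; then there are $b^-<m<b^+$ with $b^-,b^+\in\mathcal B$ and $m\in\mathcal F_2^c$. Let $\mathbf T^*$ be the optimal pico-allocation of $\mathcal F_2^c$, write $g:=f_{1,N-F_2^c,\infty}$ and $f:=f_{2,K_2^c,\infty}$, and consider the swaps (A) replace $m$ by $b^-$ in $\mathcal F_2^c$ and (B) replace $m$ by $b^+$. Lower-bounding the new $q_{2,\infty}^*$ by keeping all other caching probabilities and giving $T_m^*$ to the newly added file, the objective changes by at least $(a_{b^-}-a_m)\bigl(f(T_m^*)-g\bigr)$ under (A) and at least $(a_m-a_{b^+})\bigl(g-f(T_m^*)\bigr)$ under (B); since $a_{b^-}>a_m>a_{b^+}$, at least one of these lower bounds is nonnegative, so one of the two swaps produces another optimal solution. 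Iterating, and observing that by Lemma~\ref{Lem:mono-general-asym} $f(T_n^*)$ is nonincreasing in $n$ over $\mathcal F_2^c$ — so swap (A) is the available one for the small-index (``productive'') files of $\mathcal F_2^c$ and swap (B) for the large-index (``unproductive'') ones, each exchange replacing a productive file by a more popular one and an unproductive file by a less popular one, driving $\mathcal F_2^c$ toward a prefix-plus-suffix configuration — one concludes, by choosing an optimal solution that is extremal for a suitable potential such as the number of maximal blocks of $\mathcal B$ and showing a non-interval $\mathcal B$ would permit a strict decrease of that potential, that some optimal $\mathcal F_2^c$ has $\mathcal B$ equal to a single block; this gives (ii).

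\emph{Main obstacle.} Part (i) and the reduction at the start of part (ii) are routine. The crux is the exchange step of part (ii): deriving the clean dichotomy ``one of $\Delta_A,\Delta_B\ge0$'' from the budget-reassignment bound is the one genuinely clever estimate, and converting repeated exchanges into a proof that an optimal $\mathcal B$ may be taken to be a single contiguous block requires careful bookkeeping — choosing the potential and verifying, with Lemma~\ref{Lem:mono-general-asym}, that it strictly improves rather than merely permuting the configuration. I expect this termination argument to be the most delicate part to write out fully.
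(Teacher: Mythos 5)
Your proposal follows essentially the same route as the paper's Appendix G: part (i) by moving excess backhaul files into $\mathcal F_2^c$ with zero caching probability (the paper does this in one shot, keeping the top $K_1^b$ files of $\mathcal F_1^{b*}$, rather than one file at a time, but the averaging inequality is the same), and part (ii) by the identical exchange dichotomy comparing $f_{1,K_1^c+F_1^b,\infty}$ against $f_{2,K_2^c,\infty}(T_m^*)$, followed by the observation that once $F_1^b\leq K_1^b$ the split of $\mathcal N\setminus\mathcal F_2^c$ into $\mathcal F_1^c$ and $\mathcal F_1^b$ is immaterial. The one loose end you flag --- termination of the exchanges --- is the only place you diverge, and your suggested potential does not quite work: with an interior $b^-$, swap (A) replaces $b^-$ by $b^-$'s neighbor $m$ and can split a maximal block in two, so the number of maximal blocks of $\mathcal B$ can increase. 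The paper's (implicit) fix is to always take $b^-=\min\mathcal B$ and $b^+=\max\mathcal B$ as you range over an arbitrary interior gap $m$; then whichever of the two swaps is nonnegative strictly decreases $\max\mathcal B-\min\mathcal B$ while preserving $|\mathcal B|$ and optimality, so the process terminates at a single interval. With that substitution your argument is complete and matches the paper's.
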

\begin{proof}
Please refer to Appendix G.
\end{proof}

%Property (i) implies that $F_1^{b*}\leq K_1^b$.

Theorem~\ref{Thm:opt-prop} can be interpreted as follows.
 Property  (ii) indicates that there is an optimal solution $(\mathcal F_1^{c*},\mathcal F_2^{c*}) $ to the optimization in \eqref{eqn:prob-asymp-caching} satisfying  that the files in  $\mathcal F_1^{c*}$, $\mathcal F_1^{b*}$ and $\mathcal F_1^{c*}\cup\mathcal F_1^{b*}$ are consecutive, and the files in $\mathcal F_1^{c*}$ are more popular than those in  $\mathcal F_1^{b*}$. This can be easily understood  from  Fig.~\ref{fig:cachestru}. It can be shown that the number of possible choices  for $(\mathcal F_1^{c}, \mathcal F_2^{c})$ satisfying the properties   in Theorem~\ref{Thm:opt-prop} is given by $\sum_{F_2^c=F_{2,lb}^{c*}}^{N-K_1^c}\sum_{n_1^c=1}^{F_2^c+1}1=\Theta (N^2)$,
%\begin{align}
%\sum_{F_2^c=F_{2,lb}^{c*}}^{N-K_1^c}\sum_{n_1^c=1}^{F_2^c+1}1=\Theta (N^2)\label{eqn:low-order}
%\end{align}
which is much smaller than the number of possible choices just satisfying \eqref{eqn:cache-constr}, i.e., $\Theta (N^N)$.
By restricting  to the choices for $(\mathcal F_1^c,\mathcal F_2^c)$ satisfying the properties in Theorem~\ref{Thm:opt-prop}, we can greatly reduce the complexity for solving the optimization in \eqref{eqn:prob-asymp-caching} without losing any optimality.

%The low-complexity optimal solution can be obtained by  Algorithm~\ref{alg}, which is of complexity $\Theta (N^2)$.

 \begin{figure}[t]
\begin{center}
 \includegraphics[width=12cm]{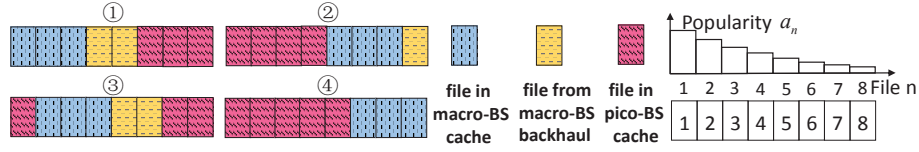}
  \end{center}
     \caption{\small{Illustration of the structural properties in Theorem~\ref{Thm:opt-prop} and Lemma~\ref{Lem:opt-prop}. $K_1^c=3$, $K_1^b=2$ and $K_2^c=2$.}}
\label{fig:cachestru}
\end{figure}

%\begin{algorithm}[htb]
%\caption{\small{Low-Complexity Asymptotically Optimal Solution}} %to Problem~\ref{prob:opt-asymp} (Problem~\ref{prob:opt-asymp-eq})}
%\small{\begin{algorithmic}[1]
%\STATE Initialize  $q_{\infty}^*=0$.
%     \FOR{$F_2^c=\max\{K_2^c,N-K_1^c-K_1^b\}:N-K_1^c$}
%     \FOR{$n_1^c=1:F_2^c+1$}
%     \STATE Choose $\mathcal F_1^{c}$ and $\mathcal F_2^{c}$ according to $\mathcal F_1^{c*}$ and $\mathcal F_2^{c*}$ in Property (b) (ii) of Theorem~\ref{Thm:opt-prop}.
%%      Let $\mathcal F_1^{c}=\left\{n_1^c,n_1^c+1,\cdots,n_1^c+K_1^{c}-1\right\}$ and $\mathcal F_2^{c}=\mathcal N\setminus  \left(\mathcal F_1^{c}\cup\mathcal F_1^{b}\right)$, where $\mathcal F_1^{b}=\left\{ n_1^c+K_1^c, n_1^c+K_1^c+1,\cdots, n_1^c+K_1^c+N-(K_1^c+F_2^{c})-1\right\}$.
%               \STATE Compute
%       $q_{\infty}\left(\mathcal F_1^c, \mathcal F_2^c,\mathbf T^*(\mathcal F_2^c)\right)\triangleq q_{\infty}$. If $q_{\infty}^*<q_{\infty}$, set $q_{\infty}^*=q_{\infty}$ and $(\mathcal F_1^{c^*},\mathcal F_2^{c^*},\mathbf T^*)=(\mathcal F_1^c,\mathcal F_2^c,\mathbf T^*(\mathcal F_2^c))$.
%       \ENDFOR
%       \ENDFOR
%\end{algorithmic}}\label{alg}
%\end{algorithm}

In some special cases, we can obtain extra  properties other than those in Theorem~\ref{Thm:opt-prop}.

\begin{Lem}[Structural Properties of Optimization in \eqref{eqn:prob-asymp-caching} in Special Cases](i) If $f_{1,K_1^c+K_1^b,\infty}\\>f_{2,K_2^c,\infty}(1)$, then $n_1^c$ in Theorem~\ref{Thm:opt-prop} satisfies  $n_1^c=1$;  (ii) If $f_{1,K_1^c,\infty}<f_{2,K_2^c,\infty}\left(\frac{K_2^c}{N-K_1^c}\right)$, then  $n_1^c$ in Theorem~\ref{Thm:opt-prop}  satisfies  $n_1^c\geq 2$. \label{Lem:opt-prop}
\end{Lem}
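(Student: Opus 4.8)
The plan is to prove both claims by contradiction, perturbing a structurally optimal solution of \eqref{eqn:prob-asymp-caching} by one step and showing the perturbation strictly increases $q_{\infty}$ whenever the stated inequality holds. The two facts I would use repeatedly are: (a) $f_{2,k,\infty}(x)$ is increasing in $x$ (proved in Appendix~E and already invoked after Lemma~\ref{Lem:asym-perf}); and (b) $f_{1,k,\infty}$ is decreasing in $k$, which is immediate from \eqref{eqn:f-1-k-infty} since $2^{k\tau/W}-1$ increases with $k$ and appears only inside negative exponents, while $B'(2/\alpha_1,1-2/\alpha_1,2^{-k\tau/W})$ also increases with $k$. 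First I would fix, through Theorem~\ref{Thm:opt-prop}, an optimal solution of \eqref{eqn:prob-asymp-caching} with parameters $F_2^c$ and $n_1^c$. Since $F_2^c\geq F_{2,lb}^{c*}\geq N-K_1^c-K_1^b$, we get $F_1^b=N-K_1^c-F_2^c\leq K_1^b$, so $\min\{K_1^b,F_1^b\}=F_1^b$, and by Lemma~\ref{Lem:asym-perf} the macro term simplifies to $q_{1,\infty}=f_{1,N-F_2^c,\infty}\big(1-\sum_{n\in\mathcal F_2^c}a_n\big)$; in particular it depends on the caching sets only through $F_2^c$ and $\sum_{n\in\mathcal F_2^c}a_n$. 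All perturbations below keep $F_2^c$ fixed and shift $n_1^c$ by one, which by Theorem~\ref{Thm:opt-prop}(ii) swaps exactly one file of $\mathcal F_2^c$ for another.

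For (i), assume $n_1^c\geq2$ and pass to the configuration with $n_1^c$ replaced by $n_1^c-1$. This replaces, inside $\mathcal F_1^c\cup\mathcal F_1^b$, its least popular file $m\triangleq n_1^c+N-F_2^c-1$ by the strictly more popular file $n_1^c-1$; equivalently $\mathcal F_2^c$ loses $n_1^c-1$ and gains $m$ (and $n_1^c-1<m$ because $F_2^c<N$, so $a_{n_1^c-1}>a_m$). Hence $q_{1,\infty}$ changes by $f_{1,N-F_2^c,\infty}(a_{n_1^c-1}-a_m)>0$. For the pico term, giving the new set the allocation that keeps $T^*_n(\mathcal F_2^{c,\mathrm{old}})$ on the common files and $T^*_{n_1^c-1}(\mathcal F_2^{c,\mathrm{old}})$ on the new file $m$ is feasible, so $q^*_{2,\infty}$ changes by at least $(a_m-a_{n_1^c-1})f_{2,K_2^c,\infty}\big(T^*_{n_1^c-1}(\mathcal F_2^{c,\mathrm{old}})\big)\geq-(a_{n_1^c-1}-a_m)f_{2,K_2^c,\infty}(1)$ by (a). Adding the two changes and using $N-F_2^c\leq K_1^c+K_1^b$ together with (b), the net change is at least $(a_{n_1^c-1}-a_m)\big(f_{1,K_1^c+K_1^b,\infty}-f_{2,K_2^c,\infty}(1)\big)$, which is strictly positive under the hypothesis of (i) --- contradicting optimality. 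Hence $n_1^c=1$.

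For (ii), assume $n_1^c=1$ and pass to the configuration with $n_1^c=2$. Now $\mathcal F_2^{c,\mathrm{old}}=\{N-F_2^c+1,\dots,N\}$, and the move makes $\mathcal F_2^c$ drop its most popular file $m'\triangleq N-F_2^c+1$ and gain the strictly more popular file $1$. Then $q_{1,\infty}$ changes by $-f_{1,N-F_2^c,\infty}(a_1-a_{m'})$, and the same relabelling of the optimal allocation (carry $T^*_{m'}(\mathcal F_2^{c,\mathrm{old}})$ onto file $1$) shows $q^*_{2,\infty}$ changes by at least $(a_1-a_{m'})f_{2,K_2^c,\infty}\big(T^*_{m'}(\mathcal F_2^{c,\mathrm{old}})\big)$. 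Since $m'$ is the most popular file of $\mathcal F_2^{c,\mathrm{old}}$, Lemma~\ref{Lem:mono-general-asym} gives $T^*_{m'}(\mathcal F_2^{c,\mathrm{old}})\geq\frac1{F_2^c}\sum_{n\in\mathcal F_2^{c,\mathrm{old}}}T^*_n=\frac{K_2^c}{F_2^c}\geq\frac{K_2^c}{N-K_1^c}$, so $f_{2,K_2^c,\infty}(T^*_{m'})\geq f_{2,K_2^c,\infty}\big(\frac{K_2^c}{N-K_1^c}\big)$ by (a); together with $N-F_2^c\geq K_1^c$ and (b), the net change is at least $(a_1-a_{m'})\big(f_{2,K_2^c,\infty}(\frac{K_2^c}{N-K_1^c})-f_{1,K_1^c,\infty}\big)>0$ under the hypothesis of (ii) --- again a contradiction, so $n_1^c\geq2$.

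The hard part is handling the pico term $q^*_{2,\infty}(\mathcal F_2^c)$: it has no usable closed form and its optimizer $\mathbf T^*(\mathcal F_2^c)$ changes when $\mathcal F_2^c$ is perturbed, so the effect of the swap on $q^*_{2,\infty}$ is not transparent. The key device, used in both parts, is the relabelling/transport construction above, which converts the optimal allocation of the old set into a feasible --- hence lower-bounding --- allocation of the perturbed set, so that only the single swapped file contributes to the difference; in (ii) the additional ingredient is Lemma~\ref{Lem:mono-general-asym}, which forces the dropped (most popular) file of $\mathcal F_2^c$ to carry at least the average mass $K_2^c/F_2^c\geq K_2^c/(N-K_1^c)$. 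It then remains only to check the minor feasibility details --- that $n_1^c-1\in\{1,\dots,F_2^c+1\}$ in (i), that $2\in\{1,\dots,F_2^c+1\}$ in (ii), and that $F_2^c<N$ --- all of which follow directly from Theorem~\ref{Thm:opt-prop} and $F_2^c\leq N-K_1^c$.
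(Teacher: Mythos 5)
Your proposal is correct and follows essentially the same route as the paper's Appendix H: a one-file exchange between the macro and pico sets, transporting the optimal caching probability of the swapped-out file onto the swapped-in file to obtain a feasible lower bound on $q_{2,\infty}^*$ of the perturbed set, then invoking the monotonicity of $f_{2,k,\infty}(x)$ in $x$, the monotonicity of $f_{1,k,\infty}$ in $k$, and (for part (ii)) Lemma~\ref{Lem:mono-general-asym} to bound the most popular pico file's probability below by $K_2^c/F_2^c\geq K_2^c/(N-K_1^c)$. The only cosmetic difference is that you phrase the swap as shifting $n_1^c$ by one within the structured family of Theorem~\ref{Thm:opt-prop}, whereas the paper swaps file $1$ directly with a file of the opposite set; the inequalities and contradictions are identical.
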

\begin{proof}
Please refer to Appendix H.
\end{proof}

Lemma~\ref{Lem:opt-prop} can be interpreted as follows.
Property (i) implies that  the most popular files are served by the 1st tier (cf. Case 1 in Fig.~\ref{fig:cachestru}), if $f_{1,K_1^c+K_1^b,\infty}>f_{2,K_2^c,\infty}(1)$. This condition holds when $\frac{P_1}{P_2}$  and $\frac{\lambda_1}{\lambda_2}$ are  above some thresholds, respectively. In this case, macro-BSs intend to  multicast the most popular files, as  they can offer relatively higher receive power, and hence higher successful transmission probability for the most popular files.  Note that when the condition in (i) holds,  by Theorem~\ref{Thm:opt-prop} and Lemma~\ref{Lem:opt-prop}, we can directly determine  $(\mathcal F_1^{c*}, \mathcal F_2^{c*})$. %the benefit  of multicast can be fully explored,  by allowing high power macro-BSs serve the most popular files.
Property (ii) implies that  the most popular file, i.e., file 1,  is not served by the 1st tier (cf. Cases 2-4  in Fig.~\ref{fig:cachestru}), if $f_{1,K_1^c,\infty}<f_{2,K_2^c,\infty}\left(\frac{K_2^c}{N-K_1^c}\right)$. This condition holds when $\frac{P_1}{P_2}$  and $\frac{\lambda_1}{\lambda_2}$ are below   some thresholds, respectively. In this case, pico-BSs intend to  multicast the most popular file, as  they can offer relatively higher receive power, and hence higher successful transmission probability for the most popular file. When the condition in (ii) holds, we can use Lemma~\ref{Lem:opt-prop}  together with Theorem~\ref{Thm:opt-prop} to reduce the set of  possible choices for $(\mathcal F_1^c,\mathcal F_2^c)$,  to further reduce the complexity for solving the optimization in \eqref{eqn:prob-asymp-caching}  without losing any optimality.
%%Property (ii) can be used to further reduce the complexity of Algorithm~\ref{alg} by reducing the number of inner loops.
%By restricting  to the choices for $(\mathcal F_1^c,\mathcal F_2^c)$ satisfying Property (b) (i) and (ii), we can greatly reduce the complexity for solving Problem~\ref{prob:opt-asymp-eq} without losing any optimality.
%Property (ii) can be used to further reduce the set of possible choices for $(\mathcal F_1^c,\mathcal F_2^c)$, so that we can greatly reduce the complexity for solving Problem~\ref{prob:opt-asymp-eq} without losing any optimality.
%the benefit  of multicast can be fully explored,  by allowing high power macro-BSs serve the most popular files.
%On the other hand, note that Theorem~\ref{Thm:opt-prop}  and Lemma~\ref{Lem:opt-prop}  can also provide design insights for low-complexity suboptimal algorithms.

\subsection{Near Optimal Solution in General Region}

First, we consider the near optimal solution for the continuous part (for given $\mathcal F_2^c$).
As illustrated  in Section~\ref{Sec:new-opt}-A, based on $\mathbf T^*(\mathcal F_2^c)$ obtained  using  Algorithm~\ref{alg:local} or  Lemma~\ref{Lem:solu-opt-infty} (when $\alpha_1=\alpha_2$), we can determine $\mathcal P^*(\mathcal F_2^c)$.
As illustrated in Section~\ref{subsec:asym-opt-prob}, any $\mathbf p^*\in \mathcal P^*(\mathcal F_2^c)$  is an optimal solution to the optimization in \eqref{eqn:prob-p-asymp-F_2^c-p}. In other words, for given $\mathcal F_2^c$, we have a set of asymptotically optimal solutions in  the high SNR and user density region.
Substituting $\mathbf p^*$ satisfying  \eqref{eqn:def-T-n*} into $q_2\left(\mathcal F_2^c,\mathbf p\right)$ in Theorem~\ref{Thm:generalKmulti}, we have $q_2\left(\mathcal F_2^c, \mathbf p^*,\mathbf T^*\right)$ given in \eqref{eq:linear_prog}.
%\small{\begin{align}
%q_2\left(\mathcal F_2^c,\mathbf p^*\right)=&\sum_{n\in \mathcal F_2^c}a_{n} \sum_{k^c=1}^{K_2^c} f_{2,k^c}(T_n^*) \sum_{i\in \mathcal I_n}\frac{p_i^*}{T_n^*}\sum_{\mathcal X\in g\left(\mathcal N_{i,-n},k^c-1\right) }\prod\limits_{m\in \mathcal X}\left(1-\eta\left(a_m\lambda_u,T_m^*\lambda_2\right)\right)\prod\limits_{m\in {\mathcal N_{i,-n}\setminus \mathcal X}}\eta\left(a_m\lambda_u,T_m^*\lambda_2\right)\nonumber\\
%=&\sum_{n\in \mathcal F_2^c}\frac{a_{n}}{T_n^*} \sum_{i\in \mathcal I_n}p_i^* \sum_{k^c=1}^{K_2^c} f_{2,k^c}(T_n^*) \sum_{\mathcal X\in g\left(\mathcal N_{i,-n},k^c-1\right) }\prod\limits_{m\in \mathcal X}\left(1-\eta\left(a_m\lambda_u,T_m^*\lambda_2\right)\right)\prod\limits_{m\in {\mathcal N_{i,-n}\setminus \mathcal X}}\eta\left(a_m\lambda_u,T_m^*\lambda_2\right)\nonumber\\
%=&\sum_{i\in \mathcal I} \left(\sum_{n\in \mathcal N_i}\frac{a_{n}}{T_n^*}  \sum_{k^c=1}^{K_2^c} f_{2,k^c}(T_n^*) \sum_{\mathcal X\in g\left(\mathcal N_{i,-n},k^c-1\right) }\prod\limits_{m\in \mathcal X}\left(1-\eta\left(a_m\lambda_u,T_m^*\lambda_2\right)\right)\prod\limits_{m\in {\mathcal N_{i,-n}\setminus \mathcal X}}\eta\left(a_m\lambda_u,T_m^*\lambda_2\right)\right)p_i^*.\nonumber\\
%&\triangleq q_2\left(\mathcal F_2^c, \mathbf p^*,\mathbf T^*\right)\nonumber
%\end{align}}
%\normalsize
\begin{figure*}[!t]
\small{\begin{align}
&q_2\left(\mathcal F_2^c,\mathbf p^*\right)\nonumber\\
%=&\sum_{n\in \mathcal F_2^c}a_{n} \sum_{k^c=1}^{K_2^c} f_{2,k^c}(T_n^*) \sum_{i\in \mathcal I_n}\frac{p_i^*}{T_n^*}\sum_{\mathcal X\in g\left(\mathcal N_{i,-n},k^c-1\right) }\prod\limits_{m\in \mathcal X}\left(1-\eta\left(a_m\lambda_u,T_m^*\lambda_2\right)\right)\prod\limits_{m\in {\mathcal N_{i,-n}\setminus \mathcal X}}\eta\left(a_m\lambda_u,T_m^*\lambda_2\right)\nonumber\\
&=\sum_{i\in \mathcal I} \left(\sum_{n\in \mathcal N_i}\frac{a_{n}}{T_n^*}  \sum_{k^c=1}^{K_2^c} f_{2,k^c}(T_n^*) \sum_{\mathcal X\in  \left\{\mathcal S \subseteq \mathcal N_{i,-n}: |\mathcal S|=k^c-1\right\}  }\prod\limits_{m\in \mathcal X}\left(1-\left(1+\frac{a_m\lambda_u}{3.5T_m^*\lambda_2}\right)^{-4.5}\right)\prod\limits_{m\in {\mathcal N_{i,-n}\setminus \mathcal X}}\left(1+\frac{a_m\lambda_u}{3.5T_m^*\lambda_2}\right)^{-4.5}\right)p_i^*.\nonumber\\
&\triangleq q_2\left(\mathcal F_2^c, \mathbf p^*,\mathbf T^*\right)\label{eq:linear_prog}
\end{align}}
\normalsize \hrulefill
\end{figure*}
%\begin{align}
%q_K\left(\mathbf{p}\right)=\sum_{n\in \mathcal N}\frac{a_{n}}{T_n^*} \sum_{i\in \mathcal I_n}p_i \left(\sum_{k=1}^KQ_{i,n,k}(\mathbf T_{i,-n}^*) f_k(T_n^*)\right)
%\end{align}
For given $\mathcal F_2^c$ (and $\mathbf T^*(\mathcal F_2^c)$), we would like to obtain the best asymptotically optimal solution which maximizes the successful transmission probability  $q_2\left(\mathcal F_2^c, \mathbf p^*,\mathbf T^*\right)$  in the general region among all the asymptotically optimal solutions in $ \mathcal P^*(\mathcal F_2^c)$.
\begin{Prob} [Optimization of $\mathbf p^*$ under Given $\mathcal F_2^c$ (and  $\mathbf T^*$)]
\begin{align}q^{\dagger}_{2}(\mathcal F_2^c)\triangleq\max_{\mathbf p^*} &\quad q_{2}(\mathcal F_2^c,\mathbf p^*,\mathbf T^*)\nonumber\\
s.t. & \quad  \eqref{eqn:cache-constr-indiv}, \eqref{eqn:cache-constr-sum},\eqref{eqn:def-T-n*}.\nonumber
\end{align}
The optimal solution is denoted as $\mathbf p^{\dagger}(\mathcal F_2^c)$.\label{Prob:asm-improvement}
\end{Prob}

Problem~\ref{Prob:asm-improvement} is a linear programming problem.
%The optimal solution to Problem~\ref{Prob:asm-improvement}  can be obtained using the simplex method.
To reduce the complexity for solving Problem~\ref{Prob:asm-improvement},  we first derive some caching probabilities which are zero based on the relationship between $\mathbf p$ and $\mathbf T$.
In particular, for all $i\in\mathcal I_n$ and $n\in\{n\in\mathcal F_2^c: T_n^*=0\}$, we have $p_i^{\dagger}({\mathcal F_c^c})=0$; for all $i\not\in\mathcal I_n$ and $n\in\{n\in\mathcal F_2^c: T_n^*=1\}$, we have $p_i^{\dagger}({\mathcal F_c^c})=0$. Thus, we have
\begin{align}
p_i^{\dagger}({\mathcal F_c^c})=0, \ i\in\mathcal I', \label{eqn:p_i-0}
\end{align}
where  $\mathcal I'\triangleq \cup_{n\in\{n\in\mathcal F_2^c: T_n^*=0\}}\mathcal I_n\cup\left(\mathcal I\setminus \cup_{n\in\{n\in\mathcal F_2^c: T_n^*=1\}}\mathcal I_n\right)$.
Then, we can compute  the remaining caching probabilities for the combinations in $\mathcal I\setminus \mathcal I'$ using the simplex method (refer to Step 6 of Algorithm~\ref{alg:sympK} for details). Therefore, using the above approach, for given $\mathcal F_2^c$, we can obtain the best asymptotically optimal solution $\mathbf p^{\dagger}(\mathcal F_2^c)$  in $\mathcal P^*(\mathcal F_2^c)$ to the optimization in \eqref{eqn:prob-p-F_2^c}.

Next, we consider the near optimal solution for the discrete part. Specifically, after obtaining $q^{\dagger}_{2}(\mathcal F_2^c)$ using the above approach for the continuous part, we consider the optimization of $ q_{1}(\mathcal F_1^c,\mathcal F_2^c)+q^{\dagger}_{2}(\mathcal F_2^c)$ over the set of $(\mathcal F_1^c, \mathcal F_2^c)$ satisfying Theorem~\ref{Thm:opt-prop} (and Lemma~\ref{Lem:opt-prop}). Let $(\mathcal F_1^{c\dagger}, \mathcal F_2^{c\dagger})$ and $q^{\dagger}$ denote the optimal solution and the optimal value.

Finally, combining the above discrete part and continuous part, we can obtain the near optimal solution $(\mathcal F_1^{c\dagger}, \mathcal F_2^{c\dagger}, \mathbf p^{\dagger}(\mathcal F_2^{c\dagger}))$ to  Problem~\ref{prob:opt} (Problem~\ref{prob:opt-eq}), as   summarized in Algorithm~\ref{alg:sympK}. We can
show that  in the general region, under a mild condition (i.e., $f_{2,k,\infty}(x)$ is convex),  the  near optimal solution $(\mathcal F_1^{c\dagger}, \mathcal F_2^{c\dagger}, \mathbf p^{\dagger}(\mathcal F_2^{c\dagger}))$ obtained by Algorithm~\ref{alg:sympK} achieves the successful transmission probability  $q^{\dagger}=q(\mathcal F_1^{c\dagger}, \mathcal F_2^{c\dagger}, \mathbf p^{\dagger}(\mathcal F_2^{c\dagger}))$   greater than or equal to that of any optimal solution to Problem~\ref{prob:opt-asymp-eq}, i.e., any  asymptotically optimal solution to Problem~\ref{prob:opt} (Problem~\ref{prob:opt-eq}).

\begin{Lem} We have  $q(\mathcal F_1^{c\dagger}, \mathcal F_2^{c\dagger}, \mathbf p^{\dagger}(\mathcal F_2^{c\dagger}))\geq q(\mathcal F_1^{c*}, \mathcal F_2^{c*}, \mathbf p^{*}(\mathcal F_2^{c*}))$, for all $\mathbf p^*(\mathcal F_2^{c*})\in\mathcal P^*(\mathcal F_2^{c*})$,
where $(\mathcal F_1^{c*}, \mathcal F_2^{c*}, \mathbf p^{*}(\mathcal F_2^{c*}))$  is an optimal solution to Problem~\ref{prob:opt-asymp-eq}.\end{Lem}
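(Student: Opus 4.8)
The plan is to sandwich the general-region value $q(\mathcal F_1^{c*},\mathcal F_2^{c*},\mathbf p^*(\mathcal F_2^{c*}))$ of an arbitrary asymptotically optimal solution between two inequalities, one coming from the continuous step and one from the discrete step of Algorithm~\ref{alg:sympK}. Throughout I would invoke the stated mild condition that $f_{2,k,\infty}(x)$ is convex: as noted after Lemma~\ref{Lem:solu-opt-infty}, this makes $q_{2,\infty}(\mathcal F_2^c,\mathbf T)$ concave, so the optimization in \eqref{eqn:prob-p-asymp-F_2^c} is a convex program, and hence Algorithm~\ref{alg:local} (or the closed form of Lemma~\ref{Lem:solu-opt-infty} when $\alpha_1=\alpha_2$) returns the exact global maximizer $\mathbf T^*(\mathcal F_2^c)$; consequently the polyhedron $\mathcal P^*(\mathcal F_2^c)$ assembled inside Algorithm~\ref{alg:sympK} is precisely the set of asymptotically optimal caching distributions for that $\mathcal F_2^c$. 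This is exactly where I expect the subtlety to sit: without convexity, Algorithm~\ref{alg:local} delivers only a stationary point, $\mathcal P^*(\mathcal F_2^c)$ need not be the \emph{exact} asymptotically optimal set, and the comparison degrades to a statement about stationary-point solutions.

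First I would treat the continuous part for a fixed $\mathcal F_2^{c*}$. By Theorem~\ref{Thm:opt-prop} (and Lemma~\ref{Lem:opt-prop} where applicable) we may assume, without loss of optimality for the asymptotic problem, that $(\mathcal F_1^{c*},\mathcal F_2^{c*})$ obeys those structural properties and hence lies in the reduced feasible set over which Algorithm~\ref{alg:sympK} ranges. Any $\mathbf p^*(\mathcal F_2^{c*})\in\mathcal P^*(\mathcal F_2^{c*})$ satisfies \eqref{eqn:def-T-n*}, i.e.\ it realizes $T_n=T_n^*(\mathcal F_2^{c*})$ through \eqref{eqn:def-T-n}; substituting this into the third summand of \eqref{eq:CPrate_multifile_noise} reduces $q_2(\mathcal F_2^{c*},\mathbf p^*)$ to the linear functional $q_2(\mathcal F_2^{c*},\mathbf p^*,\mathbf T^*)$ of \eqref{eq:linear_prog}. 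Since $\mathbf p^{\dagger}(\mathcal F_2^{c*})$ is by construction (Problem~\ref{Prob:asm-improvement}) the maximizer of exactly this functional over exactly the set $\mathcal P^*(\mathcal F_2^{c*})$, we get $q_2(\mathcal F_2^{c*},\mathbf p^*(\mathcal F_2^{c*}))=q_2(\mathcal F_2^{c*},\mathbf p^*(\mathcal F_2^{c*}),\mathbf T^*)\le q_2^{\dagger}(\mathcal F_2^{c*})$; and because $q_1$ does not depend on the caching distribution, adding $q_1(\mathcal F_1^{c*},\mathcal F_2^{c*})$ to both sides gives
\[
q(\mathcal F_1^{c*},\mathcal F_2^{c*},\mathbf p^*(\mathcal F_2^{c*}))\ \le\ q_1(\mathcal F_1^{c*},\mathcal F_2^{c*})+q_2^{\dagger}(\mathcal F_2^{c*}).
\]

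Then I would close with the discrete part. Since $(\mathcal F_1^{c*},\mathcal F_2^{c*})$ is feasible for the discrete optimization solved by Algorithm~\ref{alg:sympK} (maximizing $q_1(\mathcal F_1^c,\mathcal F_2^c)+q_2^{\dagger}(\mathcal F_2^c)$ over the reduced set of Theorem~\ref{Thm:opt-prop}) and $(\mathcal F_1^{c\dagger},\mathcal F_2^{c\dagger})$ attains its maximum, we have $q_1(\mathcal F_1^{c*},\mathcal F_2^{c*})+q_2^{\dagger}(\mathcal F_2^{c*})\le q_1(\mathcal F_1^{c\dagger},\mathcal F_2^{c\dagger})+q_2^{\dagger}(\mathcal F_2^{c\dagger})=q(\mathcal F_1^{c\dagger},\mathcal F_2^{c\dagger},\mathbf p^{\dagger}(\mathcal F_2^{c\dagger}))$. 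Chaining this with the displayed inequality yields the claim, and since the continuous-step inequality holds for every $\mathbf p^*(\mathcal F_2^{c*})\in\mathcal P^*(\mathcal F_2^{c*})$, so does the conclusion. The only genuine work beyond this bookkeeping is verifying the reduction $q_2(\mathcal F_2^{c*},\mathbf p^*)=q_2(\mathcal F_2^{c*},\mathbf p^*,\mathbf T^*)$ for every point of $\mathcal P^*(\mathcal F_2^{c*})$ — i.e.\ that \eqref{eqn:def-T-n*} forces $T_n=T_n^*$ in \eqref{eqn:def-T-n} — together with the use of the convexity hypothesis at the one place where we assert that $\mathcal P^*(\mathcal F_2^c)$ as computed by the algorithm is the true asymptotically optimal set; the discrete step itself is immediate once Theorem~\ref{Thm:opt-prop} is available.
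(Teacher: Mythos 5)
The paper states this lemma without any proof (it is introduced only by the sentence ``We can show that \ldots''), so there is no official argument to compare against; your two-step sandwich --- bound $q_2(\mathcal F_2^{c*},\mathbf p^*)$ by $q_2^{\dagger}(\mathcal F_2^{c*})$ via the linear program of Problem~\ref{Prob:asm-improvement} over $\mathcal P^*(\mathcal F_2^{c*})$, then bound $q_1(\mathcal F_1^{c*},\mathcal F_2^{c*})+q_2^{\dagger}(\mathcal F_2^{c*})$ by $q^{\dagger}$ via the discrete search --- is the natural reconstruction, and both the reduction $q_2(\mathcal F_2^{c*},\mathbf p^*)=q_2(\mathcal F_2^{c*},\mathbf p^*,\mathbf T^*)$ forced by \eqref{eqn:def-T-n*} and the role of the convexity (really concavity of $f_{2,k,\infty}$) hypothesis in guaranteeing that Algorithm~\ref{alg:local} returns the true $\mathbf T^*(\mathcal F_2^c)$ are placed correctly.

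The one step I would push back on is your ``we may assume, without loss of optimality, that $(\mathcal F_1^{c*},\mathcal F_2^{c*})$ obeys the structural properties of Theorem~\ref{Thm:opt-prop}.'' Theorem~\ref{Thm:opt-prop} is an existence statement: it guarantees \emph{some} asymptotically optimal pair lies in the reduced search set, not that every one does (indeed, the equality case \eqref{eqn:controdiction-thm2-b2-eq} in its proof produces distinct optimal pairs, only one of which is consecutive). Replacing an optimal pair outside the reduced set by one inside preserves the \emph{asymptotic} objective $q_{\infty}$, but the quantity being compared in the lemma is the \emph{general-region} value $q$, and nothing in Theorem~\ref{Thm:opt-prop} guarantees that this replacement does not decrease $q_1(\mathcal F_1^c,\mathcal F_2^c)+q_2^{\dagger}(\mathcal F_2^c)$. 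So your chain is airtight only for asymptotically optimal pairs that happen to lie in Algorithm~\ref{alg:sympK}'s search set; if the lemma is read as quantifying over \emph{every} optimal solution to Problem~\ref{prob:opt-asymp-eq}, the discrete-step inequality needs an additional argument (or the lemma should be restricted to the canonical $(\mathcal F_1^{c*},\mathcal F_2^{c*})$ satisfying Theorem~\ref{Thm:opt-prop}, which is how the paper's notation implicitly uses it). Flagging and resolving that quantifier is the only substantive work your write-up still owes.
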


\begin{algorithm} \caption{Near Optimal Solution}
\small{\begin{algorithmic}[1]
\STATE Initialize  $q^{\dagger}=0$.
     \FOR{$F_2^c=\max\{K_2^c,N-K_1^c-K_1^b\}:N-K_1^c$}
     \FOR{$n_1^c=1:F_2^c+1$}
     \STATE Choose $\mathcal F_1^{c}$ and $\mathcal F_2^{c}$ according to  Theorem~\ref{Thm:opt-prop} (and Lemma~\ref{Lem:opt-prop}).
%      Let $\mathcal F_1^{c}=\left\{n_1^c,n_1^c+1,\cdots,n_1^c+K_1^{c}-1\right\}$ and $\mathcal F_2^{c}=\mathcal N\setminus  \left(\mathcal F_1^{c}\cup\mathcal F_1^{b}\right)$, where $\mathcal F_1^{b}=\left\{ n_1^c+K_1^c, n_1^c+K_1^c+1,\cdots, n_1^c+K_1^c+N-(K_1^c+F_2^{c})-1\right\}$.
\STATE Obtain the optimal solution  $\mathbf T^*(\mathcal F_2^c)$  to the optimization in \eqref{eqn:prob-p-asymp-F_2^c}  using  Algorithm~\ref{alg:local} or   Lemma~\ref{Lem:solu-opt-infty} (when $\alpha_1=\alpha_2$).
\STATE  Determine $\mathcal I'$ and choose $p_i^{\dagger}(\mathcal F_2^c)=0$ for all $i\in \mathcal I'$ according to \eqref{eqn:p_i-0}. Then, obtain $\{p_i^{\dagger}: i\in \mathcal I\setminus \mathcal I'\}$ and $q^{\dagger}_2(\mathcal F_2^c)$ by solving Problem~\ref{Prob:asm-improvement} (under the constraint in  \eqref{eqn:p_i-0}) using  the simplex method.
               \STATE Compute
       $q_1(\mathcal F_1^c,\mathcal F_2^c)+q_{2}^{\dagger}\left(\mathcal F_2^c\right)\triangleq q_{\infty}$. If $q_{\infty}^{\dagger}<q_{\infty}$, set $q_{\infty}^{\dagger}=q_{\infty}$ and $(\mathcal F_1^{c\dagger},\mathcal F_2^{c\dagger},\mathbf p^{\dagger}(\mathcal F_2^{c\dagger}))=(\mathcal F_1^c,\mathcal F_2^c,\mathbf p^{\dagger}(\mathcal F_2^c))$.
       \ENDFOR
       \ENDFOR
\end{algorithmic}}\label{alg:sympK}
\end{algorithm}

\section{Numerical Results}\label{Sec:simu}

\begin{figure}[t]
\begin{center}
 \subfigure[\small{%Successful transmission probability versus
Cache size  at $\gamma=1$ and $\lambda_u=5\times10^{-5}$.}]
 {\resizebox{6cm}{!}{\includegraphics{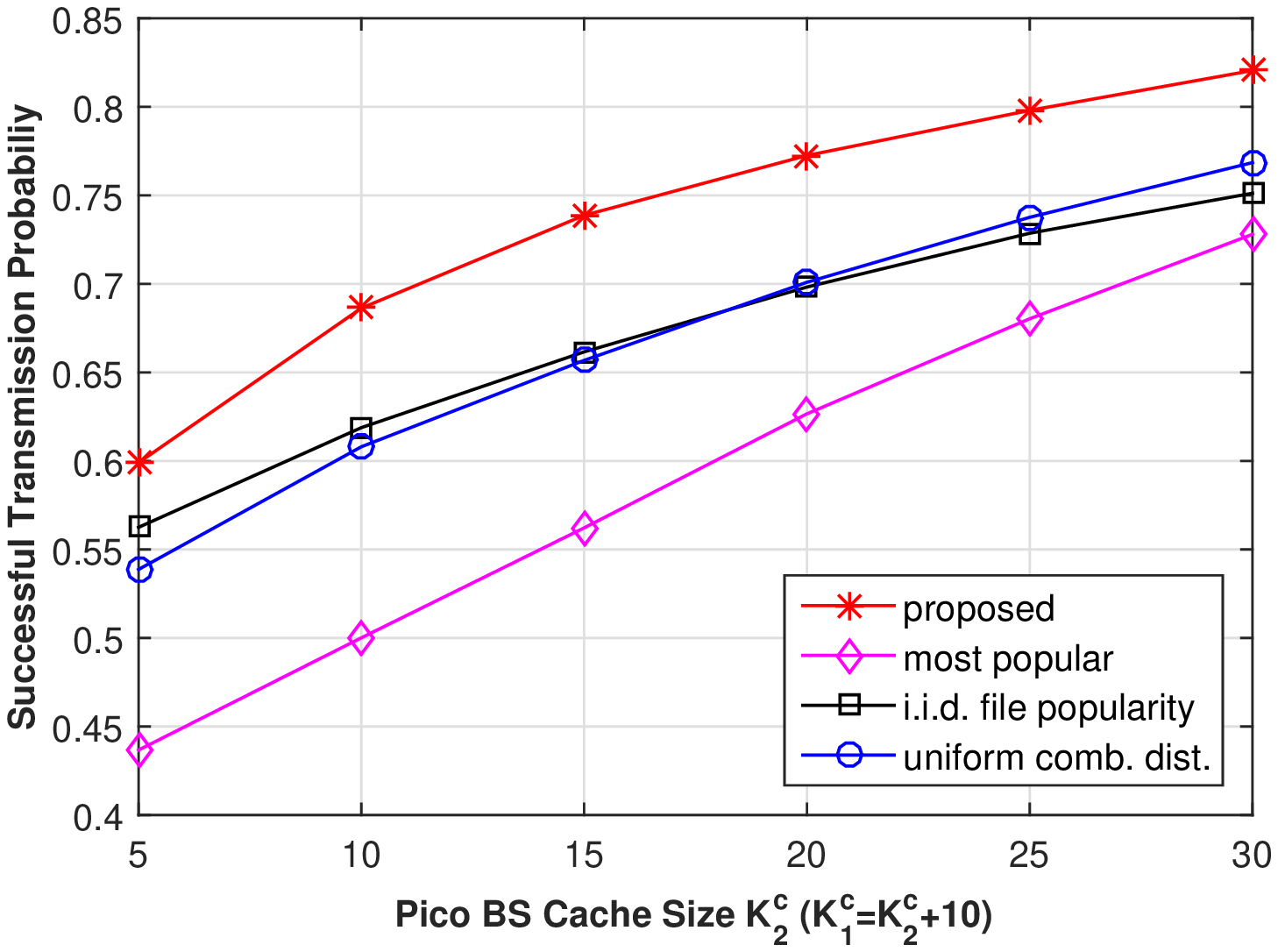}}}\quad\quad
 \subfigure[\small{%Successful transmission probability versus
Zipf exponent at $K_2^c=10$ and $\lambda_u=5\times10^{-5}$.}]
 {\resizebox{6cm}{!}{\includegraphics{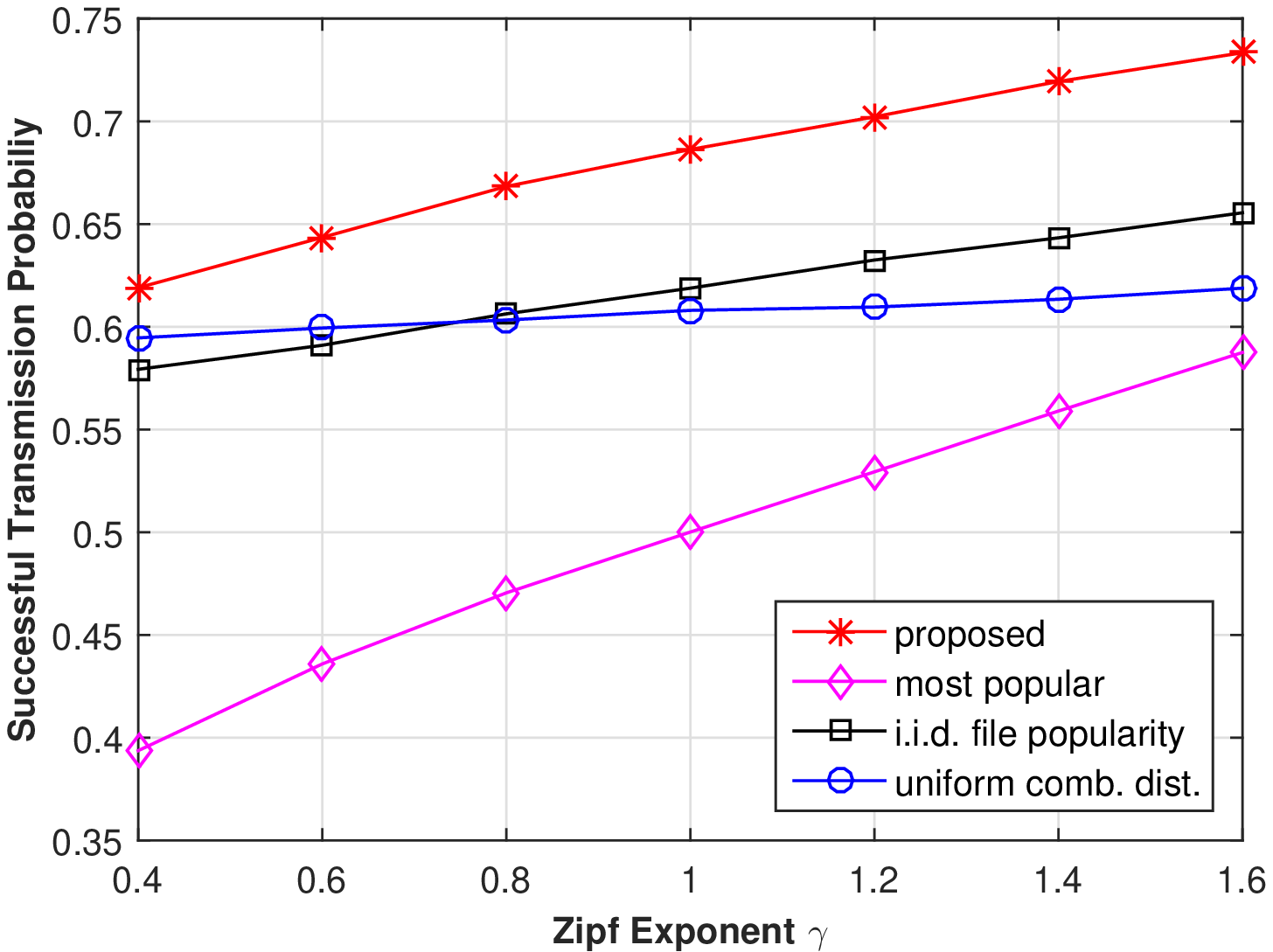}}}\quad\quad
% \subfigure[\small{%Successful transmission probability versus
%Pico-BS density  at $K_2^c=10$, $\gamma=1$, $\lambda_{1}=5\times10^{-7}$ and $\lambda_u=5\times10^{-5}$.}]
% {\resizebox{7cm}{!}{\includegraphics{fig/Pr_lambda_sb.eps}}}\quad\quad
 \subfigure[\small{%Successful transmission probability versus
User density  at $K_2^c=10$ and $\gamma=1$.}]
 {\resizebox{6cm}{!}{\includegraphics{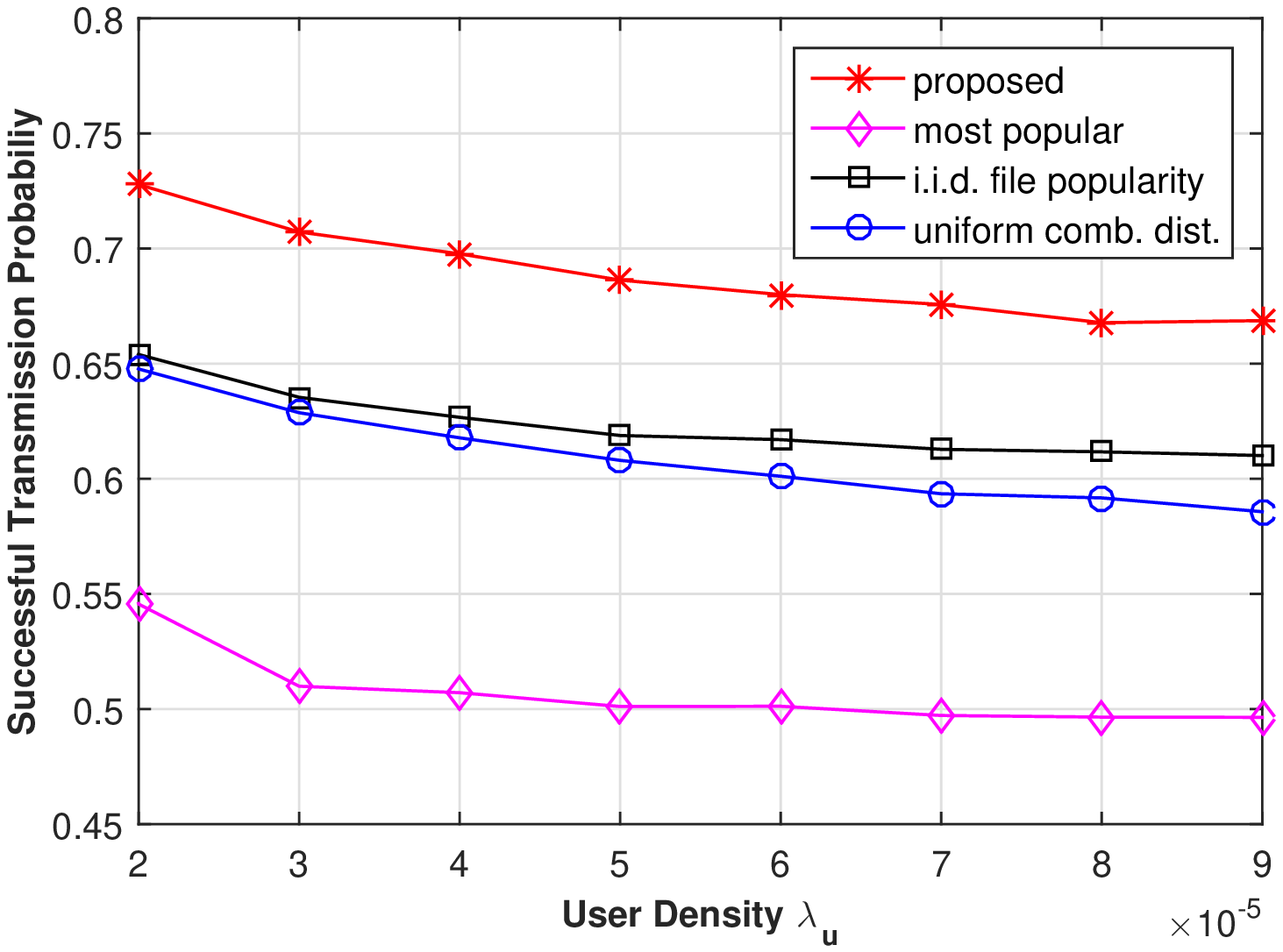}}}
 \end{center}
   \caption{\small{Successful transmission probability versus cache size $K_1^c$ and $K_2^c$, Zipf exponent $\gamma$  and user density $\lambda_u$. $K_1^c=K_2^c+10$, $K_1^b=15$, $\lambda_{1}=5\times10^{-7}$, $\lambda_{2}=3\times10^{-6}$, $\frac{P_1}{P_2}=16$dB, $\alpha_1=\alpha_2=4$, $W = 20\times 10^6$, $\tau =  2\times10^4$ and $N = 100$.}}
\label{fig:simulation-large}
\end{figure}

In this section, we compare the proposed near optimal  design given by Algorithm~\ref{alg:sympK} with three  schemes. %\footnote{When $N=1000$, the complexity of Local Opt. is not acceptable. Thus, in Fig.~\ref{fig:simulation-large}, we only consider the proposed Asymp. Opt. and the three baseline schemes.}
Baseline 1 (most popular) refers to the design in which each macro-BS selects the most  $K_1^c+K_1^b$ popular files to store and fetch, and each pico-BS selects the most $K_2^c$ popular  files to store \cite{EURASIP15Debbah,LiuYangICC16,Yang16}.
Baseline 2 (i.i.d. file popularity) refers to the design in which each macro-BS selects $K_1^c+K_1^b$ files to store and fetch,  and each pico-BS selects $K_2^c$ files to store,  in an i.i.d. manner with  file $n$ being selected with probability $a_n$  \cite{DBLP:journals/corr/BharathN15}.  Note that under this scheme, each (macro or pico) BS may cache multiple copies of one file, leading to storage waste, and each macro-BS may fetch multiple copies of one file, leading to backhaul waste.  Baseline 3 (uniform comb. dist.) refers to the design in which each macro-BS randomly selects a combination of $K_1^c+K_1^b$ different files  to store  and fetch, and each pico-BS randomly selects  a combination of $K_2^c$ different files to store,  according to the uniform distribution \cite{TamoorComLett16,DBLP:journals/corr/Tamoor-ul-Hassan15}. Under the three baseline schemes, each user requesting file $n$ is associated with the BS which stores file $n$ and offers the maximum long-term average receive power at this user. In addition, the three baseline schemes also adopt the same multicasting scheme as in our design.
In the simulation, we assume the popularity follows Zipf distribution, i.e., $a_n=\frac{n^{-\gamma}}{\sum_{n\in \mathcal N}n^{-\gamma}}$, where $\gamma$ is the Zipf exponent.
%Note that a small $\gamma$ means a heavy-tail popularity distribution.

Fig.~\ref{fig:simulation-large} illustrates   the successful transmission probability versus different parameters. From Fig.~\ref{fig:simulation-large}, we can observe that the proposed design outperforms all the three baseline schemes. In addition, the proposed design, Baseline 2 and Baseline 3 have much better performance than Baseline 1, as they provide  file diversity to improve the network performance, when the storage and backhaul resources are limited and the cache-enabled HetNet with backhaul constraints may not be able to satisfy all  file requests.
%The performance gap between the proposed design and Baseline 1 is relatively large at small cache size $K$, small Zipf exponent $\gamma$, large BS density $\lambda_b$ and small user density $\lambda_u$. This demonstrates the benefit of file diversity in these regions.
%Please note that the successful transmission probability in Fig.~\ref{fig:simulation-large} is small due to the low file availability when $K/N$ is small. This performance is achieved without backhaul cost or extra delay cost. Our goal here is to study how cache itself can affect the network performance.

Specifically,
Fig.~\ref{fig:simulation-large} (a)  illustrates   the successful transmission probability versus  the cache sizes $K_1^c$ and $K_2^c$.
We can see that the performance of all the schemes increases with $K_1^c$ and $K_2^c$. This is because as $K_1^c$ and $K_2^c$ increase, each BS can store more files, and the probability that a randomly requested file is cached at a nearby BS increases. Fig.~\ref{fig:simulation-large} (b)  illustrates   the  successful transmission probability versus  the Zipf exponent $\gamma$. We can observe that the performance of the proposed design, Baseline 1 and Baseline 2 increases with the Zipf exponent $\gamma$ faster than Baseline 3. This is because when $\gamma$ increases, the tail of popularity distribution becomes small, and hence, the average network file load decreases. The performance increase of Baseline 3 with $\gamma$ only comes from the decrease of the average network file load. While, under  the proposed design, Baseline 1 and Baseline 2, the probability that a randomly requested file is cached at a nearby BS increases with $\gamma$. Thus,
the performance increases of the proposed design, Baseline 1  and Baseline 2 with $\gamma$ are due to  the decrease of the  average network file load and the increase of the chance of a requested file being cached at a nearby BS.
Fig.~\ref{fig:simulation-large} (c)  illustrates   the successful transmission probability versus  the user density $\lambda_u$. We can see that the performance of all the schemes decreases with $\lambda_u$. This is because the probability of a cached file being requested by at least one user  increases, as $\lambda_u$ increases.

\section{Conclusion}
 In this paper, we considered  the analysis and optimization of caching and multicasting  in a large-scale  cache-enabled HetNet with backhaul constraints. We proposed a  hybrid caching design and  a corresponding multicasting design to  provide high spatial file diversity and ensure efficient content dissemination.
Utilizing tools from stochastic geometry, we  analyzed  the successful transmission probability in the general region and the asymptotic region. Then, we formulated a mixed discrete-continuous optimization problem to maximize the  successful transmission probability by optimizing the design parameters. By exploring  the structural properties,  we  obtained a near optimal solution with superior performance and manageable  complexity, based on a two-step optimization framework. The analysis and optimization results offered valuable design insights for practical cache-enabled HetNets.

%\newpage

\section*{Appendix A: Proof of Lemma~\ref{Lem:pmf-K-m}}

When typical user $u_0$ requests file $n\in\mathcal F_1^c \cup \mathcal F_1^b $, let random variable $Y_{m,n}\in\{0,1\}$ denote whether file $m\in \mathcal F_1^c  \cup \mathcal F_1^b \setminus \{n\}$ is requested by the users associated with serving macro-BS $\ell_{0}$. Specifically, when $u_0$ requests file $n\in\mathcal F_1^c $, we have $K_{1,n,0}^c=1+\sum_{m\in \mathcal F_1^c\setminus\{n\}} Y_{m,n}$ and $\overline{K}_{1,n,0}^{b}=\sum_{m\in \mathcal F_1^b} Y_{m,n}$. When $u_0$ requests file $n\in\mathcal F_1^b $, we have $\overline{K}_{1,n,0}^{c}=\sum_{m\in \mathcal F_1^c} Y_{m,n}$ and $K_{1,n,0}^b=1+\sum_{m\in \mathcal F_1^b \setminus \{n\}} Y_{m,n}$. Thus, we have
\small{\begin{align}
&\Pr \left[K_{1,n,0}^c=k^c\right]=\sum_{\mathcal X\in g(\mathcal F_{1,-n}^c,k^c-1) }\prod_{m\in \mathcal X}(1-\Pr[Y_{m,n}=0])\prod_{m\in \mathcal F_{1,-n}^c \setminus\mathcal X}\Pr[Y_{m,n}=0],\quad k^c=1 \cdots K_1^c,\nonumber\\
&\Pr \left[\overline{K}_{1,n,0}^b=k^b\right]=\sum_{\mathcal X\in g(\mathcal F_1^b,k^b) }\prod_{m\in \mathcal X}(1-\Pr[Y_{m,n}=0])\prod_{m\in \mathcal F_{1}^b \setminus\mathcal X}\Pr[Y_{m,n}=0],\quad k^b=0 \cdots F_1^b,\nonumber\\
&\Pr \Big[\overline{K}_{1,n,0}^c=k^c\Big]=\sum_{\mathcal X\in g(\mathcal F_1^c,k^c) }\prod_{m\in \mathcal X}(1-\Pr[Y_{m,n}=0])\prod_{m\in \mathcal F_{1}^c \setminus\mathcal X}\Pr[Y_{m,n}=0],\quad k^c=0 \cdots K_1^c,\nonumber\\
&\Pr \left[K_{1,n,0}^b=k^b\right]=\sum_{\mathcal X\in g(\mathcal F_{1,-n}^b,k^b-1) }\prod_{m\in \mathcal X}(1-\Pr[Y_{m,n}=0])\prod_{m\in \mathcal F_{1,-n}^b \setminus\mathcal X}\Pr[Y_{m,n}=0],\quad k^b=1 \cdots F_1^b.\nonumber
%\label{eqn:K-pmf-i}
\end{align}}
\normalsize

To prove \eqref{eqn:K-1-c}, \eqref{eqn:K-1-b-bar}, \eqref{eqn:K-1-c-bar} and \eqref{eqn:K-1-b}, it remains to  calculate $\Pr[Y_{m,n}=0]$.   The p.m.f.  of  $Y_{m,n}$  depends on the  p.d.f. of the size of the Voronoi cell of macro-BS $\ell_{0}$, i.e., the  p.d.f. of the size of the Voronoi cell to which a randomly chosen user belongs \cite{SGcellsize13}.
Thus, we can calculate the p.m.f. of  $Y_{m,n}$ using Lemma 3 of \cite{SGcellsize13} as follows
\small{\begin{align}
&\Pr[Y_{m,n}=0]
=\left(1+3.5^{-1}\frac{a_m\lambda_u}{\lambda_1}\right)^{-4.5}, \quad \ m\in \mathcal F_1^c \cup \mathcal F_1^b \setminus \{n\}.\label{eqn:y-pmf}
\end{align}}
\normalsize
Therefore, we complete the proof.

\section*{Appendix B: Proof of Lemma~\ref{Lem:pmf-K}}

When typical user $u_0$ requests file $n\in\mathcal F_2^c$, let random variable $Y_{m,n,i}\in\{0,1\}$ denote whether file $m\in \mathcal N_{i}\setminus \{n\}$ is requested by the users associated with serving pico-BS $\ell_{0}$ when pico-BS $\ell_{0}$ contains combination $i\in \mathcal I_n$. When $u_0$ requests file $n \in \mathcal F_2^c$ and serving pico-BS $\ell_{0}$ contains combination $i\in \mathcal I_n$, we have $K_{2,n,0}^c=1+\sum_{m\in \mathcal N_{i,-n}} Y_{m,n,i}$. Thus, we have
\small{\begin{align}
&\Pr \left[K_{2,n,0}^c=k^c|\text{pico-BS $\ell_0$ contains combination $i\in \mathcal I_n$}\right]\nonumber\\
=&\sum_{\mathcal X\in g\left(\mathcal N_{i,-n},k^c-1\right) }\prod_{m\in \mathcal X}(1-\Pr[Y_{m,n,i}=0])\prod_{m\in \mathcal N_{i,-n}\setminus\mathcal X}\Pr[Y_{m,n,i}=0],\quad k^c=1,\cdots, K_2^c. \label{eqn:K-pmf-i}
\end{align}}
\normalsize
The probability that pico-BS $\ell_{0}$ contains combination $i\in \mathcal I_n$ is $\frac{p_i}{T_n}$. Thus, by the law of total probability, we have
\small{\begin{align}
&\Pr \left[K_{2,n,0}^c=k^c\right]
=\sum_{i\in \mathcal I_n}\frac{p_i}{T_n}\Pr \left[K_{2,n,0}^c=k^c|\text{pico-BS $\ell_{0}$ contains combination $i\in \mathcal I_n$}\right],\quad k^c=1,\cdots, K_2^c.\nonumber%\label{eqn:proof-K-pmf}
\end{align}}
\normalsize
Thus, to prove \eqref{eqn:K-pmf}, it remains to  calculate $\Pr[Y_{m,n,i}=0]$.   The p.m.f.  of  $Y_{m,n,i}$ depends on the  p.d.f. of the size of the Voronoi cell of pico-BS $\ell_{0}$ w.r.t. file $m\in \mathcal N_{i,-n}$  when pico-BS $\ell_{0}$ contains combination $i\in \mathcal I_n$, which is unknown. We approximate this p.d.f. based on the known result of the  p.d.f. of the size of the Voronoi cell to which a randomly chosen user belongs \cite{SGcellsize13}.
Under this approximation, we can calculate the p.m.f. of  $Y_{m,n,i}$ using Lemma 3 of \cite{SGcellsize13} as follows
\small{\begin{align}
\Pr[Y_{m,n,i}=0]
=&\left(1+3.5^{-1}\frac{a_m\lambda_u}{T_m\lambda_2}\right)^{-4.5}, \quad \ m\in \mathcal N_{i,-n},\ i\in \mathcal I_n.\label{eqn:y-pmf}
\end{align}}
\normalsize

\section*{Appendix C: Proof of Theorem~\ref{Thm:generalKmulti}}

Based on~(\ref{eqn:succ-prob-def}), to prove Theorem~\ref{Thm:generalKmulti}, we calculate $q_{1}\left(\mathcal F_1^c,\mathcal F_2^c\right)$ and $q_{2}\left(\mathcal F_2^c, \mathbf p\right)$, respectively.

\subsection*{Calculation of $q_{1}\left(\mathcal F_1^c,\mathcal F_2^c\right)$}
When $u_0$ is a macro-user, as in the traditional connection-based HetNets, there are two types of interferers, namely, i) all the other macro-BSs besides its serving macro-BS, and ii) all the pico-BSs. Thus, we rewrite the SINR expression  in \eqref{eqn:SINR} as follows:
\small{\begin{align}
{\rm SINR}_{n,0}=&\frac{{D_{1,\ell_0,0}^{-\alpha_{1}}}\left|h_{1,\ell_0,0}\right|^{2}}{\sum_{\ell\in\Phi_{1}\backslash \{\ell_0\}}D_{1,\ell,0}^{-\alpha_{1}}\left|h_{1,\ell,0}\right|^{2}+\sum_{\ell\in\Phi_{2}}D_{2,\ell,0}^{-\alpha_{2}}\left|h_{2,\ell,0}\right|^{2}
\frac{P_{2}}{P_{1}}+\frac{N_{0}}{P_{1}}}=\frac{{D_{1,\ell_0,0}^{-\alpha_{1}}}\left|h_{1,\ell_0,0}\right|^{2}}{I_{1}+I_{2}\frac{P_2}{P_1}+\frac{N_0}{P_1}}, \ n\in\mathcal F_1^c\cup \mathcal F_1^b,
\label{eqn:SINR_v3}
\end{align}}
\normalsize
where
%$\Phi_{1}$ is the point process generated by macro-BSs and $\Phi_{2}$ is the point process generated by pico-BSs.
$I_{1}\triangleq\sum_{\ell\in\Phi_{1}\backslash \{\ell_{0}\}}D_{1,\ell,0}^{-\alpha_1}\left|h_{1,\ell,0}\right|^{2}$ and $I_{2}\triangleq \sum_{\ell\in\Phi_{2}}D_{2,\ell,0}^{-\alpha_2}\left|h_{2,\ell,0}\right|^{2}$.

Next, we calculate the conditional successful transmission probability of file $n\in \mathcal F_1^c \cup \mathcal F_1^b $ requested by $u_{0}$ conditioned on $D_{1,\ell_0,0}=d$ when the file load is $k$, i.e.,
%(let $k^c$ and $k^b$ denote file load of serving macro-BS from cache and backhaul respectively, $k=k^c+k^b$),
%$$q_{k,n,D_{1,\ell_0,0}}\left(d\right)\triangleq {\rm Pr}\left[\frac{W}{k}\log_{2}\left(1+{\rm SINR}_{n,0}\right)\ge \tau\big| D_{1,\ell_0,0}=d\right].$$
\small{\begin{align}
q_{k,n,D_{1,\ell_0,0}}\left(d\right)&\triangleq {\rm Pr}\left[\frac{W}{k}\log_{2}\left(1+{\rm SINR}_{n,0}\right)\ge \tau\Big| D_{1,\ell_0,0}=d\right]\notag\\
%&q_{k,n,D_{1,\ell_0,0}}\left(d\right)\notag\\
&\eqla{\rm E}_{I_{1},I_{2}}\left[{\rm Pr}\left[\left|h_{1,\ell_0,0}\right|^{2}\ge \left(2^{\frac{k\tau}{W}}-1\right)D_{1,\ell_0,0}^{\alpha_1}\left(I_{1}+I_{2}\frac{P_2}{P_1}+\frac{N_{0}}{P_1}\right)\Big|D_{1,\ell_0,0}=d\right]\right]\notag\\
&\eqlb{\rm E}_{I_{1},I_{2}}\left[\exp\left(-\left(2^{\frac{k\tau}{W}}-1\right)d^{\alpha_1}\left(I_{1}+I_{2}\frac{P_2}{P_1}+\frac{N_{0}}{P_1}\right)\right)\right]\notag
%\eqla&{\rm E}_{I_{1},\ldots,I_{N}}\left[\exp\left(-\left(2^{\frac{\tau}{W}}-1\right)d^{\alpha}\left(I_{n}+\sum_{k=1,k\neq n}^{N}I_{k}+\frac{N_{0}}{P}\right)\right)\right]\notag\\
%=&{\rm E}_{I_{n}}\left[\exp\left(-\left(2^{\frac{\tau}{W}}-1\right)d^{\alpha}I_{n}\right)\right]\prod_{k=1,k\neq n}^{N}{\rm E}_{I_{k}}\left[\exp\left(-\left(2^{\frac{\tau}{W}}-1\right)d^{\alpha}I_{k}\right)\right]\exp\left(-\left(2^{\frac{\tau}{W}}-1\right)d^{\alpha}\frac{N_{0}}{P}\right)\notag\\
%=&\mathcal{L}_{I_{1,n}}(s,d)|_{s=\left(2^{\frac{k\tau}{W}}-1\right)d^{\alpha_1}} \mathcal{L}_{I_{2}}(s,d)|_{s=\left(2^{\frac{k\tau}{W}}-1\right)d^{\alpha_1}}\exp\left(-\left(2^{\frac{k\tau}{W}}-1\right)d^{\alpha_1}\frac{N_{0}}{P_1}\right)\;.\label{eq:condi_CP_K_m}
\end{align}
\begin{align}
\hspace{11mm}&\eqlc\ \underbrace{{\rm E}_{I_{1}}\left[\exp\left(-\left(2^{\frac{k\tau}{W}}-1\right)d^{\alpha_1}I_{1}\right)\right]}_{\triangleq\mathcal{L}_{I_{1}}(s,d)|_{s=\left(2^{\frac{k\tau}{W}}-1\right)d^{\alpha_1}} }
%\underbrace{{\rm E}_{I_{2}}\left[\exp\left(-\left(2^{\frac{k\tau}{W}}-1\right)d^{\alpha_2}I_{2,-n}\right)\right]}_{\triangleq\mathcal{L}_{I_{2,-n}}(s,d)|_{s=\left(2^{\frac{k\tau}{W}}-1\right)d^{\alpha_2}} }\notag\\
\underbrace{{\rm E}_{I_{2}}\left[\exp\left(-\left(2^{\frac{k\tau}{W}}-1\right)d^{\alpha_1}I_{2}\frac{P_2}{P_1}\right)\right]}_{\triangleq\mathcal{L}_{I_{2}}(s,d)|_{s=\left(2^{\frac{k\tau}{W}}-1\right)d^{\alpha_1}} }\notag\\
&\hspace{8mm}\times\exp\left(-\left(2^{\frac{k\tau}{W}}-1\right)d^{\alpha_1}\frac{N_{0}}{P_1}\right),\label{eq:condi_CP_K_m}
\end{align}}
\normalsize
where $(a)$ is obtained based on (\ref{eqn:SINR_v3}), (b) is obtained by noting that $|h_{1,\ell_0,0}|^{2}\dis \exp(1)$, and (c) is due to the independence of the Rayleigh fading channels and the independence of the PPPs. To calculate $q_{k,n,D_{1,\ell_0,0}}\left(d\right)$ according to \eqref{eq:condi_CP_K_m}, we first calculate $\mathcal{L}_{I_{1}}(s,d)$ and $\mathcal{L}_{I_{2}}(s,d)$, respectively. The expression of $\mathcal{L}_{I_{1}}(s,d)$ is calculated as follows:
\small{\begin{align}\label{eq:LT_K1_n_m}
\mathcal{L}_{I_{1}}(s,d)&={\rm E}\left[\exp\left(-s\sum_{\ell\in\Phi_{1}\backslash \{\ell_{0}\}}D_{1,\ell,0}^{-\alpha_1}\left|h_{1,\ell,0}\right|^{2} \right)\right]
={\rm E}\left[\prod_{\ell\in\Phi_{1}\backslash \{\ell_{0}\}}\exp\left(-s D_{1,\ell,0}^{-\alpha_1}\left|h_{1,\ell,0}\right|^{2} \right)\right]\notag\\
&\eqld\exp\left(-2\pi \lambda_{1}\int_{d}^{\infty}\left(1-\frac{1}{1+sr^{-\alpha_1}}\right)r{\rm d}r\right)\notag\\
&\eqle\exp\left(-\frac{2\pi}{\alpha_1}\lambda_{1}s^{\frac{2}{\alpha_1}}B^{'}\left(\frac{2}{\alpha_1},1-\frac{2}{\alpha_1},\frac{1}{1+sd^{-\alpha_1}}\right)\right),
\end{align}}
\normalsize
where $(d)$ is obtained by utilizing the probability generating functional of PPP \cite[Page 235]{FTNhaenggi09}, and $(e)$ is obtained by first replacing $s^{-\frac{1}{\alpha_1}}r$ with $t$, and then replacing $\frac{1}{1+t^{-\alpha_1}}$ with $w$. Similarly, the expression of $\mathcal{L}_{I_{2}}(s,d)$ is calculated as follows:
\small{\begin{align}\label{eq:LT_K1_m1}
\mathcal{L}_{I_{2}}(s,d)=&{\rm E}\left[\exp\left(-s\sum_{\ell\in\Phi_{2}}D_{2,\ell,0}^{-\alpha_2}\left|h_{2,\ell,0}\right|^{2}\frac{P_2}{P_1}\right)\right]={\rm E}\left[\prod_{\ell\in\Phi_{2}}\exp\left(-s D_{2,\ell,0}^{-\alpha_2}\left|h_{2,\ell,0}\right|^{2}\frac{P_2}{P_1}\right)\right]\notag\\
=&\exp\left(-2\pi \lambda_{2}\int_{0}^{\infty}\left(1-\frac{1}{1+\frac{P_2}{P_1}sr^{-\alpha_2}}\right)r{\rm d}r\right)\notag\\
=&\exp\left(-\frac{2\pi}{\alpha_2}\lambda_{2}\left(\frac{P_2}{P_1}s\right)^{\frac{2}{\alpha_2}}B\left(\frac{2}{\alpha_2},1-\frac{2}{\alpha_2}\right)\right)\;.
\end{align}}
\normalsize
%To calculate $q_{k,n,D_{1,\ell_0,0}}\left(d\right)$ according to \eqref{eq:condi_CP_K_m}, we first calculate $\mathcal{L}_{I_{1,n}}(s,d)$, $\mathcal{L}_{I_{2}}(s,d)$, respectively. Similar to (\ref{eq:LT_K1_n}) and (\ref{eq:LT_K1_m}), we have:
%\begin{align}\label{eq:LT_K1_n_m}
%\mathcal{L}_{I_{1,n}}(s,d)=\exp\left(-\frac{2\pi}{\alpha_1}\lambda_{1}s^{\frac{2}{\alpha_1}}B^{'}\left(\frac{2}{\alpha_1},1-\frac{2}{\alpha_1},\frac{1}{1+sd^{-\alpha_1}}\right)\right),
%\end{align}
%\begin{align}\label{eq:LT_K1_k_m}
%\mathcal{L}_{I_{2}}(s,d)=\exp\left(-\frac{2\pi}{\alpha_2}\lambda_{2}\left(\frac{P_2}{P_1}s\right)^{\frac{2}{\alpha_2}}B\left(\frac{2}{\alpha_2},1-\frac{2}{\alpha_2}\right)\right)\;.
%\end{align}
Substituting (\ref{eq:LT_K1_n_m}) and (\ref{eq:LT_K1_m1}) into (\ref{eq:condi_CP_K_m}), we obtain  $q_{k,n,D_{1,\ell_0,0}}\left(d\right)$ as follows:
\small{\begin{align}\label{eq:condi_CP_K_v2_m}
q_{k,n,D_{1,\ell_0,0}}\left(d\right)
=&\exp\left(-\frac{2\pi}{\alpha_1}\lambda_{1}d^{2}\left(2^{\frac{k\tau}{W}}-1\right)^{\frac{2}{\alpha_1}}B^{'}\left(\frac{2}{\alpha_1},1-\frac{2}{\alpha_1},2^{-\frac{k\tau}{W}}\right)\right)\exp\left(-\left(2^{\frac{k\tau}{W}}-1\right)d^{\alpha_1}\frac{N_{0}}{P_1}\right)\notag\\
&\times \exp\left(-\frac{2\pi}{\alpha_2}\lambda_{2}d^{\frac{2\alpha_1}{\alpha_2}}\left(\frac{P_2}{P_1}\left(2^{\frac{k\tau}{W}}-1\right)\right)^{\frac{2}{\alpha_2}}B\left(\frac{2}{\alpha_2},1-\frac{2}{\alpha_2}\right)\right).
\end{align}}
\normalsize
Now, we calculate $q_{1}\left(\mathcal F_1^c,\mathcal F_2^c\right)$ by first removing the condition of $q_{k,n,D_{1,\ell_0,0}}\left(d\right)$ on $D_{1,\ell_0,0}=d$.  Note that we have the p.d.f. of $D_{1,\ell_0,0}$ as $f_{D_{1,\ell_0,0}}(d)=2\pi \lambda_{1}d\exp\left(-\pi \lambda_{1}d^{2}\right)$. Thus, we have:
\small{\begin{align}
&\int_{0}^{\infty}q_{k,n,D_{1,\ell_0,0}}\left(d\right)f_{D_{1,\ell_0,0}}(d){\rm d}d\nonumber\\
=&2\pi\lambda_{1}\int_{0}^{\infty}d\exp\left(-\frac{2\pi}{\alpha_1}\lambda_{1}d^{2}\left(2^{\frac{k\tau}{W}}-1\right)^{\frac{2}{\alpha_1}}B^{'}\left(\frac{2}{\alpha_1},1-\frac{2}{\alpha_1},2^{-\frac{k\tau}{W}}\right)\right)\exp\left(-\left(2^{\frac{k\tau}{W}}-1\right)d^{\alpha_1}\frac{N_{0}}{P_1}\right)\notag\\
&\times \exp\left(-\frac{2\pi}{\alpha_2}\lambda_{2}d^{\frac{2\alpha_1}{\alpha_2}}\left(\frac{P_2}{P_1}\left(2^{\frac{k\tau}{W}}-1\right)\right)^{\frac{2}{\alpha_2}}B\left(\frac{2}{\alpha_2},1-\frac{2}{\alpha_2}\right)\right) \exp\left(-\pi \lambda_{1}d^{2}\right){\rm d}d.\label{eq:CP_K_n_m}
\end{align}}
\normalsize
%Finally, let $k^c$ and $k^b$ denote file load of serving macro-BS from cache and backhaul respectively, $k=k^c+k^b$, and we have:
Therefore, by~(\ref{eqn:succ-prob-def-1}) and by letting $k=k^c+k^b$ in~(\ref{eq:CP_K_n_m}), we have
\small{\begin{align}\label{eqn:suc_transmit_m}
q_{1}\left(\mathcal F_1^c,\mathcal F_2^c\right)=&\sum_{n\in \mathcal F_1^c}a_{n}\sum_{k^c=1}^{K_1^c} \sum_{k^b=0}^{F_1^b}\Pr [K_{1,n,0}^c=k^c]\Pr[\overline{K}_{1,n,0}^{b}=k^b]\int_{0}^{\infty}q_{k,n,D_{1,\ell_0,0}}\left(d\right)f_{D_{1,\ell_0,0}}(d){\rm d}d\nonumber\\
&+\sum_{n\in \mathcal F_1^b}a_{n}\sum_{k^c=0}^{K_1^c} \sum_{k^b=1}^{F_1^b}\Pr [\overline{K}_{1,n,0}^{c}=k^c]\Pr[K_{1,n,0}^{b}=k^b]\frac{\min\{K_1^b,k^b\}}{k^b}\int_{0}^{\infty}q_{k,n,D_{1,\ell_0,0}}\left(d\right)f_{D_{1,\ell_0,0}}(d){\rm d}d.
\end{align}}
\normalsize

\subsection*{Calculation of $q_{2}\left(\mathcal F_2^c,\mathbf p\right)$}
%When $u_0$ is a macro-user, same as in the traditional connection-based HetNets, there are two types of interferers, namely, i) all the other macro-BSs besides its serving macro-BS, and ii) all the pico-BSs. When $u_0$ is a pico-user, different from  the traditional connection-based HetNets, there are three types of interferers, namely,  i) all the other pico-BSs storing the combinations containing the desired file of $u_0$  besides its serving pico-BS,   ii) all the pico-BSs without  the desired file of $u_0$, and iii) all the macro-BSs. By carefully handling these interferers and  using tools from stochastic geometry, we can obtain tractable expressions for the probabilities of the corresponding channel capacity greater than or equal to $\tau$.
% In addition, using tools from stochastic geometry, we can obtain the closed-form expressions for
%the exact p.m.f.s of $K_{1,n,0}^c,\overline{K}_{1,n,0}^b,\overline{K}_{1,n,0}^c,K_{1,n,0}^b$  and approximated p.m.f. of $K_{2,n,0}^c$.  Therefore, we can obtain $q(\mathcal F_1^c,\mathcal F_2^c, \mathbf p)$ in \eqref{eqn:succ-prob-def}, as summarized   in the following theorem.

%First, as illustrated in Section~\ref{Sec:perf}, when typical user $u_0$ requests file $n \in \mathcal F_2^c$, there are three types of interferers, i.e., the interfering pico-BSs storing the file requested  by $u_{0}$, the interfering pico-BSs storing the other files and all the macro-BSs.   Thus, we rewrite the SINR expression ${\rm SINR}_{n,0}$ in \eqref{eqn:SINR} as follows:

When $u_0$ is a pico-user, different from  the traditional connection-based HetNets, there are three types of interferers, namely,  i) all the other pico-BSs storing the combinations containing the desired file of $u_0$  besides its serving pico-BS,   ii) all the pico-BSs without  the desired file of $u_0$, and iii) all the macro-BSs. Thus, we rewrite the SINR expression in \eqref{eqn:SINR} as follows:
\small{\begin{align}
{\rm SINR}_{n,0}=&\frac{{D_{2,\ell_0,0}^{-\alpha_{2}}}\left|h_{2,\ell_0,0}\right|^{2}}{\sum_{\ell\in\Phi_{2,n}\backslash \{\ell_0\}}D_{2,\ell,0}^{-\alpha_{2}}\left|h_{2,\ell,0}\right|^{2}+\sum_{\ell\in\Phi_{2,-n}}D_{2,\ell,0}^{-\alpha_{2}}\left|h_{2,\ell,0}\right|^{2}+\sum_{\ell\in\Phi_{1}}D_{1,\ell,0}^{-\alpha_{1}}\left|h_{1,\ell,0}\right|^{2}
\frac{P_{1}}{P_{2}}+\frac{N_{0}}{P_{2}}}\nonumber\\
=&\frac{{D_{2,\ell_0,0}^{-\alpha_{2}}}\left|h_{2,\ell_0,0}\right|^{2}}{I_{2,n}+I_{2,-n}+I_{1}\frac{P_1}{P_2}+\frac{N_0}{P_2}}, \ n\in\mathcal N,
\label{eqn:SINR_v2}
\end{align}}
\normalsize
where $\Phi_{2,n}$ is the point process generated by pico-BSs containing file combination $i\in\mathcal{I}_{n}$, $\Phi_{2,-n}$ is the point process generated by pico-BSs containing file combination $i\not\in\mathcal{I}_{n}$, $I_{2,n}\triangleq\sum_{\ell\in\Phi_{2,n}\backslash \{\ell_{0}\}}D_{2,\ell,0}^{-\alpha_2}\left|h_{2,\ell,0}\right|^{2}$, $I_{2,-n}\triangleq \sum_{\ell\in\Phi_{2,-n}}D_{2,\ell,0}^{-\alpha_2}\left|h_{2,\ell,0}\right|^{2}$ and $I_{1}\triangleq \sum_{\ell\in\Phi_{1}}D_{1,\ell,0}^{-\alpha_1}\left|h_{1,\ell,0}\right|^{2}$. Due to the random caching policy and independent thinning \cite[Page 230]{FTNhaenggi09}, we obtain that $\Phi_{2,n}$ is a homogeneous PPP with density $\lambda_{2}T_n$ and $\Phi_{2,-n}$ is a homogeneous PPP with density $\lambda_{2}\left(1-T_n\right)$.

Next, we calculate the conditional successful transmission probability of file $n\in \mathcal F_2^c$ requested by $u_{0}$ conditioned on $D_{2,\ell_0,0}=d$ when the file load is $k$, denoted as   $$q_{k,n,D_{2,\ell_0,0}}\left({\bf p},d\right)\triangleq {\rm Pr}\left[\frac{W}{k}\log_{2}\left(1+{\rm SINR}_{n,0}\right)\ge \tau \Big|D_{2,\ell_0,0}=d\right].$$ Similar to~(\ref{eq:condi_CP_K_m}) and based on (\ref{eqn:SINR_v2}), we have:
\small{\begin{align}
&q_{k,n,D_{2,\ell_0,0}}\left({\bf p},d\right)\notag\\
&={\rm E}_{I_{2,n},I_{2,-n},I_{1}}\left[{\rm Pr}\left[\left|h_{2,\ell_0,0}\right|^{2}\ge \left(2^{\frac{k\tau}{W}}-1\right)D_{2,\ell_0,0}^{\alpha_2}\left(I_{2,n}+I_{2,-n}+I_{1}\frac{P_1}{P_2}+\frac{N_{0}}{P_2}\right)\Big|D_{2,\ell_0,0}=d\right]\right]\notag
%&\eqlb{\rm E}_{I_{2,n},I_{2,-n},I_{1}}\left[\exp\left(-\left(2^{\frac{k\tau}{W}}-1\right)d^{\alpha_2}\left(I_{2,n}+I_{2,-n}+I_{1}\frac{P_1}{P_2}+\frac{N_{0}}{P_2}\right)\right)\right]\notag
%\eqla&{\rm E}_{I_{1},\ldots,I_{N}}\left[\exp\left(-\left(2^{\frac{\tau}{W}}-1\right)d^{\alpha}\left(I_{n}+\sum_{k=1,k\neq n}^{N}I_{k}+\frac{N_{0}}{P}\right)\right)\right]\notag\\
%=&{\rm E}_{I_{n}}\left[\exp\left(-\left(2^{\frac{\tau}{W}}-1\right)d^{\alpha}I_{n}\right)\right]\prod_{k=1,k\neq n}^{N}{\rm E}_{I_{k}}\left[\exp\left(-\left(2^{\frac{\tau}{W}}-1\right)d^{\alpha}I_{k}\right)\right]\exp\left(-\left(2^{\frac{\tau}{W}}-1\right)d^{\alpha}\frac{N_{0}}{P}\right)\notag\\
%=&\mathcal{L}_{I_{2,n}}(s,d)|_{s=\left(2^{\frac{k\tau}{W}}-1\right)d^{\alpha_2}} \mathcal{L}_{I_{2,-n}}(s,d)|_{s=\left(2^{\frac{k\tau}{W}}-1\right)d^{\alpha_2}}\mathcal{L}_{I_{1}}(s,d)|_{s=\left(2^{\frac{k\tau}{W}}-1\right)d^{\alpha_2}}\exp\left(-\left(2^{\frac{k\tau}{W}}-1\right)d^{\alpha_2}\frac{N_{0}}{P_2}\right)\;.
\end{align}
\begin{align}
&\hspace{-30mm}= \underbrace{{\rm E}_{I_{2,n}}\left[\exp\left(-\left(2^{\frac{k\tau}{W}}-1\right)d^{\alpha_2}I_{2,n}\right)\right]}_{\triangleq\mathcal{L}_{I_{2,n}}(s,d)|_{s=\left(2^{\frac{k\tau}{W}}-1\right)d^{\alpha_2}} }\underbrace{{\rm E}_{I_{2,-n}}\left[\exp\left(-\left(2^{\frac{k\tau}{W}}-1\right)d^{\alpha_2}I_{2,-n}\right)\right]}_{\triangleq\mathcal{L}_{I_{2,-n}}(s,d)|_{s=\left(2^{\frac{k\tau}{W}}-1\right)d^{\alpha_2}} }\notag\\
&\hspace{-25mm}\times\underbrace{{\rm E}_{I_{1}}\left[\exp\left(-\left(2^{\frac{k\tau}{W}}-1\right)d^{\alpha_2}I_{1}\frac{P_1}{P_2}\right)\right]}_{\triangleq\mathcal{L}_{I_{1}}(s,d)|_{s=\left(2^{\frac{k\tau}{W}}-1\right)d^{\alpha_2}} }
 \exp\left(-\left(2^{\frac{k\tau}{W}}-1\right)d^{\alpha_2}\frac{N_{0}}{P_2}\right).\label{eq:condi_CP_K}
\end{align}}
\normalsize
To calculate $q_{k,n,D_{2,\ell_0,0}}\left({\bf p},d\right)$ according to \eqref{eq:condi_CP_K}, we first calculate $\mathcal{L}_{I_{2,n}}(s,d)$, $\mathcal{L}_{I_{2,-n}}(s,d)$ and $\mathcal{L}_{I_{1}}(s,d)$ , respectively. Similar to~(\ref{eq:LT_K1_n_m}) and~(\ref{eq:LT_K1_m1}), we have:
\small{\begin{align}\label{eq:LT_K1_n}
\mathcal{L}_{I_{2,n}}(s,d)%&{\rm E}\left[\exp\left(-s\sum_{\ell\in\Phi_{2,n}\backslash \ell_{0}}D_{2,\ell,0}^{-\alpha_2}\left|h_{2,\ell,0}\right|^{2} \right)\right]
%={\rm E}\left[\prod_{\ell\in\Phi_{2,n}\backslash \ell_{0}}\exp\left(-s D_{2,\ell,0}^{-\alpha_2}\left|h_{2,\ell,0}\right|^{2} \right)\right]\notag\\
%\eqld&\exp\left(-2\pi T_{n}\lambda_{2}\int_{d}^{\infty}\left(1-\frac{1}{1+sr^{-\alpha_2}}\right)r{\rm d}r\right)\notag\\
=\exp\left(-\frac{2\pi}{\alpha_2}T_{n}\lambda_{2}s^{\frac{2}{\alpha_2}}B^{'}\left(\frac{2}{\alpha_2},1-\frac{2}{\alpha_2},\frac{1}{1+sd^{-\alpha_2}}\right)\right),
\end{align}
\begin{align}\label{eq:LT_K1_p1}
\mathcal{L}_{I_{2,-n}}(s,d)%&{\rm E}\left[\exp\left(-s\sum_{\ell\in\Phi_{2,-n}}D_{2,\ell,0}^{-\alpha_2}\left|h_{2,\ell,0}\right|^{2}\right)\right]={\rm E}\left[\prod_{\ell\in\Phi_{2,-n}}\exp\left(-s D_{2,\ell,0}^{-\alpha_2}\left|h_{2,\ell,0}\right|^{2}\right)\right]\notag\\
%=&\exp\left(-2\pi (1-T_n)\lambda_{2}\int_{0}^{\infty}\left(1-\frac{1}{1+sr^{-\alpha_2}}\right)r{\rm d}r\right)\notag\\
=\exp\left(-\frac{2\pi}{\alpha_2}(1-T_n)\lambda_{2}s^{\frac{2}{\alpha_2}}B\left(\frac{2}{\alpha_2},1-\frac{2}{\alpha_2}\right)\right)\;,
\end{align}
\begin{align}\label{eq:LT_K1_p2}
\mathcal{L}_{I_{1}}(s,d)%&{\rm E}\left[\exp\left(-s\sum_{\ell\in\Phi_{1}}D_{1,\ell,0}^{-\alpha_1}\left|h_{1,\ell,0}\right|^{2}\frac{P_1}{P_2}\right)\right]={\rm E}\left[\prod_{\ell\in\Phi_{1}}\exp\left(-s D_{1,\ell,0}^{-\alpha_1}\left|h_{1,\ell,0}\right|^{2}\frac{P_1}{P_2}\right)\right]\notag\\
%=&\exp\left(-2\pi \lambda_{1}\int_{0}^{\infty}\left(1-\frac{1}{1+\frac{P_1}{P_2}sr^{-\alpha_1}}\right)r{\rm d}r\right)\notag\\
=\exp\left(-\frac{2\pi}{\alpha_1}\lambda_{1}\left(\frac{P_1}{P_2}s\right)^{\frac{2}{\alpha_1}}B\left(\frac{2}{\alpha_1},1-\frac{2}{\alpha_1}\right)\right)\;.
\end{align}}
\normalsize
Substituting (\ref{eq:LT_K1_n}), (\ref{eq:LT_K1_p1}) and (\ref{eq:LT_K1_p2}) into (\ref{eq:condi_CP_K}), we obtain  $q_{k,n,D_{2,\ell_0,0}}\left({\bf p},d\right)$ as follows:
\small{\begin{align}\label{eq:condi_CP_K_v2}
q_{k,n,D_{2,\ell_0,0}}\left({\bf p},d\right)
%=&{\rm E}_{I_{i},I_{{\rm other}}}\left[{\rm Pr}\left(\left|h_{0,0}\right|^{2}\ge \left(2^{\frac{K\tau}{W}}-1\right)d^{\alpha}\left(I_{i}+I_{{\rm other}}+\frac{N_{0}}{P}\right)\Big|u_{0}\in\mathcal{U}_{n},D_{0,0}=d\right)\right]\notag\\
%\eqla&{\rm E}_{I_{1},\ldots,I_{N}}\left[\exp\left(-\left(2^{\frac{\tau}{W}}-1\right)d^{\alpha}\left(I_{n}+\sum_{k=1,k\neq n}^{N}I_{k}+\frac{N_{0}}{P}\right)\right)\right]\notag\\
%=&{\rm E}_{I_{n}}\left[\exp\left(-\left(2^{\frac{\tau}{W}}-1\right)d^{\alpha}I_{n}\right)\right]\prod_{k=1,k\neq n}^{N}{\rm E}_{I_{k}}\left[\exp\left(-\left(2^{\frac{\tau}{W}}-1\right)d^{\alpha}I_{k}\right)\right]\exp\left(-\left(2^{\frac{\tau}{W}}-1\right)d^{\alpha}\frac{N_{0}}{P}\right)\notag\\
=&\exp\left(-\frac{2\pi}{\alpha_2}T_n\lambda_{2}d^{2}\left(2^{\frac{k\tau}{W}}-1\right)^{\frac{2}{\alpha_2}}B^{'}\left(\frac{2}{\alpha_2},1-\frac{2}{\alpha_2},2^{-\frac{k\tau}{W}}\right)\right) \exp\left(-\left(2^{\frac{k\tau}{W}}-1\right)d^{\alpha_2}\frac{N_{0}}{P_2}\right)\notag\\ &\times\exp\left(-\frac{2\pi}{\alpha_2}\left(1-T_n\right)\lambda_{2}d^{2}\left(2^{\frac{k\tau}{W}}-1\right)^{\frac{2}{\alpha_2}}B\left(\frac{2}{\alpha_2},1-\frac{2}{\alpha_2}\right)\right)\notag\\
&\times \exp\left(-\frac{2\pi}{\alpha_1}\lambda_{1}d^{\frac{2\alpha_2}{\alpha_1}}\left(\frac{P_1}{P_2}\left(2^{\frac{k\tau}{W}}-1\right)\right)^{\frac{2}{\alpha_1}}B\left(\frac{2}{\alpha_1},1-\frac{2}{\alpha_1}\right)\right).
\end{align}}
\normalsize
Now, we calculate $q_{2}\left(\mathcal F_2^c,{\bf p}\right)$ by first removing the condition of $q_{k,n,D_{2,\ell_0,0}}\left({\bf p},d\right)$ on $D_{2,\ell_0,0}=d$.  Note that we have the p.d.f. of $D_{2,\ell_0,0}$ as $f_{D_{2,\ell_0,0}}(d)=2\pi T_n\lambda_{2}d\exp\left(-\pi T_n\lambda_{2}d^{2}\right)$, as pico-BSs storing file $n$ form a homogeneous PPP with density $T_n\lambda_{2}$. Thus, we have:
\small{\begin{align}
&\int_{0}^{\infty}q_{k,n,D_{2,\ell_0,0}}\left({\bf p},d\right)f_{D_{2,\ell_0,0}}(d){\rm d}d\nonumber\\
=&2\pi T_n\lambda_{2}\int_{0}^{\infty}d\exp\left(-\frac{2\pi}{\alpha_2}T_n\lambda_{2}d^{2}\left(2^{\frac{k\tau}{W}}-1\right)^{\frac{2}{\alpha_2}}B^{'}\left(\frac{2}{\alpha_2},1-\frac{2}{\alpha_2},2^{-\frac{k\tau}{W}}\right)\right) \exp\left(-\left(2^{\frac{k\tau}{W}}-1\right)d^{\alpha_2}\frac{N_{0}}{P_2}\right)\nonumber\\ &\times\exp\left(-\frac{2\pi}{\alpha_2}\left(1-T_n\right)\lambda_{2}d^{2}\left(2^{\frac{k\tau}{W}}-1\right)^{\frac{2}{\alpha_2}}B\left(\frac{2}{\alpha_2},1-\frac{2}{\alpha_2}\right)\right)\exp\left(-\pi T_n\lambda_{2}d^{2}\right)\notag\\
&\times \exp\left(-\frac{2\pi}{\alpha_1}\lambda_{1}d^{\frac{2\alpha_2}{\alpha_1}}\left(\frac{P_1}{P_2}\left(2^{\frac{k\tau}{W}}-1\right)\right)^{\frac{2}{\alpha_1}}B\left(\frac{2}{\alpha_1},1-\frac{2}{\alpha_1}\right)\right)
 {\rm d}d.\label{eq:CP_K_n}
\end{align}}
\normalsize
Therefore, by~(\ref{eqn:succ-prob-def-2}) and by letting $k=k^c$ in~(\ref{eq:CP_K_n}), we have
\small{\begin{align}\label{eqn:suc_transmit_p}
q_{2}\left(\mathcal F_2^c,{\bf p}\right)=\sum_{n\in \mathcal F_2^c}a_{n}\sum_{k=1}^{K_2^c} \Pr [K_{2,n,0}^c=k^c]\int_{0}^{\infty}q_{k,n,D_{2,\ell_0,0}}\left({\bf p},d\right)f_{D_{2,\ell_0,0}}(d){\rm d}d.
\end{align}}
\normalsize

\section*{Appendix D: Proof of Lemma~\ref{Lem:asym-perf} and Lemma~\ref{Lem:asym-perf-v2}}

\subsection*{Proof of Lemma~\ref{Lem:asym-perf}}
When $\frac{P}{N}\to \infty$, $\exp\left(-\left(2^{\frac{k\tau}{W}}-1\right)d^{\alpha}\frac{N_{0}}{P_1}\right)\to1$ and $\exp\left(-\left(2^{\frac{k\tau}{W}}-1\right)d^{\alpha}\frac{N_{0}}{P_2}\right)\to1$. When $\lambda_u\to \infty$, discrete random variables $K_{1,n,0}^c,\overline{K}_{1,n,0}^c\to K_1^c$, $\overline{K}_{1,n,0}^b,K_{1,n,0}^b\to F_1^b$ and $K_{2,n,0}^c\to K_2^c$ in distribution. Thus, when $P_1=\beta P$, $P_2=P$,
 $\frac{P}{N_0}\to \infty$, and $\lambda_u\to \infty$, we can show $f_{1,K_1^c+\min\{K_1^b,F_1^b\}}\to f_{1,K_1^c+\min\{K_1^b,F_1^b\},\infty}$ and $f_{2,K_2^c}(x)\to f_{2,K_2^c,\infty}(x)$.
%\small{\begin{align}
%f_{1,K_1^c+\min\{K_1^b,F_1^b\}}=& 2\pi\lambda_{1}\int_{0}^{\infty} d\exp\left(-\pi\lambda_{1}d^{2}\right)\exp\left(-\frac{2\pi\lambda_{2}}{\alpha}d^2\left(\frac{1}{\beta}\left(2^{\frac{(K_1^c+\min\{K_1^b,F_1^b\})\tau}{W}}-1\right)\right)^{\frac{2}{\alpha}}B\left(\frac{2}{\alpha},1-\frac{2}{\alpha}\right)\right)\nonumber\\
%&  \times\exp\left(-\frac{2\pi\lambda_{1}}{\alpha}d^2\left(2^{\frac{(K_1^c+\min\{K_1^b,F_1^b\})\tau}{W}}-1\right)^{\frac{2}{\alpha}}B^{'}\left(\frac{2}{\alpha},1-\frac{2}{\alpha},2^{-\frac{(K_1^c+\min\{K_1^b,F_1^b\})\tau}{W}}\right)\right){\rm d}d\nonumber\\
%=&f_{1,K_1^c+\min\{K_1^b,F_1^b\},\infty},\label{eqn:f-1-k-appendix}\\
%f_{2,K_2^c}(x)= &2\pi\lambda_{2}x\int_{0}^{\infty} d\exp\left(-\pi\lambda_{2}xd^{2}\right)\exp\left(-\frac{2\pi\lambda_{1}}{\alpha}d^2\left(\beta\left(2^{\frac{K_2^c\tau}{W}}-1\right)\right)^{\frac{2}{\alpha}}B\left(\frac{2}{\alpha},1-\frac{2}{\alpha}\right)\right)\nonumber\\
%& \times
%\exp\left(-\frac{2\pi\lambda_{2}}{\alpha}d^2\left(2^{\frac{K_2^c\tau}{W}}-1\right)^{\frac{2}{\alpha}}\left(xB^{'}\left(\frac{2}{\alpha},1-\frac{2}{\alpha},2^{-\frac{K_2^c\tau}{W}}\right)+\left(1-x\right)B\left(\frac{2}{\alpha},1-\frac{2}{\alpha}\right)\right)\right)
%{\rm d}d\nonumber\\
%=&f_{2,K_2^c,\infty}(x).\label{eqn:f-2-k-appendix}
%\end{align}}
%\normalsize{
Thus, we can prove Lemma~\ref{Lem:asym-perf}.

\subsection*{Proof of Lemma~\ref{Lem:asym-perf-v2}}
When $P_1=\beta P$, $P_2=P$, $\frac{P}{N_0}\to \infty$, $\lambda_u\to \infty$, and $\alpha_1=\alpha_2=\alpha$ we have:
\small{\begin{align}
f_{1,K_1^c+\min\{K_1^b,F_1^b\},\infty}=& 2\pi\lambda_{1}\int_{0}^{\infty} d\exp\left(-\pi\lambda_1\omega_{K_1^c+\min\{K_1^b,F_1^b\}}d^2\right){\rm d}d=\frac{1}{\omega_{K_1^c+\min\{K_1^b,F_1^b\}}},\label{eqn:f-1-k-appendix_cf}\\
f_{2,K_2^c,\infty}(x)=&2\pi\lambda_{2}\int_{0}^{\infty} d\exp\left(-\pi\lambda_2d^2\left(\theta_{2,K_2^c}+x\theta_{1,K_2^c}\right)\right){\rm d}d=\frac{x}{\theta_{2,K_2^c}+\theta_{1,K_2^c}x},\label{eqn:f-2-k-appendix_cf}
\end{align}}
\normalsize
where $\omega_k$, $\theta_{1,k}$ and $\theta_{2,k}$ are given by \eqref{eqn:c_1_k}, \eqref{eqn:c_2_1_k} and \eqref{eqn:c_2_2_k}. Noting that $\int_{0}^{\infty}d\exp\left(-cd^{2}\right){\rm d}d=\frac{1}{2c}$ ($c$ is a constant), we can solve integrals in \eqref{eqn:f-1-k-infty} and \eqref{eqn:f-2-k-infty}. Thus, by Lemma \ref{Lem:asym-perf}, we can prove Lemma~\ref{Lem:asym-perf-v2}.

\section*{Appendix E: Proof of Lemma~\ref{Lem:mono-general-asym}}

To prove Lemma~\ref{Lem:mono-general-asym}, we first have the following lemma.
\begin{Lem} [Monotonicity of $f_{2,k,\infty}(x)$]
$f_{2,k,\infty}(x)$ is an increasing function of $x$.
\label{Lem:monotonicity-f-2-k}
\end{Lem}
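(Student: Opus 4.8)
The plan is to collapse the integral in \eqref{eqn:f-2-k-infty} into a form in which the dependence on $x$ is transparently monotone, by a scaling substitution that absorbs the leading factor $x$. Abbreviate the $x$-independent constants in \eqref{eqn:f-2-k-infty} by $A\triangleq\frac{2\pi\lambda_1}{\alpha_1}\left(\beta(2^{k\tau/W}-1)\right)^{2/\alpha_1}B(\tfrac{2}{\alpha_1},1-\tfrac{2}{\alpha_1})$, $B_1\triangleq\frac{2\pi\lambda_2}{\alpha_2}(2^{k\tau/W}-1)^{2/\alpha_2}B'(\tfrac{2}{\alpha_2},1-\tfrac{2}{\alpha_2},2^{-k\tau/W})$ and $B_2\triangleq\frac{2\pi\lambda_2}{\alpha_2}(2^{k\tau/W}-1)^{2/\alpha_2}B(\tfrac{2}{\alpha_2},1-\tfrac{2}{\alpha_2})$; since $\alpha_1,\alpha_2>2$ and $k\tau/W>0$, one has $A\ge 0$, $B_2>0$ and $\alpha_2/\alpha_1>0$. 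With this notation \eqref{eqn:f-2-k-infty} reads
\[
f_{2,k,\infty}(x)=2\pi\lambda_2 x\int_0^\infty d\,e^{-\pi\lambda_2 x d^2}\,e^{-A d^{2\alpha_2/\alpha_1}}\,e^{-(B_2-x(B_2-B_1))d^2}\,{\rm d}d .
\]

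First I would substitute $u=\sqrt{\pi\lambda_2 x}\,d$, so that $d^2=u^2/(\pi\lambda_2 x)$ and $d\,{\rm d}d=u\,{\rm d}u/(\pi\lambda_2 x)$; the prefactor $2\pi\lambda_2 x$ then cancels the Jacobian $1/(\pi\lambda_2 x)$ up to the constant $2$, and the Gaussian weight becomes $e^{-u^2}$, giving
\[
f_{2,k,\infty}(x)=2\int_0^\infty u\,e^{-u^2}\exp\left(-\frac{A\,u^{2\alpha_2/\alpha_1}}{(\pi\lambda_2 x)^{\alpha_2/\alpha_1}}\right)\exp\left(-\frac{u^2}{\pi\lambda_2}\left(\frac{B_2}{x}-(B_2-B_1)\right)\right){\rm d}u .
\]
Now all of the $x$-dependence is carried by the two quantities $(\pi\lambda_2 x)^{-\alpha_2/\alpha_1}$ and $B_2/x$, each strictly decreasing in $x$. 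Hence, for every fixed $u>0$, the first exponential is nondecreasing in $x$ (because $A\ge 0$) and the second exponential is strictly increasing in $x$ (because $B_2>0$), so the integrand is strictly increasing in $x$ on $(0,\infty)$. Monotonicity passes to the integral, and $f_{2,k,\infty}$ is increasing — in fact strictly increasing on the interval of interest $[0,1]$, with $f_{2,k,\infty}(x)\to 0$ as $x\to 0^+$ since $B_2/x\to\infty$.

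There is essentially no analytic difficulty here; the only point to get right is the bookkeeping in the scaling substitution, so that the explicit $x$ in $2\pi\lambda_2 x$ is fully absorbed and $x$ is left entering only through the two manifestly monotone terms. (Alternatively one could argue by coupling: raising $x=T_n$ simultaneously brings the nearest file-$n$ pico-BS closer and replaces full-range interferers in $\Phi_{2,-n}$ by the weaker, guard-radius-restricted interferers in $\Phi_{2,n}$; but the substitution argument is shorter and self-contained.) This monotonicity is precisely the ingredient used in Appendix~E to deduce Lemma~\ref{Lem:mono-general-asym}.
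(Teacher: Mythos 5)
Your proof is correct and is essentially the paper's own argument: the paper makes the equivalent substitution $y=\exp(-\pi\lambda_2 x d^2)$ (i.e., $y=e^{-u^2}$ in your variable) to absorb the prefactor and the leading Gaussian into an $x$-free measure, and then observes exactly as you do that the remaining $x$-dependence enters only through $1/x$ and $(1/x)^{\alpha_2/\alpha_1}$, which are decreasing and multiply nonpositive exponents, so the integrand is pointwise increasing in $x$. No gap; the two write-ups differ only in the cosmetic choice of integration variable.
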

\begin{proof}
By replacing $\exp\left(-\pi x\lambda_2d^2\right)$ with $y$ in~(\ref{eqn:f-2-k-infty}), we have:
\small{\begin{align}\label{eqn:prop-increase}
f_{2,k,\infty}(x)
%\triangleq &2\pi x\lambda_{2}\int_{0}^{\infty}d\exp\left(-\pi T_n\lambda_{2}d^{2}\right)\exp\left(-\frac{2\pi}{\alpha_2}T_n\lambda_{2}d^{2}\left(2^{\frac{k\tau}{W}}-1\right)^{\frac{2}{\alpha_2}}B^{'}\left(\frac{2}{\alpha_2},1-\frac{2}{\alpha_2},2^{-\frac{k\tau}{W}}\right)\right) \notag\\ &\times\exp\left(-\frac{2\pi}{\alpha_2}\left(1-x\right)\lambda_{2}d^{2}\left(2^{\frac{k\tau}{W}}-1\right)^{\frac{2}{\alpha_2}}B\left(\frac{2}{\alpha_2},1-\frac{2}{\alpha_2}\right)\right)\exp\left(-\left(2^{\frac{k\tau}{W}}-1\right)d^{\alpha_2}\frac{N_{0}}{P_2}\right) \notag\\
%&\times \exp\left(-\frac{2\pi}{\alpha_1}\lambda_{1}d^{\frac{2\alpha_2}{\alpha_1}}\left(\frac{P_1}{P_2}\left(2^{\frac{k\tau}{W}}-1\right)\right)^{\frac{2}{\alpha_1}}B\left(\frac{2}{\alpha_1},1-\frac{2}{\alpha_1}\right)\right)
%{\rm d}d.\notag\\
%%&\hspace{-13mm}\overset{y\triangleq \exp\left(-\pi T_n\lambda_2d^2\right)}{=}
%&\hspace{-4mm}\eqla
=&\int_{0}^{1}y^{-\frac{2}{\alpha_2}\left(2^{\frac{k\tau}{W}}-1\right)^{\frac{2}{\alpha_2}}\left(B\left(\frac{2}{\alpha_2},1-\frac{2}{\alpha_2}\right)-B^{'}\left(\frac{2}{\alpha_2},1-\frac{2}{\alpha_2},2^{-\frac{k\tau}{W}}\right)\right) }y^{\frac{2}{\alpha_2x}\left(2^{\frac{k\tau}{W}}-1\right)^{\frac{2}{\alpha_2}}B\left(\frac{2}{\alpha_2},1-\frac{2}{\alpha_2}\right) }\notag\\
&\times y^{\left(\frac{-\ln y}{\pi\lambda_2}\right)^{\frac{\alpha_2}{\alpha_1}-1}\frac{2}{\alpha_1}\left(\frac{1}{x}\right)^\frac{\alpha_2}{\alpha_1}\frac{\lambda_1}{\lambda_2}\left(\frac{P_1}{P_2}\right)^{\frac{2}{\alpha_1}}\left(2^{\frac{k\tau}{W}}-1\right)^{\frac{2}{\alpha_1}}B\left(\frac{2}{\alpha_1},1-\frac{2}{\alpha_1}\right)}
{\rm d}y.
\end{align}}
\normalsize
%\begin{align}
%y\triangleq \exp\left(-\pi T_n\lambda_2d^2\right)
%\end{align}
%\begin{align}
%f_{2,K_2^c}(T_n)\triangleq &\int_{0}^{1}y^{-\frac{2}{\alpha_2}\left(2^{\frac{k\tau}{W}}-1\right)^{\frac{2}{\alpha_2}}\left(B\left(\frac{2}{\alpha_2},1-\frac{2}{\alpha_2}\right)-B^{'}\left(\frac{2}{\alpha_2},1-\frac{2}{\alpha_2},2^{-\frac{k\tau}{W}}\right)\right) }\notag\\ &\times y^{\frac{2}{\alpha_2T_n}\left(2^{\frac{k\tau}{W}}-1\right)^{\frac{2}{\alpha_2}}B\left(\frac{2}{\alpha_2},1-\frac{2}{\alpha_2}\right) }\notag\\
%&\times y^{\left(\frac{-lny}{\pi\lambda_2}\right)^{\frac{\alpha_2}{\alpha_1}-1}\frac{2}{\alpha_1}\left(\frac{1}{T_n}\right)^\frac{\alpha_2}{\alpha_1}\frac{\lambda_1}{\lambda_2}\left(\frac{P_1}{P_2}\right)^{\frac{2}{\alpha_1}}\left(2^{\frac{k\tau}{W}}-1\right)^{\frac{2}{\alpha_1}}B\left(\frac{2}{\alpha_1},1-\frac{2}{\alpha_1}\right)}\notag\\
%&\times y^{\left(\frac{-lny}{\pi\lambda_2}\right)^{\frac{\alpha_2}{2}-1}\left(\frac{1}{T_n}\right)^\frac{\alpha_2}{2}\frac{1}{\pi\lambda_2}\left(2^{\frac{k\tau}{W}}-1\right)\frac{N_0}{P_2}}{\rm d}y.
%\end{align}
When $y \in (0,1)$ and $a\in \left(0,\infty\right)$, $y^a$ is a decreasing function of $a$. Because $B\left(\frac{2}{\alpha_2}, 1-\frac{2}{\alpha_2}\right)$, $B\left(\frac{2}{\alpha_1},1-\frac{2}{\alpha_1}\right)$ and $2^{\frac{k\tau}{W}}-1>0$, and $\frac{1}{x}$ and $\left(\frac{1}{x}\right)^{\frac{\alpha_2}{\alpha_1}}$ are decreasing functions of $x$. The integrand is an increasing function of $x$ for all $y\in(0,1)$. Therefore, we can show that $f_{2,k,\infty}(x)$ is an increasing function of $x$.
\end{proof}

%\subsection*{Proof of Lemma~\ref{Lem:mono-general-asym}}
Now, we prove Lemma~\ref{Lem:mono-general-asym}.
Let $(\mathcal F_1^{c*}, \mathcal F_2^{c*}, \mathbf T^*)$ denote an optimal solution to Probelm~\ref{prob:opt-asymp-eq}. Consider $n_1, n_2 \in \mathcal F_2^{c*}$ satisfying $a_{n_1}> a_{n_2}$. Suppose $T_{n_1}^*<T_{n_2}^*$. Based on Lemma~\ref{Lem:monotonicity-f-2-k}, we have  $f_{2,K_2^{c},\infty}(T_{n_1}^*)<f_{2,K_2^{c},\infty}(T_{n_2}^*)$. Now, we construct a feasible solution $(\mathcal F_1^{c'}, \mathcal F_2^{c'}, \mathbf T^{'})$ to Problem~\ref{prob:opt-asymp-eq} by choosing $\mathcal F_1^{c'}=\mathcal F_1^{c*}$, $\mathcal F_2^{c'}=\mathcal F_2^{c*}$, $T_{n_1}'=T_{n_2}^{*}, T_{n_2}'=T_{n_1}^{*}$, and $T_n^{'}=T_n^*$ for all $n \in \mathcal F_2^c \setminus \{n_1,n_2\}$. Thus, by Lemma~\ref{Lem:asym-perf} and the optimality of $(\mathcal F_1^{c*}, \mathcal F_2^{c*}, \mathbf T^*)$, we have:
\small{\begin{align}\label{eqn:contradiction-thm2-a}
q_{\infty}\left(\mathcal F_1^{c'},\mathcal F_2^{c'},\mathbf T^{'}\right)-q_{\infty}\left(\mathcal F_1^{c*},\mathcal F_2^{c*},\mathbf T^{*}\right)=\left(a_{n_1}-a_{n_2}\right)\left(f_{2,K_2^c,\infty}(T_{n_2}^*)-f_{2,K_2^c,\infty}(T_{n_1}^*)\right)\leq 0.
\end{align}}
\normalsize
Since $a_{n_1}> a_{n_2}$, by~(\ref{eqn:contradiction-thm2-a}), we have $f_{2,K_2^c,\infty}(T_{n_2}^*)-f_{2,K_2^c,\infty}(T_{n_1}^*)\leq0$, which contradicts the assumption. Therefore, by contradiction, we can prove Lemma~\ref{Lem:mono-general-asym}.

\section*{Appendix F: Proof of Lemma~\ref{Lem:solu-opt-infty}}
For given $\mathcal F_2^c$, when $\alpha_1=\alpha_2 = \alpha$,
 $\frac{P}{N_0}\to \infty$,  and $\lambda_u  \to \infty $, the Lagrangian of the optimization in \eqref{eqn:prob-p-asymp-F_2^c} is given by
\small{$$L(\mathbf T, \boldsymbol \lambda, \boldsymbol \eta, \nu)=\sum_{n\in \mathcal F_2^c}\frac{a_{n}T_n}{\theta_{2,K_2^c}+\theta_{1,K_2^c}T_n}+\sum_{n\in \mathcal F_2^c}\lambda_nT_n+\sum_{n\in \mathcal F_2^c}\eta_n(1-T_n)+\nu\left(K_2^c-\sum_{n\in \mathcal F_2^c} T_n\right),$$}
\normalsize
where $\lambda_n$ and  $\eta_n\geq 0$  are the Lagrange multipliers associated with \eqref{eqn:cache-constr-indiv-t},  $\nu$ is the Lagrange multiplier associated with \eqref{eqn:cache-constr-sum-t}, $\boldsymbol\lambda \triangleq (\lambda_n)_{n\in \mathcal F_2^c}$, and $\boldsymbol\eta\triangleq (\eta_n)_{n\in \mathcal F_2^c}$. Thus, we have
\small{$$\frac{\partial L}{\partial  T_n}(\mathbf T,\boldsymbol \lambda, \boldsymbol \eta, \nu)=\frac{a_n\theta_{2,K_2^c}}{(\theta_{2,K_2^c}+\theta_{1,K_2^c}T_n)^2}+\lambda_n-\eta_n-\nu.$$}
\normalsize
Since strong duality holds, primal optimal $\mathbf T^*$ and dual optimal  $\boldsymbol\lambda^*$, $\boldsymbol\eta^*$, $\nu^*$ satisfy KKT conditions, i.e.,  (i) primal constraints: \eqref{eqn:cache-constr-indiv-t}, \eqref{eqn:cache-constr-sum-t},  (ii) dual constraints $\lambda_n\geq 0$ and $\eta_n\geq 0$ for all $n\in \mathcal F_2^c$, (iii) complementary slackness $\lambda_n T_n=0$ and $\eta_n(1-T_n)=0$ for all $n\in \mathcal F_2^c$, and (iv) $\frac{a_n\theta_{2,K_2^c}}{(\theta_{2,K_2^c}+\theta_{1,K_2^c}T_n)^2}+\lambda_n-\eta_n-\nu=0$ for all $n\in \mathcal F_2^c$.
By (ii), (iii), and (iv),  when $T_n=0$, we have $\lambda_n\geq 0$, $\eta_n=0$, and $\nu\geq \frac{a_n}{\theta_{2,K_2^c}}$; when $0<T_n<1$, we have  $\lambda_n= 0$, $\eta_n=0$, \small{$T_n=\frac{1}{\theta_{1,K_2^c}}\sqrt{\frac{a_n\theta_{2,K_2^c}}{\nu}}-\frac{\theta_{2,K_2^c}}{\theta_{1,K_2^c}}$}\normalsize, and $\frac{a_n\theta_{2,K_2^c}}{(\theta_{2,K_2^c}+\theta_{1,K_2^c})^2}<\nu< \frac{a_n}{\theta_{2,K_2^c}}$; when $T_n=1$, we have $\lambda_n= 0$, $\eta_n\geq 0$, and $\nu\leq\frac{a_n\theta_{2,K_2^c}}{(\theta_{2,K_2^c}+\theta_{1,K_2^c})^2}$. Therefore, we have \small{$T_n^*=\min\left\{\left[\frac{1}{\theta_{1,K_2^c}}\sqrt{\frac{a_n\theta_{2,K_2^c}}{\nu^*}}-\frac{\theta_{2,K_2^c}}{\theta_{1,K_2^c}}\right]^+,1\right\}$}\normalsize. Combining \eqref{eqn:cache-constr-sum-t}, we can prove Lemma~\ref{Lem:solu-opt-infty}.

\section*{Appendix G: Proof of Theorem~\ref{Thm:opt-prop}}

\subsection*{Proof of Property $(i)$ of Theorem~\ref{Thm:opt-prop}}
By constraints~(\ref{eqn:cache-constr}) and~(\ref{eqn:backhaul}), we have $K_2^c\leq F_2^{c*}$, $ F_2^{c*}=N-K_1^c-F_1^{b*}$ and $0\leq F_1^{b*}$. To prove property (i) of Theorem~\ref{Thm:opt-prop}, it remains to prove $F_1^{b*} \leq K_1^b$.
%thus  $\max\{K_2^c, N-K_1^c-K_1^b\}\leq F_2^{c*}\leq  N-K_1^c $. We prove property (b)(i) of Theorem~\ref{Thm:opt-prop}. we prove $F_1^{b*} \leq K_1^b$ for any optimal solution $(\mathcal F_1^{c*}, \mathcal F_2^{c*}, \mathbf T^{*})$ to Probelm~\ref{prob:opt-asymp-eq}. First,
Suppose there exists an optimal solution $(\mathcal F_1^{c*}, \mathcal F_2^{c*}, \mathbf T^*)$ to Probelm~\ref{prob:opt-asymp-eq} satisfying $F_1^{b*} > K_1^b$, Then we have:
\small{\begin{align}\label{eq:optimal-value-appendix-E2}
q_{\infty}^{*}=f_{1,K_1^c+K_1^b,\infty}\left(\sum_{n \in \mathcal F_1^{c*}}a_n+K_1^b\sum_{n\in \mathcal F_1^{b*} }\frac{a_n}{F_1^{b*}}\right)+\sum_{n\in \mathcal F_2^{c*}}a_nf_{2,K_2^c,\infty}(T_n^*).
\end{align}}\normalsize
Now, we construct a feasible solution $(\mathcal F_1^{c'}, \mathcal F_2^{c'}, \mathbf T^{'})$ to Problem~\ref{prob:opt-asymp-eq}, where $\mathcal F_1^{b'}$ consists of the most $K_1^b$ popular files of $\mathcal F_1^{b*}$, $\mathcal F_1^{c'}=\mathcal F_1^{c*}$, $\mathcal F_2^{c'}=\mathcal F_2^{c*} \cup (\mathcal F_1^{b*} \setminus \mathcal F_1^{b'})$, $T_n'=T_n^*$ for all $n \in \mathcal F_2^{c*}$ and $T_n'=0$ for all $ n \in \mathcal F_1^{b*} \setminus \mathcal F_1^{b'}$. By Lemma~\ref{Lem:asym-perf}, we have:
\small{\begin{align}\label{eq:contradiction-apendix-E2}
q_{\infty}(\mathcal F_1^{c'}, \mathcal F_2^{c'}, \mathbf T^{'})-q_{\infty}^{*}=f_{1,K_1^c+K_1^b,\infty}\left(\frac{1}{K_1^b}\sum_{n\in \mathcal F_1^{b'}}a_n-\frac{1}{F_1^b}\sum_{n\in \mathcal F_1^{b*}}a_n\right)K_1^b>0.
\end{align}}\normalsize
Thus, $(\mathcal F_1^{c*}, \mathcal F_2^{c*}, \mathbf T^{*})$ is not an optimal solution, which contradicts the assumption. Therefore, by contradiction, we can prove  $F_1^{b*}\leq K_1^b$ for any optimal solution $(\mathcal F_1^{c*}, \mathcal F_2^{c*}, \mathbf T^{*})$ to Probelm~\ref{prob:opt-asymp-eq}.
Since $F_2^{c*}\geq K_2^c$, $ F_2^{c*}=N-K_1^c-F_1^{b*}$ and $0\leq F_1^{b*}\leq K_1^b$, we have  $\max\{K_2^c, N-K_1^c-K_1^b\}\leq F_2^{c*}\leq  N-K_1^c $. Therefore, We can prove property (i) of Theorem~\ref{Thm:opt-prop}.

\subsection*{Proof of Property $(ii)$ of Theorem~\ref{Thm:opt-prop}}
First, we prove that there exists an optimal solution $(\mathcal F_1^{c*},\mathcal F_2^{c*},\mathbf T)$ to Problem~\ref{prob:opt-asymp-eq}, such that files in $\mathcal F_1^{c*} \cup \mathcal F_1^{b*}$ are consecutive. By Lemma~\ref{Lem:asym-perf} and $F_1^{b*}\leq K_1^b$ shown in the proof of Property$(i)$ , we have:
\small{\begin{align}\label{eqn:optimal-value-appendix-b2}
q_{\infty}^{*}=f_{1,K_1^c+F_1^{b*},\infty}\sum_{n\in\mathcal F_1^{c*}\cup \mathcal F_1^{b*}}a_n+\sum_{n\in\mathcal F_2^{c*}}a_nf_{2,K_2^c,\infty}(T_n^{*}).
\end{align}}\normalsize
Let $n_1 (n_2)$ denote the most (least) popular file in $\mathcal F_1^{c*} \cup \mathcal F_1^{b*}$. Suppose for any optimal solution $(\mathcal F_1^{c*}, \mathcal F_2^{c*}, \mathbf T^{*})$ to Probelm~\ref{prob:opt-asymp-eq}, files in $\mathcal F_1^{c*} \cup \mathcal F_1^{b*}$ are not consecutive, i.e., there exists $n_3\in\mathcal F_2^{c*}$ satisfying $n_1<n_3<n_2$. Now, we can construct a feasible solution $(\mathcal F_1^{c'},\mathcal F_2^{c'},\mathbf T')$ to Probelm~\ref{prob:opt-asymp-eq} where files in $\mathcal F_1^{c'} \cup \mathcal F_1^{b'}$ are consecutive as follows.

%of Problem~\ref{prob:opt-asymp}, in which files in $\mathcal F_1^{c'} \cup \mathcal F_1^{b'}$ are consecutive as follows:

%Assume that files in $\mathcal F_1^{c*} \cup \mathcal F_1^{b*}$ are not consecutive in any optimal solution of Problem~\ref{prob:opt-asymp}, i.e.,
%there exists $n_3\in\mathcal F_2^{c*}$ and $n_1<n_3<n_2$. We construct a feasible solution $(\mathcal F_1^{c'},\mathcal F_2^{c'},\mathbf T')$ as follows of Problem~\ref{prob:opt-asymp}, in which files in $\mathcal F_1^{c'} \cup \mathcal F_1^{b'}$ are consecutive as follows:

$\bullet$ If $f_{1,K_1^c+F_1^{b*},\infty}< f_{2,K_2^c,\infty}(T_{n_3}^{*})$, choose $\mathcal F_1^{c'}\cup \mathcal F_1^{b'}=\mathcal F_1^{c*}\cup \mathcal F_1^{b*}\cup\{n_3\} \setminus \{n_1\}$, $\mathcal F_2^{c'}=\mathcal F_2^{c*}\cup \{n_1\}\setminus \{n_3\}$, $T_{n_1}'=T_{n_3}^{*}$ and $T_n'=T_n^*$ for all $n \in \mathcal F_2^{c*}\setminus \{n_3\}$. By Lemma~\ref{Lem:asym-perf}, we have:
\small{\begin{align}\label{eqn:controdiction-thm2-b2-leq}
q_{\infty}\left(\mathcal F_1^{c'},\mathcal F_2^{c'},\mathbf T^{'}\right)-q_{\infty}^{*}=\left(a_{n_1}-a_{n_3}\right)\left(f_{2,K_2^c,\infty}(T_{n_3}^{*})-f_{1,K_1^c+F_1^{b*},\infty}\right)> 0,
\end{align}}\normalsize
where the inequality is due to $n_1<n_3$ and $f_{1,K_1^c+F_1^{b*},\infty}< f_{2,K_2^c,\infty}(T_{n_3}^{*})$.
%if $q_{\infty}\left(\mathcal F_1^{c'},\mathcal F_2^{c'},\mathbf T^{'}\right)-q_{\infty}^{*}>0$, $(\mathcal F_1^{c*}, \mathcal F_2^{c*}, \mathbf T^{*})$ is not an optimal solution which contradicts the assumption.

$\bullet$ If $f_{1,K_1^c+F_1^{b*},\infty}>f_{2,K_2^c,\infty}(T_{n_3}^{*})$, choose $\mathcal F_1^{c'}\cup \mathcal F_1^{b'}=\mathcal F_1^{c*}\cup \mathcal F_1^{b*}\cup\{n_3\} \setminus \{n_2\}$, $\mathcal F_2^{c'}=\mathcal F_2^{c*}\cup \{n_2\}\setminus \{n_3\}$, $T_{n_2}'=T_{n_3}^{*}$ and $T_n'=T_n^*$ for all $n \in \mathcal F_2^{c*}\setminus \{n_3\}$. By Lemma~\ref{Lem:asym-perf}, we have:
\small{\begin{align}\label{eqn:controdiction-thm2-b2-geq}
q_{\infty}\left(\mathcal F_1^{c'},\mathcal F_2^{c'},\mathbf T^{'}\right)-q_{\infty}^{*}=\left(a_{n_2}-a_{n_3}\right)\left(f_{2,K_2^c,\infty}(T_{n_3}^{*})-f_{1,K_1^c+F_1^{b*},\infty}\right)> 0,
\end{align}}\normalsize
where the inequality is due to $n_3<n_2$ and $f_{2,K_2^c,\infty}(T_{n_3}^{*})< f_{1,K_1^c+F_1^{b*},\infty}$.

$\bullet$ If $f_{1,K_1^c+F_1^{b*},\infty}= f_{2,K_2^c,\infty}(T_{n_3}^{*})$, choose $\mathcal F_1^{c'}\cup \mathcal F_1^{b'}=\mathcal F_1^{c*}\cup \mathcal F_1^{b*}\cup\{n_3\} \setminus \{n_1\}$, $\mathcal F_2^{c'}=\mathcal F_2^{c*}\cup \{n_1\}\setminus \{n_3\}$, $T_{n_1}'=T_{n_3}^{*}$ and $T_n'=T_n^*$ for all $n \in \mathcal F_2^{c*}\setminus \{n_3\}$. By Lemma~\ref{Lem:asym-perf}, we have:
\small{\begin{align}\label{eqn:controdiction-thm2-b2-eq}
q_{\infty}\left(\mathcal F_1^{c'},\mathcal F_2^{c'},\mathbf T^{'}\right)-q_{\infty}^{*}=\left(a_{n_1}-a_{n_3}\right)\left(f_{2,K_2^c,\infty}(T_{n_3}^{*})-f_{1,K_1^c+F_1^{b*},\infty}\right)= 0,
\end{align}}\normalsize

%In~(\ref{eqn:controdiction-thm2-b2-leq}) or~(\ref{eqn:controdiction-thm2-b2-geq}), if $q_{\infty}\left(\mathcal F_1^{c'},\mathcal F_2^{c'},\mathbf T^{'}\right)-q_{\infty}^{*}>0$, $(\mathcal F_1^{c*}, \mathcal F_2^{c*}, \mathbf T^{*})$ is not an optimal solution, which contradicts the assumption. In~(\ref{eqn:controdiction-thm2-b2-leq}) or~(\ref{eqn:controdiction-thm2-b2-geq}), if $q_{\infty}\left(\mathcal F_1^{c'},\mathcal F_2^{c'},\mathbf T^{'}\right)-q_{\infty}^{*}=0$, we can always construct an optimal solution $\left(\mathcal F_1^{c'},\mathcal F_2^{c'},\mathbf T^{'}\right)$, where files in $\mathcal F_1^{c'} \cup \mathcal F_1^{b'}$ are consecutive.
By (\ref{eqn:controdiction-thm2-b2-leq}), (\ref{eqn:controdiction-thm2-b2-geq}) and (\ref{eqn:controdiction-thm2-b2-eq}), we know that if $f_{1,K_1^c+F_1^{b*},\infty}< f_{2,K_2^c,\infty}(T_{n_3}^{*})$ or $f_{1,K_1^c+F_1^{b*},\infty}>f_{2,K_2^c,\infty}(T_{n_3}^{*})$, $(\mathcal F_1^{c*}, \mathcal F_2^{c*}, \mathbf T^{*})$ is not an optimal solution, which contradicts the assumption; and if $f_{1,K_1^c+F_1^{b*},\infty}= f_{2,K_2^c,\infty}(T_{n_3}^{*})$, we can always construct an optimal solution $\left(\mathcal F_1^{c'},\mathcal F_2^{c'},\mathbf T^{'}\right)$, satisfying that files in $\mathcal F_1^{c'} \cup \mathcal F_1^{b'}$ are consecutive.
Thus,  we can prove that there exists an optimal solution $(\mathcal F_1^{c*},\mathcal F_2^{c*},\mathbf T)$ to Problem~\ref{prob:opt-asymp-eq}, such that files in $\mathcal F_1^{c*} \cup \mathcal F_1^{b*}$ are consecutive.

In addition, by~(\ref{eqn:optimal-value-appendix-b2}), we know that whether file $n\in \mathcal F_1^{c*}\cup \mathcal F_1^{b*}$ belongs to $\mathcal F_1^{c*}$ or $\mathcal F_1^{b*}$ makes no difference in the optimal successful transmission probability. Therefore, we can prove the property (ii) of Theorem~\ref{Thm:opt-prop}.

\section*{Appendix H: Proof of Lemma~\ref{Lem:opt-prop}}

\subsection*{Proof of Property $(i)$ of Lemma~\ref{Lem:opt-prop}}

We prove that if $f_{1,K_1^c+K_1^b,\infty}>f_{2,K_2^c,\infty}(1)$, the most popular file $n=1$ belongs to $\mathcal F_1^{c*} \cup \mathcal F_1^{b*}$ for any optimal solution $\left(\mathcal F_1^{c*}, \mathcal F_2^{c*}, \mathbf T^*\right)$ to Problem~\ref{prob:opt-asymp-eq}. Suppose that there exists an optimal solution $\left(\mathcal F_1^{c*}, \mathcal F_2^{c*}, \mathbf T^*\right)$ to Problem~\ref{prob:opt-asymp-eq}, such that the most popular file $n=1$ belongs to $\mathcal F_2^{c*}$. Let $n_2$ denote a file in $\mathcal F_1^{c*} \cup\mathcal F_1^{b*}$. Now, we can construct a feasible solution $(\mathcal F_1^{c'},\mathcal F_2^{c'},\mathbf T')$ to Probelm~\ref{prob:opt-asymp-eq}, where $\mathcal F_1^{c'}\cup \mathcal F_1^{b'}=\mathcal F_1^{c*}\cup \mathcal F_1^{b*}\cup \{1\}\setminus \{n_2\}$, $\mathcal F_2^{c'}=\mathcal F_2^{c*}\cup \{n_2\}\setminus \{1\}$, $T_{n_2}'=T_{1}^*$ and $T_n'=T_n^*$ for all $n\in \mathcal F_2^{c*}\setminus \{1\}$. By Lemma~\ref{Lem:asym-perf}, we have:
\small{\begin{align}\label{eq:contradiction_appendix_E_2}
q_{\infty}\left(\mathcal F_1^{c'},\mathcal F_2^{c'},\mathbf  T^{'}\right)-q_{\infty}^{*}=\left(a_{1}-a_{n_2}\right)\left(f_{1,K_1^c+\min\{K_1^b,F_1^{b*}\},\infty}-f_{2,K_2^c,\infty}(T_{1}^*)\right).
\end{align}}\normalsize
Since $a_{1}>a_{n_2}$ and $f_{1,K_1^c+\min\{K_1^b,F_1^{b*}\},\infty}\geq f_{1,K_1^c+K_1^b,\infty}>f_{2,K_2^c,\infty}(1)\geq f_{2,K_2^c,\infty}(T_{1}^{*})$, we have $q_{\infty}\left(\mathcal F_1^{c'},\mathcal F_2^{c'},\mathbf  T^{'}\right)-q_{\infty}^{*}>0$. Thus, $(\mathcal F_1^{c*}, \mathcal F_2^{c*}, \mathbf T^*)$ is not an optimal solution to Problem~\ref{prob:opt-asymp-eq}, which contradicts the assumption. By contradiction,  we prove that if $f_{1,K_1^c+K_1^b,\infty}>f_{2,K_2^c,\infty}(1)$, the most popular file $n=1$ belongs to $\mathcal F_1^{c*} \cup \mathcal F_1^{b*}$ for any optimal solution $\left(\mathcal F_1^{c*}, \mathcal F_2^{c*}, \mathbf T^*\right)$ to Problem~\ref{prob:opt-asymp-eq}, and hence $n_1^c$ in Theorem~\ref{Thm:opt-prop} (ii) satisfies $n_1^c=1$.

\subsection*{Proof of Property $(ii)$ of Lemma~\ref{Lem:opt-prop}}

We prove that if $f_{1,K_1^c,\infty}<f_{2,K_2^c,\infty}(\frac{K_2^c}{N-K_1^c})$, the most popular file $n=1$ belongs to $\mathcal F_2^{c*}$ for any optimal solution $\left(\mathcal F_1^{c*}, \mathcal F_2^{c*}, \mathbf T^*\right)$ to Problem~\ref{prob:opt-asymp-eq}. Suppose that there exists an optimal solution $\left(\mathcal F_1^{c*}, \mathcal F_2^{c*}, \mathbf T^*\right)$ to Problem~\ref{prob:opt-asymp-eq}, such that file $n=1$ belongs to $\mathcal F_1^{c*}\cup \mathcal F_1^{b*}$. Let $n_2$ denote the most popular file in $\mathcal F_2^{c*}$. Based on Lemma~\ref{Lem:mono-general-asym}, we have $T_{n_2}^*\geq T_n^*$ for any $n\in \mathcal F_2^{c*}\setminus \{n_2\}$, and hence $T_{n_2}^*\geq \frac{K_2^c}{N-K_1^c}$. Now, we can construct a feasible solution $(\mathcal F_1^{c'},\mathcal F_2^{c'},\mathbf T')$ to Probelm~\ref{prob:opt-asymp-eq}, where $\mathcal F_1^{c'}\cup \mathcal F_1^{b'}=\mathcal F_1^{c*}\cup \mathcal F_1^{b*}\cup \{n_2\}\setminus \{1\}$, $\mathcal F_2^{c'}=\mathcal F_2^{c*}\cup \{1\}\setminus \{n_2\}$, $T_1'=T_{n_2}^*$ and $T_n'=T_n^*$ for all $n\in \mathcal F_2^{c*}\setminus \{n_2\}$. By Lemma~\ref{Lem:asym-perf}, we have:
\small{\begin{align}\label{eq:contradiction_appendix_E_3}
q_{\infty}\left(\mathcal F_1^{c'},\mathcal F_2^{c'},\mathbf  T^{'}\right)-q_{\infty}^{*}=\left(a_{1}-a_{n_2}\right)\left(f_{2,K_2^c,\infty}(T_{n_2}^*)-f_{1,K_1^c+\min\{K_1^b+\mathcal F_1^{b*}\},\infty}\right).
\end{align}}\normalsize
Since $a_1> a_{n_2}$ and  $f_{1,K_1^c+\min\{K_1^b+\mathcal F_1^{b*}\},\infty}\leq f_{1,K_1^c,\infty}<f_{2,K_2^c,\infty}(\frac{K_2^c}{N-K_1^c})\leq f_{2,K_2^c,\infty}(T_{n_2}^*)$, we have  $q_{\infty}\left(\mathcal F_1^{c'},\mathcal F_2^{c'},\mathbf  T^{'}\right)-q_{\infty}^{*}>0$. Thus, $\left(\mathcal F_1^{c*}, \mathcal F_2^{c*}, \mathbf T^*\right)$ is not an optimal solution to Problem~\ref{prob:opt-asymp-eq}, which contradicts the assumption. Therefore, we prove that if $f_{1,K_1^c,\infty}<f_{2,K_2^c,\infty}(\frac{K_2^c}{N-K_1^c})$, the most popular file $n=1$ belongs to $\mathcal F_2^{c*}$ for any optimal solution $\left(\mathcal F_1^{c*}, \mathcal F_2^{c*}, \mathbf T^*\right)$ to Problem~\ref{prob:opt-asymp-eq}, and hence $n_1^c$ in Theorem~\ref{Thm:opt-prop} (ii) satisfies $n_1^c\geq 2$.

%\bibliographystyle{IEEEtran}
%\bibliography{IEEEabrv,GLOBECOM15}

% Generated by IEEEtran.bst, version: 1.13 (2008/09/30)

\end{document}